\documentclass[a4paper,11pt]{article}

\usepackage{amssymb}
\usepackage{amsfonts}
\usepackage{amsmath}
\usepackage{amsthm}
\usepackage{geometry}
\geometry{left=2.5cm,right=2.5cm,top=2.5cm,bottom=2.5cm}
\usepackage[onehalfspacing]{setspace}
\usepackage{footnotebackref}
\usepackage[UKenglish]{babel}
\usepackage{endnotes}
\usepackage{indentfirst}
\usepackage[utf8]{inputenc}
\usepackage{csquotes}
\usepackage[backend=biber,style=authoryear,giveninits=true,natbib,maxcitenames=3,maxbibnames=50,uniquelist=false,uniquename=init,isbn=false,doi=false,useprefix=true]{biblatex}
\DeclareNameAlias{sortname}{family-given}
\DeclareFieldFormat[article]{title}{#1}
\DeclareFieldFormat[article]{journaltitle}{#1}
\DeclareFieldFormat[book]{title}{#1}
\DeclareFieldFormat[article]{pages}{#1}
\renewbibmacro{in:}{%
 \ifentrytype{article}{}{\printtext{\bibstring{in}\intitlepunct}}}
\usepackage{xpatch}
\xpatchbibmacro{journal+issuetitle}{%
  \setunit*{\addspace}%
  \iffieldundef{series}}
  {%
  \setunit*{\addcomma\space}%
  \iffieldundef{series}}{}{}
  
\addbibresource{bibliography.bib}

\renewbibmacro*{volume+number+eid}{%
  \printfield{volume}
  \printfield[parens]{number}%
  \setunit{\addcomma\space}%
  \printfield{eid}}
  
\setlength\bibitemsep{2\itemsep}
\DeclareSortingNamekeyTemplate{
  \keypart{
    \namepart{family}
  }
  \keypart{
    \namepart{prefix}
  }
  \keypart{
    \namepart{given}
  }
  \keypart{
    \namepart{suffix}
  }
}
\renewbibmacro{begentry}{\midsentence}
\usepackage{booktabs}
\usepackage[dvipsnames]{xcolor}
\usepackage{hyperref}
\hypersetup{
colorlinks=true,
linkcolor=Blue,
urlcolor=Blue,
citecolor=Blue,
bookmarksopen=false,
frenchlinks=false,
pdftitle={Equitable Auctions},
}
\usepackage{url}
\usepackage{bm}
\usepackage{adjustbox}
\usepackage{subcaption}
\usepackage{caption}
\usepackage{accents}
\usepackage{bbm}
\usepackage{bbold}
\usepackage{mathtools}
\usepackage{thmtools}
\usepackage[capitalise]{cleveref}
\usepackage[toc,page,header]{appendix}
\usepackage{eurosym}

\theoremstyle{definition}
\newtheorem{theorem}{Theorem}
\newtheorem{corollary}{Corollary}
\newtheorem{definition}{Definition}
\newtheorem{example}{Example}
\newtheorem*{example*}{Example}\newtheorem{assumption}{Assumption}

\theoremstyle{plain}
\newtheorem{lemma}{Lemma}

\newtheorem{proposition}{Proposition}
\newtheorem*{proposition*}{Proposition}

\newtheorem*{remark*}{Remark}

\newcommand\blfootnote[1]{
  \begingroup
  \renewcommand\thefootnote{}\footnote{#1}
  \addtocounter{footnote}{-1}
  \endgroup
}

\newcommand{\vbar}{\widebar{s}}
\newcommand{\bidders}{[n]}
\newcommand{\supply}{[k]}
\newcommand{\bid}[1][\delta]{\beta^{#1}}
\newcommand{\bbeta}{{\bm{\beta}}}
\newcommand{\ssb}{{\bm{s}}}

\newcommand{\bbi}{b_i}

\newcommand{\bbn}{b}
\newcommand{\bb}{\mathbf{b}}
\newcommand{\shati}{{\hat{s}_i}}

\newcommand{\support}{[0,\vbar)}
\newcommand{\osupport}{(0,\vbar)}
\newcommand*\diff{\mathop{}\!\mathrm{d}}

\newcommand{\gnk}{g_{k}^{n-1}}
\newcommand{\Gnk}{G_{k}^{n-1}}
\newcommand{\smi}{\ssb_{-i}}
\newcommand{\smj}{\ssb_{-j}}

\newcommand{\E}{\mathbb{E}}
\newcommand{\var}{\text{\normalfont Var}}
\newcommand{\wev}{\text{\normalfont WEV}}
\newcommand{\ev}{\text{\normalfont EV}}
\newcommand{\wvar}{\text{\normalfont WV}}
\newcommand{\PD}{pairwise differences}

\newcommand{\pd}[2]{\frac{\partial#1}{\partial#2}}
\newcommand{\pdbdeltas}{\pd{\beta^{\delta}}{s}}
\newcommand{\ind}[1]{{\mathbb 1}\left\{#1\right\}}
\newcommand{\kth}[1]{Y_{#1}}

\newenvironment{proof2}[1][Proof]{\noindent\textbf{#1.} }{\ \hfill$\square$ }

\newenvironment{proof3}[1][Proof]{\noindent\textit{#1:} }{\vspace{\topsep}\par}

\makeatletter
\let\save@mathaccent\mathaccent
\newcommand*\if@single[3]{%
  \setbox0\hbox{${\mathaccent"0362{#1}}^H$}%
  \setbox2\hbox{${\mathaccent"0362{\kern0pt#1}}^H$}%
  \ifdim\ht0=\ht2 #3\else #2\fi
  }
\newcommand*\rel@kern[1]{\kern#1\dimexpr\macc@kerna}
\newcommand*\widebar[1]{\@ifnextchar^{{\wide@bar{#1}{0}}}{\wide@bar{#1}{1}}}
\newcommand*\wide@bar[2]{\if@single{#1}{\wide@bar@{#1}{#2}{1}}{\wide@bar@{#1}{#2}{2}}}
\newcommand*\wide@bar@[3]{%
  \begingroup
  \def\mathaccent##1##2{%
    \let\mathaccent\save@mathaccent
    \if#32 \let\macc@nucleus\first@char \fi
    \setbox\z@\hbox{$\macc@style{\macc@nucleus}_{}$}%
    \setbox\tw@\hbox{$\macc@style{\macc@nucleus}{}_{}$}%
    \dimen@\wd\tw@
    \advance\dimen@-\wd\z@
    \divide\dimen@ 3
    \@tempdima\wd\tw@
    \advance\@tempdima-\scriptspace
    \divide\@tempdima 10
    \advance\dimen@-\@tempdima
    \ifdim\dimen@>\z@ \dimen@0pt\fi
    \rel@kern{0.6}\kern-\dimen@
    \if#31
      \overline{\rel@kern{-0.6}\kern\dimen@\macc@nucleus\rel@kern{0.4}\kern\dimen@}%
      \advance\dimen@0.4\dimexpr\macc@kerna
      \let\final@kern#2%
      \ifdim\dimen@<\z@ \let\final@kern1\fi
      \if\final@kern1 \kern-\dimen@\fi
    \else
      \overline{\rel@kern{-0.6}\kern\dimen@#1}%
    \fi
  }%
  \macc@depth\@ne
  \let\math@bgroup\@empty \let\math@egroup\macc@set@skewchar
  \mathsurround\z@ \frozen@everymath{\mathgroup\macc@group\relax}%
  \macc@set@skewchar\relax
  \let\mathaccentV\macc@nested@a
  \if#31
    \macc@nested@a\relax111{#1}%
  \else
    \def\gobble@till@marker##1\endmarker{}%
    \futurelet\first@char\gobble@till@marker#1\endmarker
    \ifcat\noexpand\first@char A\else
      \def\first@char{}%
    \fi
    \macc@nested@a\relax111{\first@char}%
  \fi
  \endgroup
}
\makeatother

\begin{document}

\title{Equitable Auctions}

\author{Simon Finster%
        \footnote{CREST-ENSAE and Inria/FairPlay, simon.finster@ensae.fr} \and
        Patrick Loiseau\footnote{Inria/FairPlay, patrick.loiseau@inria.fr}
        \and
        Simon Mauras%
        \footnote{Inria/FairPlay and Tel-Aviv University, simon.mauras@inria.fr} \and 
        Mathieu Molina%
        \footnote{Inria/FairPlay, mathieu.molina@inria.fr} \and 
        Bary Pradelski%
        \footnote{Maison Française d'Oxford,CNRS and Department of Economics, University of Oxford, bary.pradelski@cnrs.fr}}
\date{November 12, 2024\\
\vspace{0.8em}
First version: March 12, 2024\\
please \href{https://simonfinster.github.io/papers/equitable_auctions.pdf}{click here} for the latest version\\
}

\renewcommand*{\thefootnote}{\arabic{footnote}}
\renewcommand\thmcontinues[1]{Continued}

\maketitle

\begin{abstract}
We initiate the study of how auction design affects the division of surplus among buyers. We propose a parsimonious measure for equity and apply it to the family of standard auctions for homogeneous goods. Our surplus-equitable mechanism is efficient, Bayesian-Nash incentive compatible, and achieves surplus parity among winners ex-post. The uniform-price auction is equity-optimal if and only if buyers have a pure common value. Against intuition, the pay-as-bid auction is not always preferred in terms of equity if buyers have pure private values. In auctions with price mixing between pay-as-bid and uniform prices, we provide prior-free bounds on the equity-preferred pricing rule under a common regularity condition on signals.\\

\noindent \textbf{Key words:} auctions, equity, mechanism design, pay-as-bid, uniform price, common value\\
\noindent \textbf{JEL codes:} D44, D47, D63, D82
\end{abstract}
\blfootnote{\emph{Acknowledgments}: We are grateful for feedback and comments from Pierre Boyer, Philippe Choné, Julien Combe, Péter Esö, Pär Holmberg, Atulya Jain, Simon Jantschgi, Bernhard Kasberger, Paul Klemperer, Laurent Linnemer, Bing Liu, Simon Loertscher, Matías Nú\~nez, Sander Onderstal, Michael Ostrovsky, Ludvig Sinander, Alex Teytelboym, Kyle Woodward, and audiences at the Simons Laufer Mathematical Sciences Institute (Berkeley), CIRM (Marseille), CREST (Paris), NASMES 2024 (Nashville), CMID 2024 (Budapest), EARIE 2024 (Amsterdam), and Match-up 2024 (Oxford). \\
\indent This material is based upon work supported by the National Science Foundation under Grant No.~DMS-1928930 and by the Alfred P. Sloan Foundation under grant G-2021-16778, while Simon Finster and Bary Pradelski were in residence at the Simons Laufer Mathematical Sciences Institute (formerly MSRI) in Berkeley, California, during the Fall 2023 semester. Simon Mauras received funding from the European Research Council (ERC) under the European Union's Horizon 2020 research and innovation program (grant agreement No. 866132), as a postdoctoral fellow at Tel Aviv University.}

\newpage

\section{Introduction}\label{sec:Introduction}

Equity concerns in auctions have increasingly entered policy debates and are of critical importance to participation and stability in downstream markets. In practice, auctions are used, for example, to sell government debt, electricity, emission permits, oil, timber, coffee, art, and production input factors. Although auction design is a cornerstone of economics research, \emph{division of surplus between buyers} has not been studied.
Generally, auctions are held to elicit agents' values or costs for buying or selling goods. However, the auction mechanism, consisting of an allocation and pricing rule, can result in an asymmetric distribution of welfare, even among winning bidders: for example, in single-price auctions, high-value bidders obtain a larger surplus (i.e.,~value minus price) than low-value bidders.
This points to possibly unintended implications for the welfare distribution in the auction. Nonetheless, we show that auctions can be made equitable by design.\footnote{Redistribution after the auction, withstanding legal feasibility, may distort bidding incentives and efficiency.}

This article initiates the study of surplus distribution between buyers in multi-unit auctions.
We focus on the class of standard auctions with independent signals, which are revenue equivalent and, under mild assumptions,\footnote{Marginal revenues must be increasing in signals and weakly positive. See the discussion in \cref{sec:reserve-prices}.} even revenue maximizing. In the class of efficient and revenue-equivalent mechanisms, designing the equity objective is costless. We propose a family of equity metrics that are based on parsimonious, pairwise comparisons of realized surpluses (utilities). 
First, we characterize the direct surplus-equitable mechanism.
This incentive-compatible mechanism uses windfall subsidies (when little redistribution is needed) to achieve ex-post identical surpluses among winning bidders of any type realization. Second, we turn to uniform and pay-as-bid auctions and combinations of these, that is, mixed-price auctions. In this class, we derive prior-free results on equity-preferred pricing, with strong policy implications for multi-unit auctions used in practice.

The distribution of surplus is important for policy makers, and incomplete market design may lead to unforeseen consequences. In 2022, after the Russian invasion of Ukraine, electricity prices in Europe reached unprecedented highs. Under the uniform pricing rule currently used in most electricity markets, infra-marginal generators (i.e.,~those not setting the market clearing price) are remunerated at the marginal price. This price was elevated by the bids of natural gas power plants (which had to recover their costs), leaving producers of lignite and renewable energy with extraordinary windfall profits. Most European countries imposed high taxes on these profits, mainly to mitigate the burden of increasing energy costs on the consumer side.\footnote{By taxing 90\% of the revenue exceeding a cap of \euro180/MWh on electricity prices, an estimated \euro 106 billion were levied from power companies in 2022 \citep{EU-2022,Nicolay-2023}.
} However, the discrepancy in profits between generators of different power sources also raises the question of how auction design impacts the distribution of surplus among participating bidders. The taxation of infra-marginal rents in electricity markets was debated, for example, in the context of recovering infrastructure investment costs \citep{New-Zealand-TPM-2014, Ruddell-2017}.\footnote{Via comparison with hypothetical pre-investment market outcomes, the tax was to be applied to power generators who benefited from the new infrastructure.}\textsuperscript{,}\footnote{Electricity markets generally are subject to complex constraints and resulting incentives, and designing prices is inherently a multi-objective problem \citep{Ahunbay-2024}.} The proposed tax on companies' observable surplus is equivalent to a mixed auction design, in which prices are set by a combination of uniform and pay-as-bid pricing.\footnote{This design was also suggested for electricity markets by \citet{Holmberg-Tangeras-2023}.} This class of mechanisms is one of the focal points of this article.

In practice, uniform pricing and pay-as-bid (discriminatory) pricing are the prevalent multi-unit auction formats.\footnote{In treasury auctions both designs are common \citet{OECD-2021}, but most electricity markets feature the uniform pricing rule, with the exceptions of England\&Wales, Mexico, Peru, and Panama. Further examples are auctions for emission certificates, e.g.,~the EU emission trading system or the California Cap and Trade market, or online advertisement.} In the uniform-price auction, all winners pay the first rejected bid,\footnote{or last accepted bid} and in the pay-as-bid auction, all winners pay the price they bid. The question of which pricing rule leads to more efficient power market outcomes, less collusion, and more revenue has been debated for decades (e.g.,~\citealt{Kahn-2001,Ausubel-2014}), but fairness and redistribution concerns have received little attention. Next to electricity markets, these concerns are central in many other auction environments. For example, the Small Business Act in the US requires the government to award 23\% of procurement contracts each year to small businesses that are socially and economically disadvantaged or minority-owned.\footnote{See \citet{SBA-2024}. \citet{Pai-Vohra-2012} give further examples across the world.} In spectrum auctions, allocative fairness in the distribution of licenses is particularly important, as it affects competition in the downstream market (\citealt{GSMA-2021,Kasberger-2023}). 
Similarly, the distribution of surplus can influence market stability and competition in post-auction markets: bidders disadvantaged in the surplus distribution may face higher borrowing costs, especially in inefficient capital markets.
This is aggravated in auctions that occur only once or a few times, as a low surplus can be detrimental to a company's survival in the post-auction market.\footnote{Even in repeated auctions, e.g.,~for electricity, leaving high-value bidders systematically with higher rents implies long-term inequity.} In addition to competition concerns, heterogeneity in marginal products of capital has been linked to lower total factor productivity \citep{Hsieh-2009}.

Although the focus of our article is on equity in the context of auctions, our insights apply more generally. First, the equity concerns we address in this work are fundamentally about equality or inequality in surpluses, which could be seen as more neutral terminology. Second, our insights are not limited to auctions and apply to markets in general. In B2B interactions, for example, producers want to sell to a variety of customers and avoid being dependent on a single client. Accepting different prices from different clients, allowing them to derive more equal surpluses, is a way of helping those clients stay competitive in the B2C market. On the other hand, a uniform sales price might disadvantage smaller customers. Relying on multiple vendors is particularly important in government and defense procurement. For example, NASA is contracting with Blue Origin and SpaceX for lunar landing systems \citep{NASA-2023}, and the Pentagon eventually split its \$9 billion cloud computing contracts between Amazon, Google, Microsoft and Oracle \citep{NYT-2022}.

Our setup is as follows. We consider standard and winners-pay multi-unit auctions (so-called $k$-unit auctions) for the sale of indivisible, identical goods with a composition of private and common values. Each buyer has unit demand
and receives a private and independent signal drawn from a publicly known distribution.\footnote{We note the assumption of unit demand is fairly strong and we provide a discussion in \cref{sec:beyond-unit-demand}.}
A buyer's value linearly interpolates between the extremes of pure private and pure common value, that is, between their private signal and the average signal in the market.\footnote{This model can represent common resale opportunities. For example, an emission certificate is valuable for a company's production process (private value), but it can also be resold after the auction, with the resale value being common to the market. The resale model appears in previous work, for example, by \citet{Bikhchandani-1991,Klemperer-1998,Bulow-2002,Goeree-2003}.}
We study direct incentive-compatible mechanisms and the class of mixed $k$-unit auctions. Mixed auctions combine uniform and pay-as-bid auctions and incorporate those as special cases. For a given $\delta$, we call the convex combination of uniform and pay-as-bid pricing \emph{$\delta$-mixed pricing}, and the corresponding auction \emph{$\delta$-mixed auction}. The parameter $\delta$ describes the degree of price discrimination: $\delta = 0$ corresponds to uniform pricing and $\delta = 1$ to pay-as-bid pricing.
In mixed auctions, we study the symmetric Bayesian equilibrium, which is found to be unique.\footnote{Empirical evidence suggests that markets with symmetric bidders are relevant in practice. See, e.g.,~\citet{Armantier-2009b} for auctions by the Bank of Canada, \citet{Hortacsu-2018} for U.S. Treasury short-term securities auctions, or \citet{Hattori-2022} for Japanese treasury auctions.}
All considered auctions, under classical assumptions, achieve the same expected revenue \citep{Milgrom-2002} and allocate items to the highest-value buyers; thus, there are no trade-offs with revenue or efficiency and optimizing for surplus equity is costless.

\subsection{Contributions}

We consider our first conceptual contribution to be the introduction of a new, parsimonious measure of 
surplus equity, \emph{dominance in \PD}: Auction A dominates auction B in \PD, if, in equilibrium, all absolute \PD~in ex-post utilities of winning bidders are weakly smaller in auction A than in auction B (\cref{def:pairwise-differences}), with one pairwise comparison being strict. Equivalently, we say that A is \emph{equity-preferred} to B. An auction is \emph{equity-preferred} (in the class of mixed auctions) if and only if it dominates all other mixed auctions. Dominance in \PD~implies a partial order and is a strong requirement.

Our results hold for the family of equity metrics that are constructed by aggregation of \PD~with any increasing function. This family includes, e.g.,~the empirical variance, the Gini index or a comparison between the top and bottom deciles.\footnote{For prominent inequality measures in wealth or income, cf., e.g., \citet{Lorenz-1905,Gini-1912,Gini-1921,Pigou-1912,Dalton-1920,Atkinson-1970,Sen-1973}.} As an example of an aggregator, we study the \emph{winners' empirical variance (WEV)} of surplus. In addition to anonymity, which is satisfied by the entire family of equity metrics, we show that \wev~also satisfies the Pigou-Dalton principle \citep[cf.][]{moulin2004fair}, that is, monotonicity with respect to transfers from richer to poorer agents (\cref{prop:WEV-satisfies-Pigou-Dalton}). 
The empirical variance combines \emph{within-bidder variation} and \emph{across-bidder correlation} of surpluses. The analysis of within-agent variation addresses a bidder's individual risk-attitude and goes back to \citet{Vickrey-1961}.\footnote{In the appendix of his famous article, Vickrey showed that, in a \emph{single-unit auction}, the ex ante variance of surplus is lower under the first-price than the second-price rule, given uniform distributions of private values. The result is generalized in \citet{Krishna-2009} who shows that the distribution of equilibrium prices in a second-price auction is a mean-preserving spread of that in a first-price auction, given any distribution of private values.}
In contrast, an equity measure must take into account the correlation of surpluses between bidders. 

In the following section, we summarize our most important contributions. We call the interpolation parameter $c$ the \emph{common value proportion}, or simply the \emph{common value}, and its complement $1-c$ the \emph{private value proportion}, or the \emph{private value}.

\begin{enumerate}
    \item We characterize the direct and Bayes-Nash incentive-compatible mechanism that distributes realized surpluses equitably among the winning bidders in the class of standard winners-pay mechanisms (\cref{theorem:surplus-equity}). The surplus-equitable mechanism allocates the items to the highest bidders and charges them a payment consisting of three components: firstly, each bidder pays their private value, thus equalizing ex-post utilities; secondly, a uniform payment that cancels out the idiosyncratic payment in expectation; and finally, the expected value corresponding to the first rejected bid, akin to a ``second-price'' payment. The key and surprising insight is that there exists a uniform payment that cancels out the idiosyncratic payment part, \emph{for any given signal}, in expectation.
    \item We prove that the uniform-price auction is equity-preferred (dominant in \PD) if and only if the bidders' values are pure common value (\cref{theorem:pure-common}). In this case, the surplus-equitable mechanism is equivalent to the first-rejected-bid uniform-price auction, which, as any other uniform-price auction, equalizes bidders' realized surpluses. 
    By contrast, for pure private values, the pay-as-bid auction is not generally equity-preferred in the class of mixed auctions (\cref{prop:counter-example}); however, it is preferred if signals are drawn from a log-concave distribution (\cref{cor:when-PAB-is-optimal}).
    \item Given any proportion of private values $(1-c)$, the $(1-c)$-mixed auction is equity-preferred over any mixed auction with less than a $(1-c)$ share of price discrimination, assuming that the signal distributions are log-concave (\cref{theorem:bound-on-optimal-delta}). That is, the pricing for goods with a higher proportion of private value should contain more price discrimination in order to achieve an equitable distribution of realized surpluses.
    \item Given any positive proportion of private values $(1-c)$, any level of price discrimination up to $2(1-c)$ is equity-preferred to uniform pricing if signals are drawn from a log-concave distribution (\cref{theorem:deltas-dominating-uniform}). In other words, in any scenario where the goods for sale have some private value, even a large extent of price discrimination is more equitable than uniform pricing. \cref{theorem:bound-on-optimal-delta,theorem:deltas-dominating-uniform} are \emph{prior-free} in the class of log-concave signal distributions.
\end{enumerate}

Finally, we investigate equity in terms of winners' empirical variance (\wev) in numerical experiments. For a variety of signal distributions and common value proportions, we compute the landscape of \wev-minimal mixed pricing, which can be seen to be unique.

\subsection{Related literature}

This article relates and contributes to several strands of existing work, including a recent literature on redistributive market design, the study of fairness concerns and allocative equity in auctions, and of fair allocations more generally. Further, we contribute to the mechanism design literature on ex-post payment design, the analysis of uniform, pay-as-bid, and, in particular, mixed-price auctions, and the literature on redistribution in public finance and optimal taxation.

Broadly, our contribution fits into a recent strand of the economic literature on redistributive concerns in market design. In this literature, the focus is often on efficiency and equity trade-offs; e.g., in a large buyer-seller market for a single object, with agents differing in their marginal utilities of money (and values), \citet{Dworczak-2021} characterize the optimal efficiency-equity trade-off, and \citet{Akbarpour-2024}, characterize when non-market mechanisms, as opposed to market-clearing prices, are optimal for a designer to allocate a fixed supply (also in a large market). Such non-market mechanisms forgo efficiency for the sake of improving equity. Our approach differs in that we focus on a class of efficient mechanisms in a small market with a finite number of buyers and demonstrate how to improve equity, up to achieving perfect surplus parity. Finally, \citet{Reuter-2020} determine the utilitarian optimal mechanism for the allocation of a finite number of objects to a finite number of heterogeneous agents. Crucially, they study when the objective is maximized in expectation, whereas we employ a stronger, ex-post notion of equity and redistribution.

Fairness concerns in auctions have been addressed through design instruments such as subsidies and set-asides. In a model with explicit target group favoritism, \citet{Pai-Vohra-2012} show that the optimal mechanism is a flat or a type-dependent subsidy, depending on the precise nature of the favoritism constraint. \citet*{Athey-2013} come to similar conclusions in an empirical study of US Forest Service timer auctions, where set-asides for small bidders would reduce efficiency and revenue, while subsidies would increase revenue and profits of small bidders with little detriment to efficiency. 
In contrast with this literature, we consider only the pricing rules as a design instrument to achieve a more equitable distribution of surplus. However, a superficially ``fair'' pricing rule can lead to inequitable outcomes: e.g.,~\citet{Deb-Pai-2017} demonstrate that anonymous symmetric sealed-bid pricing leaves ample space for bidder discrimination through the seller. Crucially, we demonstrate that even with ex-ante symmetric bidders, the auction design can discriminate against bidders with different signals ex-post.

Allocative equity in auctions and the welfare generated in the post-auction market have been studied in \citet{Kasberger-2023}, micro-founding the question when an equitable distribution of the auctioned objects themselves is beneficial for consumer welfare in downstream markets. Related results were developed in \citet{Janssen-2010}, who show that the correlation between signals induced by the revealed auction outcome can affect post-auction market competition. In the literature on auctions for online advertisement, fairness for users has become an important concern, in that similar user should see similar ads \citep{Celis-2019,Chawla-2022}. In contrast with these works, we focus on equity in surpluses rather than in allocations.

The literature on fair allocation introduced many concepts of fairness, including, for example, \emph{envy-freeness}, \emph{equal division}, or \emph{no domination} (for a survey, see \citealt{Thomson-2011}). Our notion of fairness is orthogonal to envy-freeness, which requires that no agent prefers another agent's allocation (object and price). In our market, in which a finite number of identical objects are sold, uniform pricing is the unique pricing scheme that results in envy-freeness among winners, and with pure private values, it results in envy-freeness among all participants.\footnote{With a proportion of common value, depending on the realization of signals, winners may experience the winners' curse and prefer not to have won an item.} However, envy-freeness does not take into account that bidders may have different signal (value) realizations. Subscribing to the notion of envy-freeness, we would accept that realized utilities may be very unequal, depending on the realization of the private value component. In contrast to this view, we consider the realization of utilities as the baseline for fairness considerations, relating to \emph{equal treatment of equals} \citep[cf., e.g.,][]{Thomson-2011} in an ex-post view.

In practice, fairness is an important issue in spectrum auctions \citep{GSMA-2021,Myers2023}. Formats such as the Simultaneous Multiple-Round Auction, the Combinatorial Multiple-Round Auction, or the Combinatorial Clock Auction are used. However, some of these formats have led bidders to pay different prices for identical licenses, which has been perceived as unfair \citep{Myers2023}.\footnote{As a consequence, in versions of the Combinatorial Clock Auction, core prices are selected in a second stage based on selection criteria that guarantee fairness and stability \citep[cf. e.g.,][]{Day-2008,Erdil-2010}.}\textsuperscript{,}\footnote{In the Combinatorial Multi-Round Ascending Auction, there may also exist equilibria in which bidders pay different prices for an identical good, see, e.g.,~\citet{Kasberger-Teytelboym-2024}.}
Our work challenges this common notion of fairness, highlighting the importance of distinguishing between private and common values in the bidders' value and information structure. With pure private values, our view of equity requires strong buyers (with high-value realizations) to pay disproportionately more than weak buyers (with low-value realizations).

We also contribute to a strand of the mechanism design literature that developed important, nuanced ex-post implementations of truthful mechanisms. In their seminal article, \citet{DAspremont-1979} show that ex-post budget balance can be achieved in a direct truthful mechanism. In a similar vein, \citet{Eso-1999} prove that for every incentive-compatible mechanism there exists a mechanism which provides deterministically the same revenue. We contribute to this literature by designing the payment rule that distributes bidder surplus equally between the winners in standard $k$-unit auctions.

Broadly, we also relate to the literature on public finance and optimal taxation which has long debated redistribution. Full redistribution resulting in equal surpluses, as our surplus-equitable mechanism does, is the solution to the classical utilitarian social welfare objective with concave homogeneous utility functions among individuals \citep{Piketty-Saez-2013,Edgeworth-1897}. An alternative and more general approach to social welfare functions, the so-called generalized social welfare weights, were suggested by \citet{Saez-2016}. Although an aggregation of our proposed pairwise differences with weighting factors akin to social welfare weights is possible, the economic interpretation remains different, as we aggregate differences in utilities. A crucial difference of our work and public finance is that our mechanism design approach takes inequality in the form of different signal realizations as given, whereas incomes are endogenous in most optimal taxation models. Remarkably, mixed auctions as we study in this article have a particular relationship with taxes. In uniform-price auctions, the seller observes a lower bound on the winning bidders' surpluses, the difference between observed winning bids and the clearing price, the so-called apparent surplus \citep{Ruddell-2017}. Applying a percentage tax on apparent surplus is equivalent to employing the mixed-price rule, charging each bidder a portion of the market clearing price \emph{and} a portion of their submitted bid.

\subsection{Outline}
The remainder of the article is organized as follows. 
In \cref{sec:setup}, we introduce the model and derive equilibrium bidding strategies in mixed auctions. We introduce our notion of surplus equity in \cref{sec:surplus-equity} and develop the surplus-equitable mechanism in \cref{sec:surplus-equitable-mechanism}. In \cref{sec:mixed-auctions}, we describe our prior-free results for uniform, pay-as-bid, and mixed auctions and prove them in \cref{sec:proving-main-theorems}. \cref{sec:discussion} provides a discussion and \cref{sec:conclusion} concludes.

\section{Setup}\label{sec:setup}

\subsection{Model}\label{sec:model}

A finite number of bidders $\bidders\vcentcolon=\{1,\dots,n\}$ compete for a fixed supply of items $\supply\vcentcolon=\{1,...,k\}$, where $2 \leq k < n$. Each bidder only demands one item. Bidder $i$ receives a private signal $s_i$, which is drawn independently from a positive and bounded or unbounded support; denote its  upper limit by $\vbar$. Signals are iid with an absolutely continuous probability distribution $F$ with density $f$. We call $(0,\vbar)$, i.e.,~all signals $s$ so that $0<F(s)<1$, the open support of $F$, and assume that $f>0$ over $\osupport$. We also assume that the signals have a finite second moment $\E[s^2]< \infty$.

For $\ssb \vcentcolon=\{s_i\}_{i\in\bidders}$, a collection of iid signals, we denote by $\kth{m}(\ssb)$ the $m$-th highest value of the collection $\ssb$ (with $n$ entries). For example, $\kth{1}(\ssb)$ is the maximum and $\kth{n}(\ssb)$ is the minimum of the collection $\ssb$. Note that $\kth{m}(\ssb)$ is a random variable, and we denote its probability distribution $G_{m}^{n}(y)$ with corresponding density $g_{m}^{n}(y)$. $G_{m}^{n}$ is given by
\begin{align}
    G_{m}^{n}(y) = \sum_{j=0}^{m-1} \binom{n}{j} F(y)^{n-j} (1-F(y))^{j}.
\end{align}
where each summand is the probability that $j$ signals are above $y$. An expression for $g_{m}^{n}(y)$ is given in \cref{proof:lem:intG-log-concave}.
The value of bidder $i$ for an item is given by the valuation function $v(s_i, \smi)$, where $\smi := (s_j)_{j\neq i}$, and $v(s_i, \smi)$ is symmetric in other bidders' signals $\smi$.
\begin{assumption}\label{ass:bidder-values}
    Values $v(s_i,\smi)$ are given by
    \begin{equation}
        v(s_i, \smi)
        = (1-c) s_i + \frac{c}{n} \sum_{j\in \bidders}s_j,
    \end{equation}
     where $c \in [0,1]$ is the \emph{proportion of the common value}.
\end{assumption}
Our model interpolates between a common value and private values, where the proportion of the common value $c$ encodes to what extent the value of any given bidder is influenced by the signals of the other bidders. In particular, $c=1$ defines a pure common value and $c=0$ pure private values.\footnote{In an alternative model, the common value might be distributed according to some prior distribution, and the bidders' private signals are drawn conditional on the realization of this common value. The alternative model has identical qualitative characteristics \citep{Goeree-2003}: (i) the items are valued equally by all bidders in the common value component, and (ii) the winner's curse is present, i.e.,~winning an item is ``bad news'', in that the winner's expectation of the item's value was likely too optimistic.} We note that the value function satisfies the $\emph{single-crossing}$ condition as for all $i,j \in \bidders$, $i \neq j$, and for all $\ssb$, ${\partial v(s_i,\smi)}/{\partial s_i} \geq {\partial v(s_j,\ssb_{-j})}/{\partial s_i}$.

\paragraph{Auction mechanisms.} Auction mechanisms are represented by allocations and transfers $\{\pi_i(s_i,\smi),$ $ p_i(s_i,\smi)\}_{i\in\bidders}$, where $\pi_i(s_i,\smi)$ is defined as the probability that an item is allocated to the bidder $i$ when the reported signals are $s_i$ and $\smi$, and $p_i(s_i,\smi)$ is the corresponding price charged to the bidder, which is symmetric in its second argument.\footnote{Symmetric means that $p_i(s_i,s_{-i}) = p(s_i,s_{-i}')$ for all permutations $s_{-i}'$ of $s_{-i}$.}
We require that auction mechanisms be standard and winners-pay. An auction is \emph{standard} if the $k$ highest bids win the items,\footnote{Cf. \citet{Krishna-2009}.} and \emph{winners-pay} if only the winners pay and no more than their bid. Any standard auction, in any symmetric and increasing equilibrium and values satisfying the single-crossing condition, is \emph{efficient} \citep{Krishna-2009}, i.e.,~the bidders with the $k$ highest values $v(s_i, \smi)$ are assigned the items.

We consider two classes of mechanisms. First, we consider truthful, direct mechanisms in which bidders submit their signal. Second, we consider $k$-unit \emph{mixed auctions} (defined below) in which each bidder submits a bid $\bbi$, resulting in the vector of submitted bids $\bb$. Restricting our attention to symmetric and monotonically increasing bidding strategies $\bbi = \bbn(s_i)$, we can write allocations and prices in both classes of mechanism as functions of signals only. The allocation of bidder $i$ is given by $\pi_i(s_i,\smi) = \ind{s_i>\kth{k}(\smi)}$ when signals $s_i$ and $\smi$ are reported.
A bidder's utility (or surplus) when reporting signal $\shati$, and the remaining $n-1$ bidders reporting signals $\smi$, is given by
\begin{equation}
    u_i(s_i,\shati,\smi) = \ind{\shati>\kth{k}(\smi)} \cdot v(s_i, \smi) - p_i(\shati,\smi).
\end{equation}
Given a signal $s_i$, recall that we denote by $\kth{k}(\smi)$ the $k$-th highest among the signals $\smi$. $\kth{k}(\smi)$ has probability distribution $\Gnk$ and density $\gnk$.

Furthermore, we denote equilibrium bidding strategies by $(\beta_i)_{i\in[n]}=\bbeta$.
\begin{definition}[Mixed auctions]\label{def:mixed-auctions}
    In the $k$-unit \emph{$\delta$-mixed auction}, parameterized by a given $\delta\in[0,1]$, each bidder $i$ pays $p_i(\bb) = \left(\delta \bbi + (1-\delta) Y_{k+1}(\bb)\right)\ind{b_i > Y_{k+1}(\bb)}$.
\end{definition}
At one boundary, for $\delta=0$, this resolves to \emph{first-rejected-bid uniform pricing} or short \emph{uniform pricing}, where each winning bidder $i$ pays the $(k+1)$-th highest bid $Y_{k+1}(\bb)$. 
At the other boundary, for $\delta=1$, this resolves to \emph{pay-as-bid pricing}, where each winning bidder $i$ pays their bid $\bbi$. Finally, if $\delta\in(0,1)$, we say that the auction and the pricing are \emph{strictly mixed}.\footnote{Mixed-price auctions, originating from \citet{Wang-2002} and \citet{Viswanathan-2002}, have also appeared in a series of articles modeling a divisible and stochastic supply \citep{Ruddell-2017,Marszalec-2020,Woodward-2021}.}

\subsection{Interim values, payments, and utilities}

For all $x,y\in [0,1]$, we define the expected value given $s_i = x$ and $\kth{k}(\smi) = y$ as follows:
\begin{equation}
    \widetilde V(x,y):=\E_\ssb[v(s_i,\smi) \mid s_i = x, \kth{k}(\smi) = y].
\end{equation}
The expected value is taken over $n-2$ signals not including the bidder's own signal and the $k$-th highest among their $n-1$ opponents. Observe that because $v(s_i, \smi)$ is continuous and non-decreasing, $\widetilde V(x,y)$ is continuous and non-decreasing in $x$ and $y$.%
\footnote{In fact, it is strictly increasing in $x$.} 
We define $V(y) := \widetilde V(y,y)$, the expectation of the value of an item conditional on the bidder winning against the relevant competing signal, the $k$-th highest among its competitors.
Furthermore, we introduce \emph{interim payments} $P_i(s_i) = \E_{\smi}[p_i(s_i,\smi)]$ and \emph{interim utilities} $U_i(s_i,\shati) = \E_{\smi}[u_i(s_i,\shati,\smi)]$ with $U_i(x)$ being the shorthand of $U_i(x,x)$.

Interim incentive compatibility (IC) requires $U_i(s_i,s_i) \geq U_i(s_i,\shati)$ for all $s_i,\shati$, and interim individual rationality (IR) demands $U_i(s_i,s_i) \geq 0$ for all $s_i$. It is standard from an application of the envelope theorem \citep{Milgrom-2002} that the auctions we consider result in the same expected payment for each bidder.\footnote{This is also shown differently in \citet{Krishna-2009} for the single-unit auction. Note that in settings where signals are affiliated revenue equivalence fails \cite[Chapter 6.5]{Krishna-2009}.}
The interim utility is given by
\begin{align}
    U_i(s_i,\shati) = \E_{y=\kth{k}(\smi)}[\ind{\shati\geq y}\widetilde V(s_i,y)] - P_i(\shati).
\end{align}
Letting $G:=\Gnk$ and $g:=\gnk$, we obtain $\partial_1 U_i(s_i,\shati) = \int_0^\shati \partial_1 \widetilde V(s_i,y)g(y)\diff y$. Note that the expression is simple because, although values are not private, signals are independent. Let $U_i(s_i) = \max_{\shati} U_i(s_i,\shati)$ in the direct incentive-compatible mechanism, in which the maximum is obtained at $\shati = s_i$ due to incentive compatibility. Then, by the envelope theorem, we must have $U'_i(s_i)= \partial_1 U_i(s_i,s_i)$. Thus, we have $U_i(s_i) = U_i(0) + \int_0^{s_i} U_i'(x) \diff x$, and consequently $P_i(s_i) = \int_0^{s_i} \widetilde V(s_i,y) g(y) \diff y  - U_i(0) - \int_0^{s_i} U_i'(x) \diff x$. From \emph{winners-pay} and the continuity of the signals follows $U_i(0) = P_i(0) = 0$. The revenue equivalence extends to any standard auctions with independent signals in which winners pay, including mixed auctions, as the allocation rule is identical.

\subsection{Equilibrium bidding}\label{sec:equilibrium-bidding}

We derive the unique Bayes-Nash equilibrium in increasing and symmetric bidding strategies in $\delta$-mixed auctions. This equilibrium is the center of our analysis of surplus equity in mixed auctions in \cref{sec:mixed-auctions}.
\begin{proposition}[e.g.,~\citealt{Krishna-2009}]\label{prop:equilibrium-bid-FRB}
    The unique equilibrium bidding strategy in the uniform-price auction, i.e.,~the case $\delta=0$, is given by $\bid[U](s) := \widetilde V(s,s) = \E[v(s_i,\smi) \mid s_i = s, \kth{k}(\smi) = s]$.
\end{proposition}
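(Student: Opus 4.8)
The plan is to verify that $\bid[U](s) = \widetilde V(s,s)$ is a symmetric equilibrium by a standard deviation argument, and then to argue uniqueness. Fix a bidder $i$ and suppose all opponents use the increasing symmetric strategy $\beta^{U}$. Since $\beta^{U}$ is increasing, when bidder $i$ with signal $s_i$ deviates to a bid $b$, the event ``$i$ wins'' coincides with the event $b > \beta^{U}(\kth{k}(\smi))$, i.e., with $(\beta^{U})^{-1}(b) > \kth{k}(\smi)$. Writing $\shati := (\beta^{U})^{-1}(b)$, winning means $\shati > \kth{k}(\smi)$ and, conditional on winning, the price paid in the uniform-price auction ($\delta = 0$) is $Y_{k+1}(\bb) = \beta^{U}(\kth{k}(\smi))$ — the highest losing bid, submitted by the opponent whose signal is the $k$-th highest among $\smi$. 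Hence the interim utility from pretending to have signal $\shati$ is
\begin{equation}
  U_i(s_i,\shati) = \int_0^{\shati} \bigl(\widetilde V(s_i,y) - \beta^{U}(y)\bigr)\, g(y)\,\diff y
  = \int_0^{\shati} \bigl(\widetilde V(s_i,y) - \widetilde V(y,y)\bigr)\, g(y)\,\diff y,
\end{equation}
using $\beta^{U}(y) = \widetilde V(y,y)$ and $g = \gnk$.

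Now differentiate in $\shati$: $\partial_2 U_i(s_i,\shati) = \bigl(\widetilde V(s_i,\shati) - \widetilde V(\shati,\shati)\bigr) g(\shati)$. Since $\widetilde V(x,y)$ is non-decreasing (indeed strictly increasing) in its first argument, this derivative is $\geq 0$ for $\shati \le s_i$ and $\le 0$ for $\shati \ge s_i$; as $g \ge 0$ on the relevant range, $U_i(s_i,\cdot)$ is maximized at $\shati = s_i$. Therefore truthful bidding $b = \beta^{U}(s_i)$ is a best response, so $\beta^{U}$ is a symmetric Bayes-Nash equilibrium. For the boundary condition, $\shati=0$ yields zero utility, consistent with winners-pay and $U_i(0)=0$.

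For uniqueness within increasing symmetric strategies, suppose $\beta$ is any such equilibrium. Repeating the computation above with $\beta$ in place of $\beta^{U}$ (the winner's price is still the $(k{+}1)$-th highest bid $\beta(\kth{k}(\smi))$ since $\beta$ is increasing) gives $\partial_2 U_i(s_i,\shati) = \bigl(\widetilde V(s_i,\shati) - \beta(\shati)\bigr) g(\shati)$, where $\shati$ is the signal $i$ mimics. The first-order condition at the truthful report $\shati = s_i$, which must hold wherever $g(s_i) > 0$ (i.e., on the open support, by the assumption $f>0$ on $\osupport$), forces $\beta(s_i) = \widetilde V(s_i,s_i)$; the bid on a measure-zero or payoff-irrelevant set is immaterial. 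Since the equality holds on the open support and $\beta$ and $\widetilde V(\cdot,\cdot)$ are continuous, $\beta \equiv \beta^{U}$.

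The main obstacle — and the only genuinely delicate point — is making the reduction ``deviation to bid $b$ $\equiv$ deviation to pretended signal $\shati = \beta^{-1}(b)$'' fully rigorous at the edges: bids outside the range of $\beta^{U}$ (below $\inf$ or above $\sup$) never change the win/lose outcome against opponents using $\beta^{U}$ and so are dominated by boundary reports, and monotonicity of $\widetilde V$ in its first coordinate is exactly what guarantees single-peakedness of $U_i(s_i,\cdot)$ rather than merely a stationary point. Everything else is the envelope/first-order-condition bookkeeping already set up in the preceding subsection.
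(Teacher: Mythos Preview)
Your proof is correct and takes essentially the same route as the paper: both fix the opponents on $\beta^U$, write the interim payoff as $\int_0^{\shati}(\widetilde V(s_i,y)-\widetilde V(y,y))\,g(y)\,\diff y$, and use monotonicity of $\widetilde V$ in its first argument to conclude the integrand changes sign exactly at $y=s_i$, so $\shati=s_i$ is optimal. The only difference is that you supply a short first-order-condition argument for uniqueness, whereas the paper's own proof only verifies the equilibrium and defers uniqueness to \citet{Krishna-2009}.
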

Note that the equilibrium is unique in the class of increasing and symmetric strategies and weakly dominant with pure private values \citep{Krishna-2009}.

\begin{proposition}\label{prop:equilibrium-bid-delta}
    The unique symmetric equilibrium bidding strategy in the $\delta$-mixed auction, for $\delta\in(0,1]$, is given by
    \begin{align}\label{equ:alternative-equilibrium-delta}
        \bid(s)= V(s) -\frac{\int_{0}^{s}  V'(y)\Gnk(y)^{\frac{1}{\delta}}\diff y}{\Gnk(s)^{\frac{1}{\delta}}}.
    \end{align}
\end{proposition}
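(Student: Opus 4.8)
The plan is to reduce the equilibrium condition to a linear first-order ODE in the candidate strategy, solve that ODE in closed form, and then verify global optimality directly; existence and uniqueness then come out together.

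First I would write the interim payoff of a deviation. Suppose the $n-1$ opponents use a common increasing strategy $\beta$, set $G := \Gnk$ and $g := \gnk$ for the law of $\kth{k}(\smi)$, and let bidder $i$ with signal $s$ submit $\beta(z)$. On the winning event $\{z > \kth{k}(\smi)\}$ the first rejected bid of the full profile equals $\beta(\kth{k}(\smi))$, so $U_i(s,z) = \int_0^z\big(\widetilde V(s,y) - \delta\beta(z) - (1-\delta)\beta(y)\big)g(y)\diff y$. Differentiating in the report gives $\partial_z U_i(s,z) = \big(\widetilde V(s,z) - \beta(z)\big)g(z) - \delta\beta'(z)G(z)$, and the equilibrium requirement $\partial_z U_i(s,z)\big|_{z=s} = 0$, together with $\widetilde V(s,s) = V(s)$, yields the ODE $\delta G(s)\beta'(s) = \big(V(s) - \beta(s)\big)g(s)$.

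Second, I would solve this ODE. Multiplying by the integrating factor $G(s)^{1/\delta}$ (using $\tfrac{d}{ds}G(s)^{1/\delta} = \tfrac1\delta G(s)^{1/\delta-1}g(s)$) turns the equation into $\tfrac{d}{ds}\big(G(s)^{1/\delta}\beta(s)\big) = V(s)\,\tfrac{d}{ds}G(s)^{1/\delta}$. Integrating from $0$ to $s$ and integrating by parts on the right, where the lower boundary term vanishes since $G(0)=0$, gives $G(s)^{1/\delta}\beta(s) - \lim_{t\downarrow 0}G(t)^{1/\delta}\beta(t) = V(s)G(s)^{1/\delta} - \int_0^s V'(y)G(y)^{1/\delta}\diff y$. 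The limit on the left is $0$: $\beta$ is increasing with nonnegative values, hence bounded near $0$, while $G(t)\to 0$. This yields exactly the claimed formula for $\bid$, and since any symmetric increasing (differentiable) equilibrium must satisfy the same ODE and the same boundary condition, it is the unique such equilibrium.

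Third, I would verify that $\bid$ is indeed a best response. Substituting the ODE back into the derivative from the first step — i.e. using $\delta G(z)\,\partial_z\bid(z) = \big(V(z)-\bid(z)\big)g(z) = \big(\widetilde V(z,z)-\bid(z)\big)g(z)$ — collapses it to $\partial_z U_i(s,z) = \big(\widetilde V(s,z) - \widetilde V(z,z)\big)g(z)$, which is $\ge 0$ for $z \le s$ and $\le 0$ for $z \ge s$ because $\widetilde V$ is non-decreasing in its first argument; hence $z=s$ is a global maximizer of the report, and deviations to bids outside the range of $\bid$ are dominated by the nearest endpoint bid. Monotonicity of $\bid$ follows from $V-\bid\ge 0$ (as $V'\ge 0$), so the ODE forces $\partial_s\bid>0$ on $\osupport$. \emph{The main obstacle} is the analytic bookkeeping at the two ends of the support: making the vanishing boundary term at $0$ rigorous without presupposing differentiability of an arbitrary candidate equilibrium, and — for unbounded support — ensuring convergence of $\int_0^s V'(y)G(y)^{1/\delta}\diff y$ and the integrability needed for the envelope and integration-by-parts steps, which is where the finite-second-moment assumption on signals enters.
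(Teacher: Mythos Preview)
Your proof is correct and follows essentially the same approach as the paper: write the interim payoff, derive the first-order ODE $\delta G\beta' = (V-\beta)g$, solve it with the integrating factor $G^{1/\delta}$, and verify global optimality by checking that $\partial_z U_i(s,z) = (\widetilde V(s,z)-\widetilde V(z,z))g(z)$ changes sign at $z=s$. The only cosmetic difference is that the paper first obtains the form $\bid(s)=\frac{1}{\delta G(s)^{1/\delta}}\int_0^s V(y)g(y)G(y)^{1/\delta-1}\diff y$ and then integrates by parts to reach the stated expression, whereas you integrate by parts during the ODE solution and land on it directly; your treatment of the boundary term at $0$ and the monotonicity argument via $V-\bid\ge 0$ are, if anything, slightly more careful than the paper's.
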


\begin{proof3}
    See \cref{proof:prop:equilibrium-bid-delta}.
\end{proof3}

Note that $\bid$ converges to $\beta^U$ as $\delta \rightarrow 0$. We illustrate it in the following example.
\begin{example}[label=uniform-example] \label{example:bid-functions}
    We consider the simplest, non-degenerate setting with $n=3$ bidders competing for $k=2$ items. The bidders' signals are distributed uniformly on the support $[0,1]$. First, we compute $V(s)$, the expected value of a bidder with signal $s$ conditional on tying with the first rejected bid. As there are only three signals, the only remaining signal not fixed by the conditioning in $V$ is the a signal known to be greater than $s$, i.e.,~with an expectation of $(s+1)/2$. Thus, we obtain $V(s) = (1-c)s + \frac{c}{6}(5s + 1)$. Further, we compute $G$ and $g$, the cdf and pdf of the minimum of the three signals, and obtain $G(y) = y(2-y)$ and $g = 2(1-y)$. For the uniform-price auction, one can easily compute $\bid[0](s) = V(s)$ and $\bid[1](s) = \left(1-\frac{c}{6}\right) \frac{s - \frac{2}{3}s^2}{2-s} + \frac{c}{6}$. Note that $\bid[0]$ is linear due to uniform signals.  \cref{fig:bids-uniform-n3-k2} illustrates the bid functions for four different values of $c$.
    \begin{figure}[htp]
    \centering
    \begin{adjustbox}{max width=0.95\textwidth,left=20cm}
        \hspace{-0.5cm}
        \begin{minipage}[t]{0.24\textwidth}
        \captionsetup[subfigure]{font=footnotesize,margin={1.2cm,0.5cm}}
        \subcaptionbox*{$c = 0$}{%
        \includegraphics[scale=0.337]{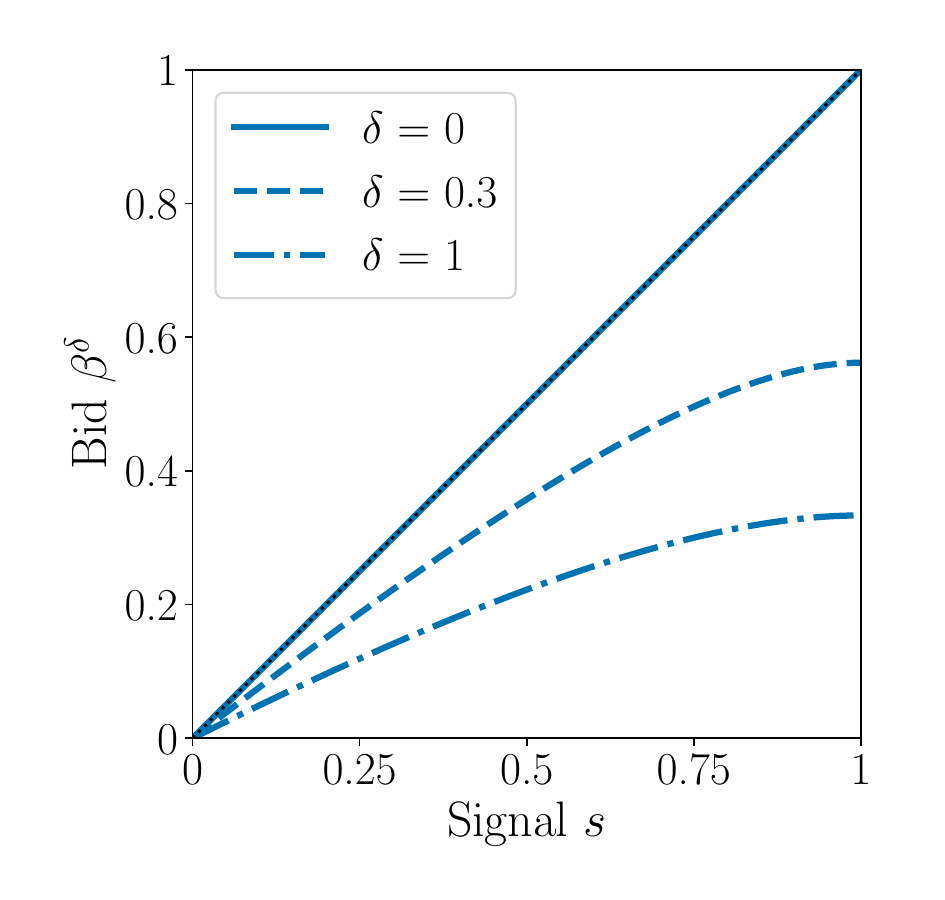}
        }
        \label{fig:bid-uniform-c-0.00}
        \end{minipage}
        \hspace{0.9cm}
        \begin{minipage}[t]{0.24\textwidth}
        \captionsetup[subfigure]{font=footnotesize,margin={0.5cm,0.5cm}}
        \subcaptionbox*{$c = 0.5$}{%
        \includegraphics[scale = 0.33]{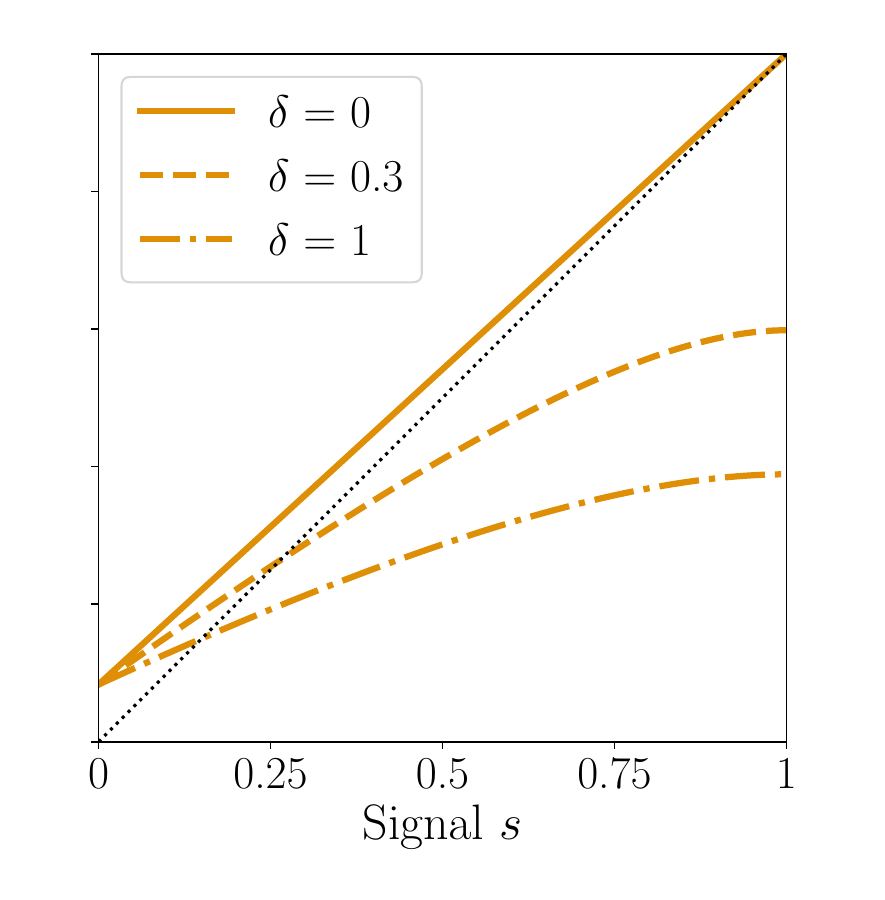}
        }
        \label{fig:bid-uniform-c-0.50}
        \end{minipage}
        \hspace{0.35cm}
        \begin{minipage}[t]{0.24\textwidth}
        \captionsetup[subfigure] {font=footnotesize,margin={0.5cm,0.5cm}}
        \subcaptionbox*{$c = 0.8$}{%
        \includegraphics[scale=0.33]{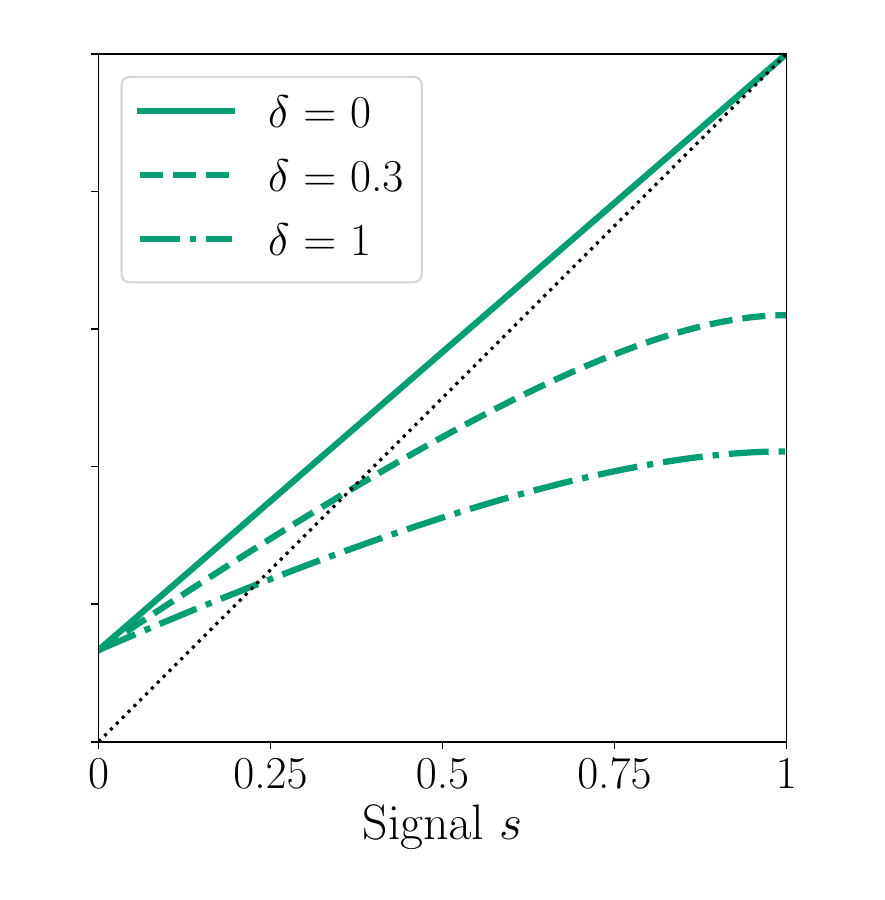}
        }
        \label{fig:bid-uniform-c-0.80}
        \end{minipage}
        \hspace{0.35cm}
        \begin{minipage}[t]{0.24\textwidth}
        \captionsetup[subfigure] {font=footnotesize,margin={0.5cm,0.5cm}}
        \subcaptionbox*{$c = 1$}{%
        \includegraphics[scale=0.33]{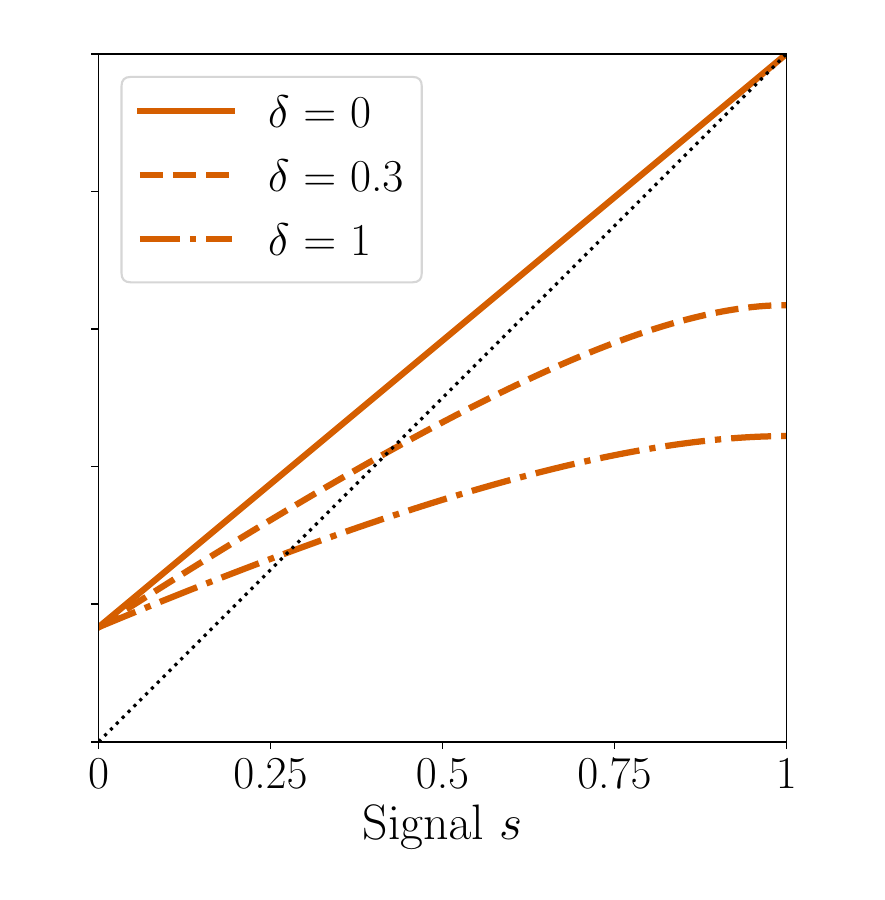}
        }
        \label{fig:bid-uniform-c-1.00}
        \end{minipage}
    \end{adjustbox}
    \caption{Equilibrium bid functions, $\bid$, for uniform signal distributions as a function of the signal, $s$, for common value parameters $c\in\{0,0.5,0.8,1\}$.}
    \label{fig:bids-uniform-n3-k2}
    \end{figure}
    
    In the case of pure private values, bidding truthfully is a dominant strategy in the uniform-price auction. Increasing price discrimination, $\delta$, decreases the bid corresponding to a given signal below the private value (this is also called ``bid shading''). The extent of bid shading increases in $\delta$. A higher common value component shifts equilibrium bids for low signal realizations above the diagonal, because a low-signal bidder's expectation of the average signal is higher than their private signal. For $s\rightarrow 1$, however, the expected value of an item conditional on tying with the price-setting competing signal converges to $s$, i.e.,~truthful bidding. Note that the latter need not be true for different $n$ and $k$.
    
    With a pure common value, the winner's curse becomes especially apparent. In equilibrium, bidders are attempting to salvage the winner's curse but cannot escape it. In fact, with a pure common value, winners' ex-post utilities are \emph{decreasing in signals} as long as $\delta>0$.
\end{example}

\section{Surplus equity}\label{sec:surplus-equity}

We propose a notion of equity that we call \emph{dominance in (absolute) \PD}, or short \emph{pairwise differences}. Our results hold for the family of equity measures that are defined by any increasing function of pairwise differences in ex-post surpluses (utilities). We call the collection $\{u_i\}_{i\in\bidders}$ of ex-post utilities an \emph{outcome}.

\begin{definition}[Dominance in pairwise differences]\label{def:pairwise-differences}
    An outcome $\{u_i\}_{i\in\bidders}$ dominates another outcome $\{u'_i\}_{i\in\bidders}$ in \PD~iff, for all $i,j \in \bidders$, it holds that $|u_i - u_j|\leq |u'_i - u'_j|$ with one inequality strict.
\end{definition}
We say that, for a family of parameterized outcomes $\{u_i^\delta\}_{i\in\bidders}$, $\delta \in \Delta$, $\delta^*$ is dominant in \PD~if $u^{\delta^*}$ dominates all outcomes $u^{\delta}$, $\delta \neq \delta^*$, $\delta \in \Delta$. Pairwise differences induces a partial dominance ranking over outcomes and therefore a dominant $\delta^*$ may not always exist. 
 
In our auction setup, ex-post utilities depend on the collection of signals $\ssb$. Thus, we consider that dominance in pairwise differences holds as long as it holds almost surely. In addition, in the class of standard auctions, items are assigned to the same buyers. Therefore, we focus on winning buyers in most of our analysis and adapt the definition of \PD~as follows.

\begin{definition}[Dominance in pairwise differences among winners]
    An outcome $\{u_i(\ssb)\}_{i\in\bidders}$ dominates another outcome $\{u'_i(\ssb)\}_{i\in\bidders}$ in \PD~iff, for all winning signals $s_i,s_j$ with opponents' signals $\ssb_{-i},\ssb_{-j}$, $i,j \in \bidders$, it holds that $|u_i(s_i,\ssb_{-i}) - u_j(s_j,\ssb_{-j})| \leq |u'_i(s_i,\ssb_{-i}) - u'_j(s_j,\ssb_{-j})|$, almost surely and with one inequality strict.
\end{definition}

Several prominent equity axioms (cf., e.g., \cite{Patty-2019}) hold for \PD. First, we note that anonymity is maintained. Any reordering of individuals in the population $\bidders$ has no consequence, as pairwise comparisons must hold for any two bidders.\footnote{We note that replication invariance and mean independence are not relevant in our setup, as we keep the population size (number of bidders) as well as the endowments (value distributions) fixed.} The Pigou-Dalton transfer principle asserts that any transfer from a wealthier agent to a poorer one must reduce inequality, provided the original welfare ranking between the two agents is maintained, that is, the wealthier agent does not become poorer than the previously poorer agent after the transfer \citep[cf.][]{moulin2004fair}. Since dominance in \PD~does not establish a complete order of outcomes, a Pigou-Dalton transfer may result in a decrease in some \PD~while others increase.

However, our results allow the classification of $\delta$-mixed pricing rules based on \PD~and \emph{any increasing function} of \PD, where  Pigou-Dalton transfer principle holds.\footnote{A related aggregation is used by \citet{Feldman-1975}, who aggregate positive pairwise differences for a measure of envy per player. Contrasting our measure, \citet{Feldman-1975} consider pairwise differences of hypothetical (if an agent had received another agent's bundle) and realized utilities.} 
For example, the top decile of realized utilities can be compared to the lowest or the bottom decile of realized utilities, and classic inequity measures such as the Gini index can be constructed.\footnote{In our setting with uncertainty about signal realizations, one could define the expected Gini index among winners $G = \frac{1}{2n^2 E_s[u_1(\ssb) \mid \text{1 wins}]} E_s[ \sum_{i=1}^{k}\sum_{j=1}^{k} |u_i(\ssb) - u_j(\ssb)|  \mid s_1,\dots, s_k > Y_{k+1}(\ssb) ]$. A small distinction is the normalization by the expected surplus.} Larger differences can receive a higher weight than smaller ones, e.g.,~by squaring each pairwise difference.

To exemplify the aggregation of \PD, we focus on the \emph{expected empirical variance} of surplus between the winners, or \emph{winners' empirical variance (WEV)} for short. This metric is defined in expectation, ensuring that it provides a ranking of pricing formats regardless of signal realizations.
\begin{definition}[Winners' empirical variance]\label{def:WEV}
    \begin{align}
    \wev~= E_s\left[\frac{1}{k(k-1)} \sum_{i=1}^{k}\sum_{j=1}^k \frac{(u_i(\ssb) - u_j(\ssb))^2}{2} \middle | s_1,\dots, s_k > Y_{k+1}(\ssb)\right].\footnote{The empirical variance among all bidders (thus including losers) in the auction is given by $\ev = E_{\ssb}[ \frac{1}{n(n-1)} \sum_{i=1}^{n} ( u_i(\ssb) - u_j(\ssb) )^2 ]$.}
\end{align}
\end{definition}
In addition to being a natural and well-known metric, this aggregation is attractive for two reasons: First, it ensures compliance with the Pigou-Dalton transfer principle, and second, the empirical variance is linked to surplus variance and correlation of surpluses among bidders.
\begin{proposition}\label{prop:WEV-satisfies-Pigou-Dalton}
    The \emph{winners' empirical variance} satisfies the \emph{Pigou-Dalton principle}.
\end{proposition}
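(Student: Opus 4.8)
The plan is to verify the Pigou-Dalton transfer principle directly from the definition of $\wev$ as an expectation of a (normalized) sum of squared pairwise differences among winners. Recall that a Pigou-Dalton transfer takes a fixed outcome $\{u_i\}_{i\in\bidders}$, picks two winners $a,b$ with $u_a > u_b$, and replaces them with $u_a - \varepsilon$ and $u_b + \varepsilon$ for some $\varepsilon \in (0, (u_a - u_b)/2]$, leaving all other winners' utilities unchanged and preserving the relative ranking of $a$ and $b$. I would argue that such a transfer weakly decreases $\wev$, strictly so for $\varepsilon \in (0,(u_a-u_b)/2)$, and that this holds pointwise in the signal realization $\ssb$ (conditional on the winner set), so it passes through the expectation.

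First I would reduce to a deterministic, finite-dimensional claim: since $\wev$ is a conditional expectation of $\tfrac{1}{k(k-1)}\sum_{i,j}(u_i-u_j)^2/2$ over signal realizations, it suffices to show that for any fixed vector of winners' utilities $(u_1,\dots,u_k)$, the quantity $\Phi(u) := \sum_{i=1}^k \sum_{j=1}^k (u_i - u_j)^2$ weakly decreases under a Pigou-Dalton transfer. Next I would expand $\Phi$. Using $\sum_{i,j}(u_i-u_j)^2 = 2k\sum_i u_i^2 - 2(\sum_i u_i)^2$, observe that a Pigou-Dalton transfer leaves $\sum_i u_i$ invariant, so only the term $\sum_i u_i^2$ matters. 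Then I would compute the change in $\sum_i u_i^2$ directly: replacing $(u_a,u_b)$ by $(u_a-\varepsilon, u_b+\varepsilon)$ changes $u_a^2 + u_b^2$ by $(u_a-\varepsilon)^2 + (u_b+\varepsilon)^2 - u_a^2 - u_b^2 = -2\varepsilon(u_a - u_b) + 2\varepsilon^2 = -2\varepsilon\big((u_a - u_b) - \varepsilon\big)$, which is strictly negative whenever $0 < \varepsilon < u_a - u_b$ — in particular for all admissible transfers, with equality only in the degenerate case $\varepsilon = 0$. Hence $\Phi$ strictly decreases, so the integrand decreases pointwise, and therefore $\wev$ strictly decreases (one could also note that the boundary case $\varepsilon = (u_a-u_b)/2$ still gives a strict decrease since $(u_a-u_b) - \varepsilon = (u_a-u_b)/2 > 0$).

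Two small points need care. One is the bookkeeping of the conditioning event: a Pigou-Dalton transfer is applied to the realized utility profile and does not alter the identity of winners (the event $s_1,\dots,s_k > Y_{k+1}(\ssb)$ is about signals, not utilities), so the conditioning set and its probability are unchanged and the monotonicity survives taking the conditional expectation by linearity and monotonicity of expectation. The other is the normalization constant $\tfrac{1}{k(k-1)}$, which is a fixed positive scalar and plays no role. I do not anticipate a genuine obstacle here; the only mild subtlety is stating the Pigou-Dalton principle in the ex-post (realization-wise) form appropriate to this setting and making explicit that it is applied conditional on a winner set, which is why the proof factors cleanly through the pointwise inequality.
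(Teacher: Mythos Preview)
Your proof is correct. The approach differs slightly from the paper's: the paper fixes an arbitrary third utility $U$ and computes directly that $(u_i'-U)^2 + (u_j'-U)^2 < (u_i-U)^2 + (u_j-U)^2$ after the transfer, thereby handling all cross-pair terms at once, whereas you use the identity $\sum_{i,j}(u_i-u_j)^2 = 2k\sum_i u_i^2 - 2\big(\sum_i u_i\big)^2$ to reduce the question to the change in $u_a^2+u_b^2$, exploiting that the total $\sum_i u_i$ is invariant under the transfer. Your route is arguably cleaner: it dispatches all pairwise terms simultaneously via a single scalar computation and makes the strictness transparent for every $\varepsilon \in (0,u_a-u_b)$, while the paper's term-by-term argument (phrased for a generic third outcome $U$) leaves the diagonal term $(u_i-u_j)^2$ implicit. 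Both are elementary and equally valid; your added remarks on the conditioning event and the normalization constant are correct and make explicit what the paper leaves to the reader.
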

\begin{proof3}
    See \cref{proof:prop:WEV-satisfies-Pigou-Dalton}.
\end{proof3}
In expectation, equilibrium surplus varies due to a bidder's own and the competitors' signals, and surplus between winners may be correlated.
As we consider efficient auctions, surplus only varies among the winning bidders. Among those, \wev~measures variation and correlation of surplus, that is, it measures the dispersion of surplus \emph{across} bidders.
\begin{lemma}\label{lem:alternative-variance}
    An equivalent expression for the winners' empirical variance is given by $\wev~= \var[u_1| \text{$1$ wins}] - \text{\normalfont Cov}[u_1,u_2| \text{$1$ and $2$ win}]$.
\end{lemma}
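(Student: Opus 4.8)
\textbf{Proof plan for \cref{lem:alternative-variance}.}

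The plan is to expand the empirical variance and exploit symmetry across winners. First I would observe that, conditional on the event $\{s_1,\dots,s_k > Y_{k+1}(\ssb)\}$, the winning bidders are exchangeable: the joint distribution of $(u_1(\ssb),\dots,u_k(\ssb))$ is invariant under permutations of the winners, since signals are iid and the mechanism is symmetric. Hence for any pair $i\neq j$ of winners, $\E[u_i^2 \mid \text{winners}] = \E[u_1^2 \mid \text{1 wins}]$, $\E[u_i \mid \text{winners}] = \E[u_1 \mid \text{1 wins}]$, and $\E[u_i u_j \mid \text{winners}] = \E[u_1 u_2 \mid \text{1 and 2 win}]$. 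A small point to be careful about is that conditioning on ``all $k$ of a fixed set win'' and conditioning on ``bidder $1$ wins'' (resp. ``bidders $1$ and $2$ win'') give the same conditional law of the relevant utilities, because in a standard $k$-unit auction the set of winners is determined by order statistics and the marginal conditional distribution of a winner's utility does not depend on which other bidders also win; I would state this explicitly as the symmetry reduction.

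Next I would carry out the algebra. Writing the summand as $(u_i - u_j)^2/2 = \tfrac12 u_i^2 + \tfrac12 u_j^2 - u_i u_j$ and summing over the $k(k-1)$ ordered pairs $i\neq j$ (the diagonal terms vanish), the double sum equals $(k-1)\sum_{i=1}^k u_i^2 - \sum_{i\neq j} u_i u_j$. Dividing by $k(k-1)$ and taking the conditional expectation, the exchangeability identities give
\begin{align}
\wev~= \E[u_1^2 \mid \text{1 wins}] - \E[u_1 u_2 \mid \text{1 and 2 win}].
\end{align}
Then I would add and subtract $\big(\E[u_1 \mid \text{1 wins}]\big)^2$ and use that, again by symmetry, $\E[u_1 \mid \text{1 wins}] = \E[u_2 \mid \text{2 wins}]$ and that this equals the relevant marginal mean under the two-winner conditioning as well; this converts the right-hand side into $\var[u_1 \mid \text{1 wins}] - \operatorname{Cov}[u_1,u_2 \mid \text{1 and 2 win}]$, which is the claimed expression.

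I do not expect a serious obstacle here; the one place requiring care — and the step I would write out most carefully — is the justification that the various conditioning events (one fixed bidder wins, two fixed bidders win, all $k$ of a fixed set win) induce the same conditional moments for the winning utilities, so that the combinatorial count $k(k-1)$ and the symmetry substitutions are legitimate. Everything else is the elementary variance/covariance decomposition of an exchangeable family. Finite second moments of signals (assumed in \cref{sec:model}) guarantee all the expectations above are well defined.
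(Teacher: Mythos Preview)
Your proposal is correct and follows essentially the same approach as the paper: expand the pairwise-difference double sum, invoke exchangeability of the winners to reduce to moments of $(u_1,u_2)$, and rewrite $\E[u_1^2]-\E[u_1u_2]$ as $\var-\operatorname{Cov}$. You are in fact more careful than the paper about the one subtle point---that conditioning on ``$1,\dots,k$ win'' versus ``$1$ wins'' versus ``$1$ and $2$ win'' yields the same relevant conditional moments---which the paper's proof silently uses when passing from $\E[(u_1-u_2)^2\mid\text{$1$ and $2$ win}]/2$ to $\E[u_1^2\mid\text{$1$ wins}]-\E[u_1u_2\mid\text{$1$ and $2$ win}]$.
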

\begin{proof3}
    See \cref{proof:lem:alternative-variance}.
\end{proof3}
In contrast, the ex-ante variance, $\var_{\ssb}[u_i(\ssb)]$, measures surplus variation \emph{within} a given bidder, and is more adequate to measure risk, e.g.,~across a series of identical, repeated auctions, in which a given bidder redraws their signal in every auction.
With pure private values and thus ex-post individual rationality, rankings of auction formats in terms of ex-ante variance or winners' ex-ante variance are identical. Rankings with respect to the empirical variance, however, may differ depending on if only winners are considered, or all bidders. A formal lemma and proof are given in \cref{app:sec:surplus-equity}.

\begin{example}[continues=uniform-example]\label{example:WEV}
    We revisit the example with uniformly distributed signals, $n=3$ bidders, and $k=2$ items. We compute \wev~numerically and illustrate it for different values of the common-value proportion $c$ in \cref{fig:WEV-uniform} below.
    \begin{figure}[htp]
        \centering
        \includegraphics[scale=0.35,trim={0 1cm 0 0},clip]{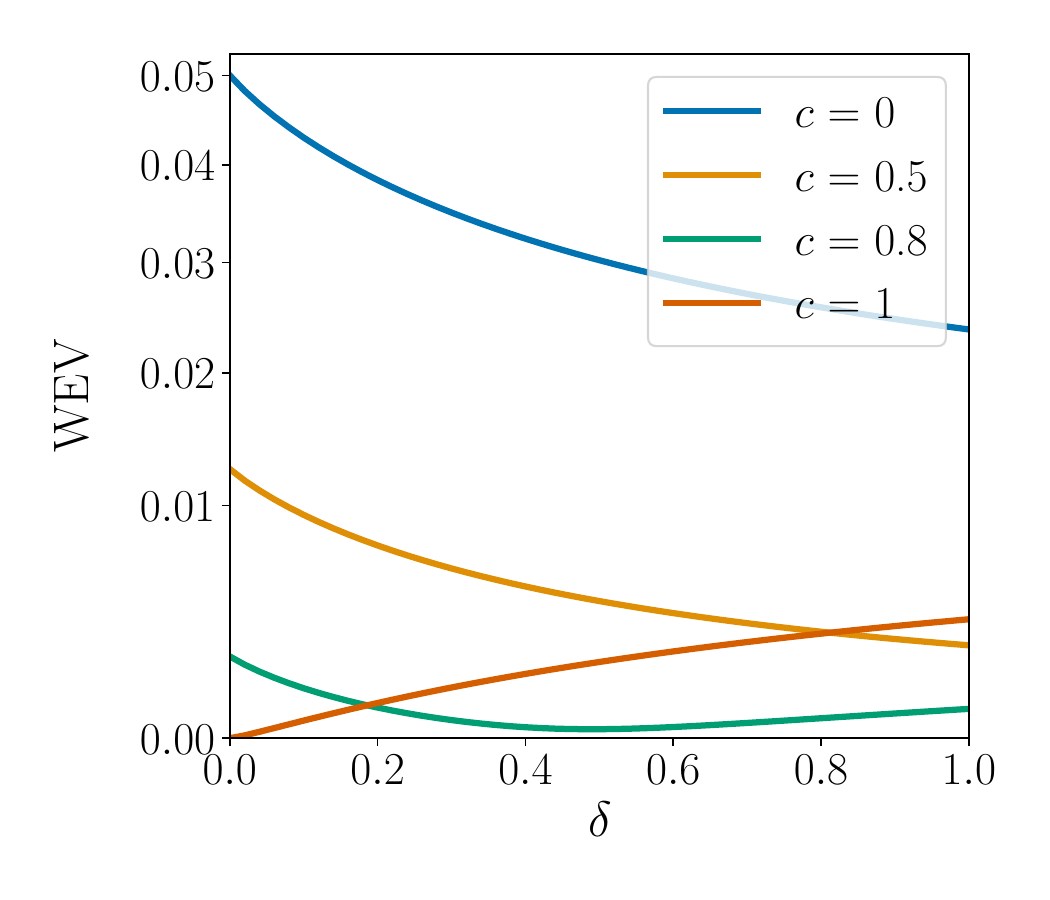}
        \caption{\wev~as a function of $\delta$ for uniform signals and various common value proportions $c$}
        \label{fig:WEV-uniform}
    \end{figure}
    For pure private and intermediate common values, the pay-as-bid auction ($\delta = 1$) minimizes \wev. For $c=0.8$, we observe an interior optimum, and for a pure common value, uniform pricing ($\delta = 0$) minimizes \wev. Note that \wev~is not necessarily convex in $\delta$ for a fixed $c$.
\end{example}
As the considered mechanisms are revenue equivalent and efficient (see also \cref{app:sec:revenue-equivalence}), we can focus on the question of surplus distribution among buyers more succinctly, without considering potential trade-offs.

\section{The surplus-equitable mechanism}\label{sec:surplus-equitable-mechanism}

In the class of efficient auctions, it is possible to distribute surplus among the winning bidders equitably. This means pairwise differences in ex-post utilities are zero, as is any aggregate metric such as the winners' empirical variance. We let $y:= \kth{k}(\smi)$, $G:=\Gnk$, and $g:=\gnk$.
\begin{theorem}\label{theorem:surplus-equity}
    In the class of standard $k$-unit auctions, there exists an incentive-compatible direct mechanism that distributes surpluses equitably among winners. Losers pay nothing, and the corresponding payments are given by
    \begin{align}
        \widetilde p_i(s_i,\smi) = 
        \left( (1-c)\left(s_i-y-\frac{G(y)}{g(y)}\right)+V(y)\right) 
        \ind{s_i > y}.
        \label{equ:equitable-payment}
    \end{align}
\end{theorem}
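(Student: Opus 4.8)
The strategy is to verify three things in sequence: (i) the proposed payment rule makes all winners' ex-post utilities equal; (ii) the rule is winners-pay and individually rational; and (iii) the rule is incentive compatible, by checking that its associated interim utility matches the unique envelope-theorem characterization derived in \cref{sec:setup}. For step (i), fix a signal profile $\ssb$ with $k$ winners. A winner $i$ with signal $s_i$ and $k$-th highest competing signal $y = \kth{k}(\smi)$ obtains ex-post utility $v(s_i,\smi) - \widetilde p_i(s_i,\smi)$. Using \cref{ass:bidder-values}, $v(s_i,\smi) = (1-c)s_i + \frac{c}{n}\sum_j s_j$, and subtracting the payment \eqref{equ:equitable-payment}, the idiosyncratic term $(1-c)s_i$ cancels exactly, leaving $\frac{c}{n}\sum_j s_j + (1-c)(y + G(y)/g(y)) - V(y)$. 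The first term $\frac{c}{n}\sum_j s_j$ is common to all winners, and — this is the point that needs care — for every winner the relevant competing order statistic is the \emph{same} value $Y_{k+1}(\ssb)$ (the first rejected bid), because from the perspective of any winner $i$, the $k$-th highest of the other $n-1$ signals is precisely $Y_{k+1}(\ssb)$. Hence $y$ is identical across all winners, so every winner gets the same ex-post utility; pairwise differences vanish, and so does any aggregator (in particular \wev).

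For step (ii), note $\widetilde p_i$ is zero whenever $s_i \le y$, so losers pay nothing and the mechanism is winners-pay in the sense that payments are levied only on winners; one should also check $\widetilde p_i(s_i,\smi) \le \beta^U(\text{bid})$ is not the binding requirement here since we are in a direct mechanism, but individual rationality $U_i(0) = 0$ should be confirmed. For step (iii), the clean route is revenue equivalence: the excerpt establishes that any standard, winners-pay mechanism with independent signals has interim payment $P_i(s_i) = \int_0^{s_i}\widetilde V(s_i,y)g(y)\diff y - U_i(0) - \int_0^{s_i}U_i'(x)\diff x$ pinned down (given $U_i(0)=0$) once the allocation rule is fixed, and that interim IC is equivalent to this envelope condition holding. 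So it suffices to compute $\widetilde P_i(s_i) := \E_{\smi}[\widetilde p_i(s_i,\smi)] = \int_0^{s_i}\big((1-c)(s_i - y - G(y)/g(y)) + V(y)\big) g(y)\,\diff y$ and check it equals the required interim payment, equivalently that the implied interim utility $\widetilde U_i(s_i) = \int_0^{s_i}\widetilde V(s_i,y)g(y)\diff y - \widetilde P_i(s_i)$ satisfies $\widetilde U_i'(s_i) = \partial_1 U_i(s_i,s_i) = \int_0^{s_i}\partial_1\widetilde V(s_i,y)g(y)\diff y$ with $\widetilde U_i(0)=0$.

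The key computational step is handling the $\int_0^{s_i}(1-c)\frac{G(y)}{g(y)}g(y)\,\diff y = (1-c)\int_0^{s_i}G(y)\,\diff y$ term together with $-(1-c)\int_0^{s_i} y\, g(y)\,\diff y$: integrating the latter by parts gives $-(1-c)\big(s_i G(s_i) - \int_0^{s_i}G(y)\diff y\big)$, so the two combine to $-(1-c)s_i G(s_i)$, which exactly cancels the $(1-c)s_i\int_0^{s_i}g(y)\diff y = (1-c)s_i G(s_i)$ piece coming from the $(1-c)s_i$ term in the payment. What remains is $\widetilde P_i(s_i) = \int_0^{s_i} V(y) g(y)\,\diff y$, i.e. the "second-price"-type component only — matching the three-part decomposition described in the introduction (private value, a uniform offset that cancels it in expectation, and the expected first-rejected-bid value). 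Then $\widetilde U_i(s_i) = \int_0^{s_i}(\widetilde V(s_i,y) - V(y))g(y)\diff y = \int_0^{s_i}(\widetilde V(s_i,y) - \widetilde V(y,y))g(y)\diff y$, which is manifestly $0$ at $s_i=0$ and whose derivative is $\int_0^{s_i}\partial_1\widetilde V(s_i,y)g(y)\diff y + (\widetilde V(s_i,s_i)-\widetilde V(s_i,s_i))g(s_i) = \int_0^{s_i}\partial_1\widetilde V(s_i,y)g(y)\diff y$, exactly the envelope condition.

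The main obstacle I anticipate is not any single computation but making the "same $y$ for all winners" observation rigorous and clearly stated — i.e. that $\kth{k}(\smi) = Y_{k+1}(\ssb)$ for every winner $i$ — since the entire equity claim hinges on it; and, secondarily, being careful that $g(y)>0$ on the relevant range so that $G(y)/g(y)$ is well-defined (this uses the assumption $f>0$ on the open support). Everything else is the integration-by-parts bookkeeping sketched above plus an appeal to the already-established revenue-equivalence/envelope machinery for incentive compatibility.
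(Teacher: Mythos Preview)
Your proposal is correct and follows essentially the same approach as the paper: the ex-post equity argument (cancellation of $(1-c)s_i$ together with the observation that $\kth{k}(\smi) = Y_{k+1}(\ssb)$ for every winner) and the integration-by-parts computation reducing the interim payment to $\widetilde P_i(s_i)=\int_0^{s_i}V(y)g(y)\diff y$ are identical to the paper's. The only minor variation is in the final IC step: the paper concludes directly by observing that the integrand $\widetilde V(s_i,y)-\widetilde V(y,y)$ of $U_i(s_i,\shati)=\int_0^{\shati}(\widetilde V(s_i,y)-\widetilde V(y,y))g(y)\diff y$ is positive for $y<s_i$ and negative for $y>s_i$, so truthful reporting maximizes interim utility, whereas you verify the envelope derivative condition on $\widetilde U_i'$ --- the paper's route is slightly more self-contained since it avoids invoking envelope-sufficiency.
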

The ex-post payment in \cref{equ:equitable-payment} is constructed in a simple but powerful way. First, the term $(1-c)s_i$ removes the idiosyncratic part of each bidder's realized value due to their own signal. To align interim incentives, this term is adjusted by a uniform subsidy $(1-c)(y + G(y)/g(y))$, which cancels the idiosyncratic payment in expectation. Thus, the second-price payment $V(y)$, the expected value of the $(k+1)$th-highest signal conditional on tying with the $k$th-highest, induces truthful reporting for any given signal $s_i$, interim.
If the first rejected signal is high, winning signals have to be paid subsidies. However, very high subsidy payments are low probability events, as the realizations of all winning signals and the first rejected signal must be high. Further intuition for the surplus-equitable payment is given in the continuation of \cref{example:equitable-mechanism} below. We also note that the surplus-equitable mechanism is ex-post individually rational in the pure private value case. Indeed, $\widetilde p_i(s_i,\smi) = (s_i-G(y)/g(y)) \ind{s_i>y} \leq s_i \ind{s_i>y}$.

We now prove \cref{theorem:surplus-equity}.
\begin{proof}
    The payment rule $\widetilde p$ results in identical ex-post surpluses of winners. This follows directly from the definition of ex-post surplus under truthful reporting $u_i(s_i,\smi) = v(s_i,\smi) - \widetilde p_i(s_i,\smi) = \frac{c}{n}\sum_{j\in\bidders}s_j + (1-c)(y-\frac{G(y)}{g(y)}) + V(y)$, for all $s_i > y$, which is independent of the bidder's identity as the first rejected signal is the same for any winner.
    
    Furthermore, the payment $\widetilde p$ is interim incentive-compatible. First, we compute the expected payment
    \begin{align*}
        \widetilde P_i(s_i) & = (1-c) \left( s_i G(s_i) - \int_0^{s_i} (y g(y) + G(y)) \diff y \right) + \int_0^{s_i} \widetilde V(y,y) g(y) \diff y = \int_0^{s_i} \widetilde V(y,y) g(y) \diff y.
    \end{align*}
    The left-hand term is equal to $0$ by integration by parts.
    Because losers pay nothing, the overall expected payment is equal to the expected payment of winners. The expected utility of a bidder who has signal $s_i$ and reports $\shati$ is given by
    \begin{align}
        U_i(s_i,\shati) = \int_0^\shati \left( \widetilde V(s_i,y) - \widetilde V(y,y) \right) g(y)\diff y,
    \end{align}
    As $\widetilde V(s_i,y)$ is increasing in $s_i$ the integrand is positive for $\shati < s_i$ and negative for $\shati > s_i$ and $g>0$ almost everywhere, the function $U_i(s_i,\shati)$ is uniquely maximized at $\shati = s_i$.
\end{proof}
Distributing surplus equitably ex-post is the strongest of implementations, while an implementation of equal interim surpluses is infeasible in winners-pay auctions. If interim surpluses were equalized across different signals, incentive compatibility cannot hold.\footnote{The winners-pay assumption is indispensable. Without it, interim surpluses among winners can indeed be equal. In the example with $n=3$ and $k=2$, uniform signals and pure private values ($c=0$), it can be verified that an interim payment of $L(s_i) = s_i(5s_i - s_i^2 - 4)/3(1-s_i)^2$ charged to the losing bidders achieves equal expected surpluses of the winners of $\frac{2}{3}$. However, then the losing bidders would naturally enter equity considerations.}

\begin{example}[continues=uniform-example]\label{example:equitable-mechanism}
    We continue the example with $n=3$ bidders competing for $k=2$ items and signals distributed uniformly on the support $[0,1]$. As previously calculated, we have $V(s) = (1-c)s + \frac{c}{6}(5s + 1)$, and $\frac{G}{g}(y) = \frac{y(2-y)}{2(1-y)}$. Together, we obtain
    \begin{align}
        \widetilde p_i(s,y) = \left( (1-c)\left(s - \frac{y(2-y)}{2(1-y)}\right) + \frac{c}{6}(5y+1) \right) \ind{s > y}.
    \end{align}
    We illustrate the payment for signals $s=0.2,0.5,0.8$ as a function of $y$ in \cref{fig:equ-pay-uniform-n3-k2} below. Note that each payment corresponding to a signal $s$ is only plotted for $y\leq s$, as for $y > s$ the signal $s$ does not win and the payment is zero.
    The payment addresses two countervailing incentives, the first stemming from the private-value component, and the second stemming from the common-value component.
    
    With pure private values, if the first rejected signal is high relative to $s$, the bidder winning with signal $s$ is paid a subsidy. This subsidy compensates bidders in order to induce them to report truthfully even with high private signals. If a high-signal bidder wins \emph{together} with a low-signal bidder, a large surplus (due to the private signal) is levied in order to equalize ex-post surpluses. Naturally, this would create an incentive to under-report your signal. However, in those cases where all winners' signals are high, no taxation is needed to equalize their surplus, and a subsidy is paid to winning bidders, in order to restore incentive compatibility ex-ante.

    With a pure common value, the payment is increasing in the first rejected signal $y$. In this case, no taxation is needed, as the ex-post surpluses are equal under any uniform payment rule. However, truthful reporting must be incentivized, which a second-price rule (adjusted for the common value) achieves. Recall that $V(y)$ is the expected value of a bidder with signal $y$ conditional on tying with the $k$-th highest among their $n-1$ competitors.
    
    With a stronger common value component, the taxation of the surplus due to the private signal is less important. Instead, as seen in \cref{fig:equ-pay-uniform-n3-k2} with $c=0.8$, the payment is increasing in the first rejected signal. However, for high signals $s$, there exists an interior maximum of $\widetilde p$ in $y$, a turning point of the countervailing incentives. The taxation of the private signal still hurts the bidder with a higher signal, so a steep subsidy, on a small range of first rejected signals $y$, must be paid to level incentives in expectation.
    \begin{figure}[htp]
    \centering
    \begin{adjustbox}{max width=0.95\textwidth,left=20cm}
        \hspace{-0.5cm}
        \begin{minipage}[t]{0.24\textwidth}
        \captionsetup[subfigure]{font=footnotesize,margin={1.2cm,0.5cm}}
        \subcaptionbox*{$c = 0$}{%
        \includegraphics[scale=0.337]{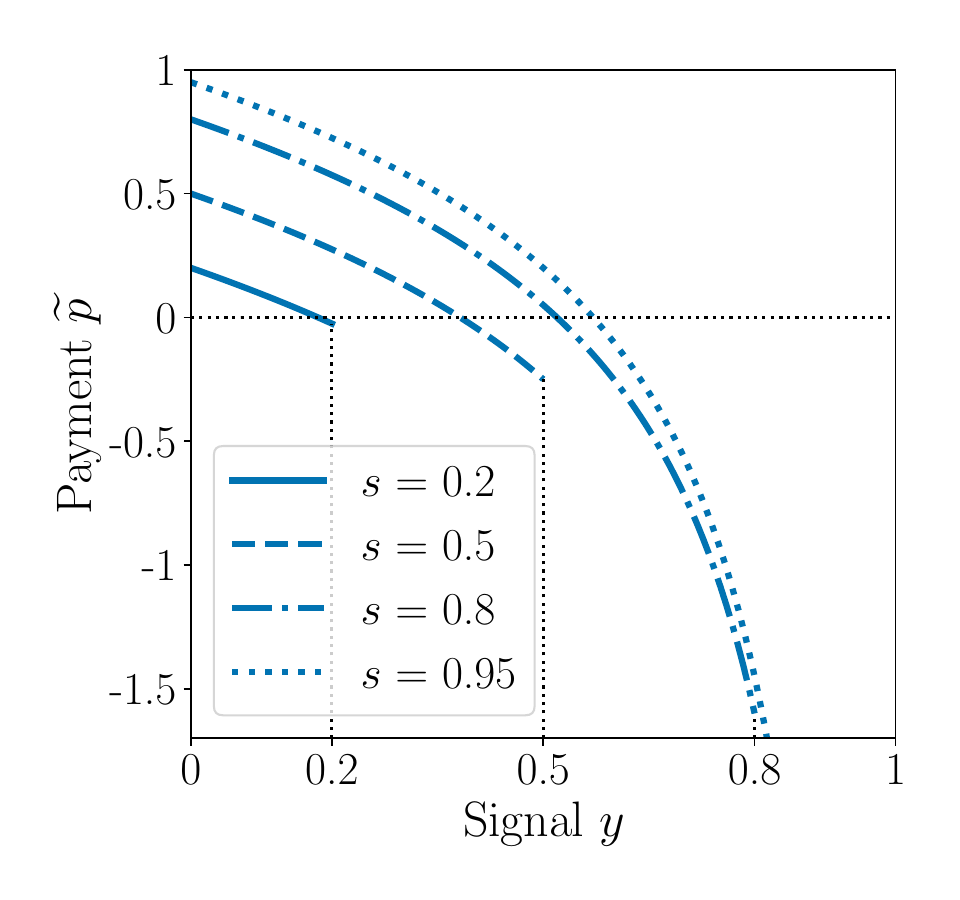}
        }
        \label{fig:equ-pay-uniform-c-0.00}
        \end{minipage}
        \hspace{1.3cm}
        \begin{minipage}[t]{0.24\textwidth}
        \captionsetup[subfigure]{font=footnotesize,margin={0.5cm,0.5cm}}
        \subcaptionbox*{$c = 0.5$}{%
        \includegraphics[scale = 0.33]{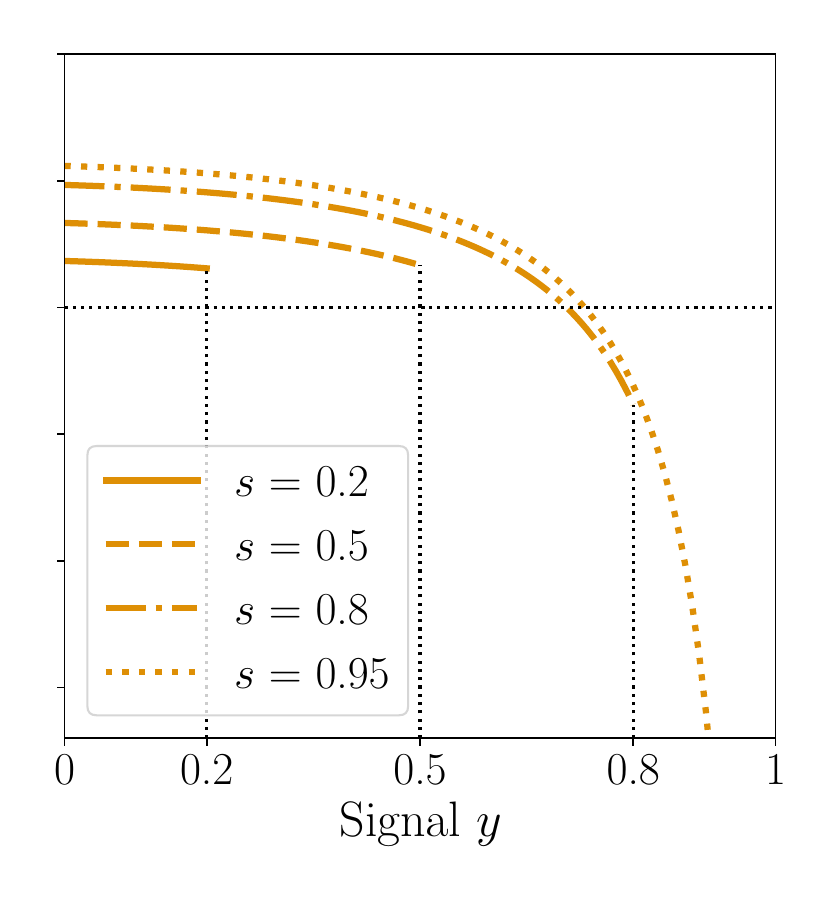}
        }
        \label{fig:equ-pay-uniform-c-0.50}
        \end{minipage}
        \hspace{0.5cm}
        \begin{minipage}[t]{0.24\textwidth}
        \captionsetup[subfigure] {font=footnotesize,margin={0.5cm,0.5cm}}
        \subcaptionbox*{$c = 0.8$}{%
        \includegraphics[scale=0.33]{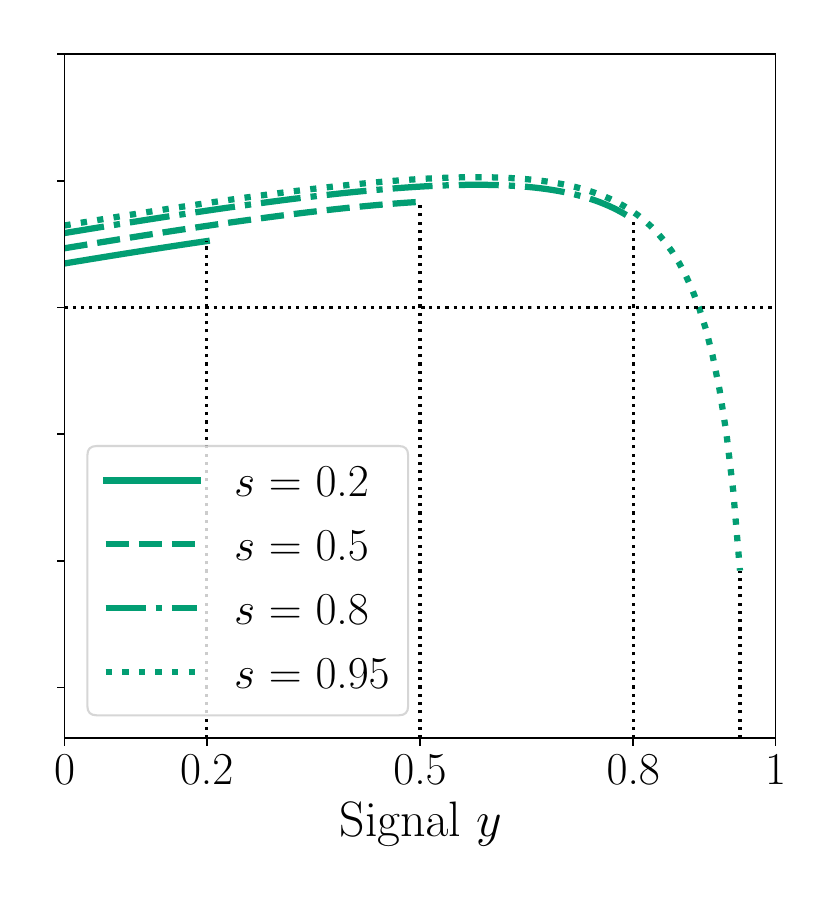}
        }
        \label{fig:equ-pay-uniform-c-0.80}
        \end{minipage}
        \hspace{0.5cm}
        \begin{minipage}[t]{0.24\textwidth}
        \captionsetup[subfigure] {font=footnotesize,margin={0.5cm,0.5cm}}
        \subcaptionbox*{$c = 1$}{%
        \includegraphics[scale=0.33]{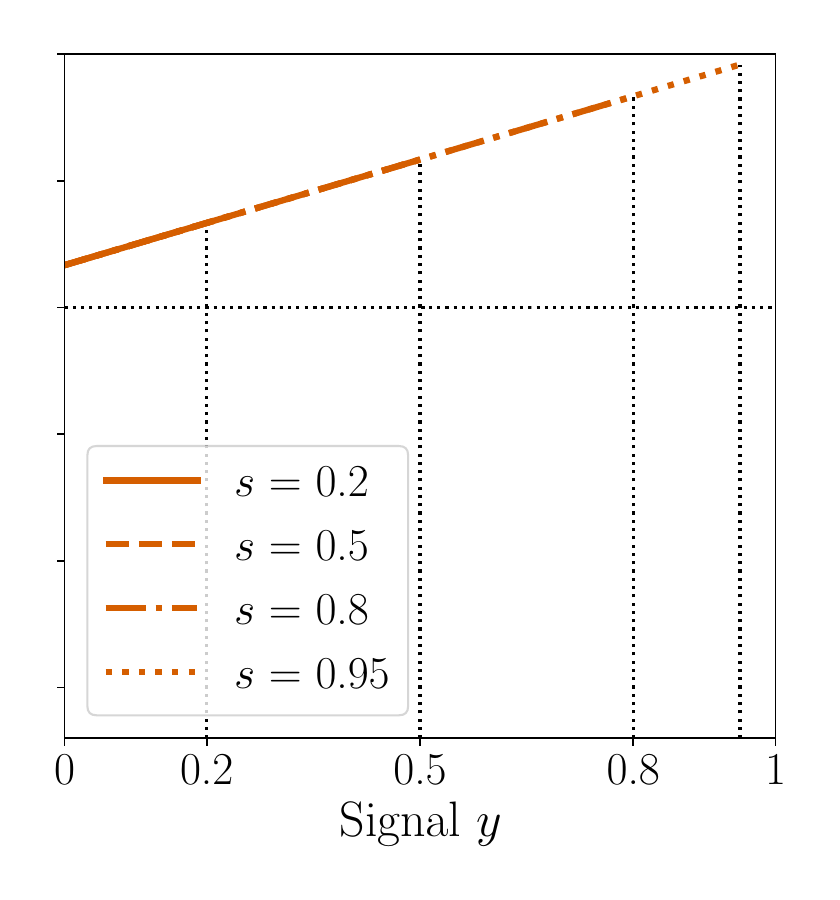}
        }
        \label{fig:equ-pay-uniform-c-1.00}
        \end{minipage}
    \end{adjustbox}
    \caption{Surplus-equitable payments, $\widetilde p$, for uniform signal distributions as a function of the signal $s$ and the first rejected signal, $y$, for common value parameters $c\in\{0,0.5,0.8,1\}$.}
    \label{fig:equ-pay-uniform-n3-k2}
    \end{figure}
\end{example}

The surplus-equitable mechanism relies on the signal prior, which, in practice, is often unknown. However, as we show in the following section, even in the realm of the common uniform and pay-as-bid pricing, it is possible to improve surplus equity. Those results remain, in a large class of signal distributions, prior-free.

\section{Uniform, pay-as-bid, and mixed auctions}\label{sec:mixed-auctions}

The equity-preferred pricing rule in the mixed auction class  crucially depends on the extent of the common value $c$. As seen for the winners' empirical variance in \cref{example:WEV}, with uniform signals, for some interior values of $c$, strictly mixed pricing is optimal. We formalize this fact in \cref{sec:existence-WEV-minimal-mixed} for any signal distributions. \cref{example:WEV} is in line with the general intuition that pay-as-bid pricing may be more equitable with higher private values, and uniform pricing with higher common values. However, as we show in \cref{sec:counterexample}, this is not true in general. Thus, additional distributional assumptions are necessary for a characterization of equity-preferred auctions. In \cref{sec:bounds-on-WEV-minimal-pricing}, we consider log-concave signal distributions and provide simple and prior-free bounds on the auction that is dominant in \PD.

\subsection{Equity comparisons in uniform, pay-as-bid, and mixed auctions}\label{sec:existence-WEV-minimal-mixed}

We first consider the case of a pure common value ($c=1$). As every bidder has the same ex-post realized value, ex-post utilities among winners are equalized if everyone pays the same price. This results in \PD~in utilities of zero.
Once the private value component enters the value function with a non-zero weight, the picture is less clear: it may be pay-as-bid pricing that is dominant in \PD, or it may be some degree of mixed pricing; however, it cannot be uniform pricing.

\begin{theorem}\label{theorem:pure-common}~
    The uniform-price auction is dominant in \PD~iff the common value proportion equals one (pure common value).
\end{theorem}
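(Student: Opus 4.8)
The plan is to prove the two directions separately, and the structural fact underlying both is that every standard auction run in a symmetric increasing equilibrium awards the items to the bidders holding the $k$ highest signals; hence the set of winners — and therefore the collection of pairwise differences to be compared — is the same across all mixed auctions, and each winner-to-winner difference decomposes as a \emph{value gap} minus a \emph{price gap}. For the easy direction, suppose $c=1$: then $v(s_i,\smi)=\tfrac1n\sum_{j\in\bidders}s_j$ does not depend on $i$, and under uniform pricing every winner pays $\kth{k+1}(\bb)$, so all winners obtain the identical utility $\tfrac1n\sum_j s_j-\kth{k+1}(\bb)$ and every pairwise difference among winners equals $0$, the minimum value $|u_i-u_j|$ can take. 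In any $\delta$-mixed auction with $\delta\in(0,1]$, two winners $i,j$ receive utilities differing by $\delta\bigl(\bid(s_j)-\bid(s_i)\bigr)$, so $|u_i-u_j|=\delta\,|\bid(s_i)-\bid(s_j)|$, which is strictly positive almost surely because $\bid$ is strictly increasing (\cref{prop:equilibrium-bid-delta}) and signals are almost surely distinct. Hence the uniform-price outcome is pointwise weakly below every $\delta$-mixed outcome in all pairwise differences, with strict inequality on a positive-probability event; by \cref{def:pairwise-differences} uniform pricing dominates every other mixed auction, i.e.\ it is dominant in \PD.

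For the converse I argue that $c<1$ implies uniform pricing is not dominant, and it suffices to exhibit a single mixed auction that uniform pricing fails to dominate. Fix $T\in\osupport$ and $\varepsilon>0$ small enough that the event $A=\{\kth{1}(\ssb)\le T,\ \kth{1}(\ssb)-\kth{2}(\ssb)\ge\varepsilon\}$ has positive probability (possible because $f>0$ on $\osupport$ and $k\ge 2$); on $A$ the two top-signal bidders — relabel them $1,2$ with $s_1>s_2$ — are both winners in every mixed auction. Under uniform pricing the value and price parts give $|u_1-u_2|=v(s_1,\ssb_{-1})-v(s_2,\ssb_{-2})=(1-c)(s_1-s_2)\ge(1-c)\varepsilon>0$, whereas in a $\delta$-mixed auction $|u_1^\delta-u_2^\delta|=\bigl|(1-c)(s_1-s_2)-\delta(\bid(s_1)-\bid(s_2))\bigr|$ with the shading term controlled uniformly in $\delta$ on $A$ via $0\le\bid(s_1)-\bid(s_2)\le\bid(s_1)\le V(s_1)\le V(T)$ — the bounds $0\le\bid\le V$ and the monotonicity of $V$ following from \cref{prop:equilibrium-bid-delta}, since the fraction subtracted from $V$ in the formula for $\bid$ is nonnegative ($V'\ge0$). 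Choosing any fixed $\delta<2(1-c)\varepsilon/V(T)$ (a positive threshold, since $V(T)\ge(1-c)T>0$) gives $0\le\delta(\bid(s_1)-\bid(s_2))<2(1-c)(s_1-s_2)$, which forces $|u_1^\delta-u_2^\delta|<(1-c)(s_1-s_2)=|u_1-u_2|$ everywhere on $A$. So, on an event of positive probability, the pairwise difference for the pair $(1,2)$ is strictly larger under uniform pricing than under this $\delta$-mixed auction, so uniform pricing does not dominate it and is therefore not dominant in \PD.

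I expect the only real subtlety to lie in the converse: making the almost-sure comparison precise and controlling the shading term $\delta(\bid(s_1)-\bid(s_2))$ \emph{uniformly in $\delta$} over a range of signals, which forces the restriction to a compact event $A$ (essential when the signal support is unbounded) and the use of the crude estimate $\bid\le V$ rather than the explicit bid formula. Everything else is conceptual rather than computational: the decisive observation is the value-gap-minus-price-gap decomposition, which makes the uniform-price pairwise difference exactly the irreducible quantity $(1-c)|s_i-s_j|$ and shows both that it cannot be lowered and that any positive amount of price discrimination strictly lowers it on a positive-probability event.
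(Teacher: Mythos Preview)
Your proof is correct. The ``if'' direction is essentially identical to the paper's argument. The ``only if'' direction, however, takes a genuinely different route.

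The paper's proof of the converse differentiates the squared pairwise difference $(\varphi^\delta(s_i)-\varphi^\delta(s_j))^2$ with respect to $\delta$, where $\varphi^\delta(s)=(1-c)s-\delta\bid(s)$, and then uses a dominated-convergence lemma (\cref{lem:derivatives}) to evaluate the limit of this derivative as $\delta\to 0$. The limit equals $-2(1-c)(s_i-s_j)(V(s_i)-V(s_j))<0$ almost surely, so the squared pairwise difference is strictly decreasing at $\delta=0$, and uniform pricing cannot be dominant. Your argument is more elementary: you avoid differentiation and limit arguments entirely by restricting to a compact positive-probability event $A$, using the crude bound $0\le\bid\le V$ (immediate from the explicit formula in \cref{prop:equilibrium-bid-delta}) to control the shading term uniformly in $\delta$, and then choosing an explicit small $\delta$ for which the $\delta$-mixed pairwise difference is strictly below the uniform one on all of $A$. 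The paper's approach yields slightly more---a strict local decrease of every pairwise difference at $\delta=0$---at the cost of the technical machinery in \cref{app:lem:limit_delta_0} and \cref{lem:derivatives}; your approach trades that extra information for a self-contained argument that needs only monotonicity and nonnegativity of $\bid$ and $V$.
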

\begin{proof3}
    See \cref{proof:theorem:pure-common}.
\end{proof3}
Furthermore, we show that, without any additional assumptions, strictly interior $\delta$-mixed pricing minimizes \wev~for a range of common values.

\begin{proposition}\label{theorem:mixed-minimizing-WEV}
    For any signal distribution, there exists $c^*<1$, such that for common values in the interval $(c^*,1)$, there exist $\delta$-mixed auctions with lower \wev~than pay-as-bid and uniform auctions.
\end{proposition}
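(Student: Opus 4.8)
\emph{Proof plan.} I would build everything on an explicit quadratic‑in‑$\delta$ formula for $\wev$. Under symmetric increasing (truthful) bidding, a winner $i$ in the $\delta$‑mixed auction has ex‑post utility $v(s_i,\smi)-\delta\bid(s_i)-(1-\delta)\kth{k+1}(\ssb)$, so for any two winners $i,j$ the common‑value term $\frac{c}{n}\sum_l s_l$ and the uniform term $(1-\delta)\kth{k+1}(\ssb)$ cancel and
\[
u_i-u_j=(1-c)(s_i-s_j)-\delta\bigl(\bid(s_i)-\bid(s_j)\bigr).
\]
Squaring and taking the conditional expectation in \cref{def:WEV} yields
\[
\wev(\delta,c)=(1-c)^2A-2(1-c)\,\delta\,B^c_\delta+\delta^2\,C^c_\delta,
\]
where $A:=\E\bigl[\tfrac1{k(k-1)}\sum_{i,j}\tfrac12(s_i-s_j)^2\mid s_1,\dots,s_k>\kth{k+1}(\ssb)\bigr]>0$ is a prior‑dependent constant not depending on $(\delta,c)$ (the winning signals are not a.s.\ equal), and $B^c_\delta,\,C^c_\delta$ are the same conditional expectations with $(s_i-s_j)^2$ replaced by $(s_i-s_j)(\bid(s_i)-\bid(s_j))$ and by $(\bid(s_i)-\bid(s_j))^2$, respectively. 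Since $\bid$ is increasing, $B^c_\delta\ge0$; since $\bid$ is non‑constant (a constant $\bid$ would force $V$ constant via \cref{prop:equilibrium-bid-delta}, contradicting strict monotonicity of $\widetilde V$ in its first argument), $C^c_\delta>0$; and $B^c_\delta>0$ for $\delta>0$. In particular $\wev(0,c)=(1-c)^2A$.

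\emph{Beating uniform pricing.} From the formula, $\wev(\delta,c)<\wev(0,c)$ exactly when $0<\delta<2(1-c)\,B^c_\delta/C^c_\delta$, so it suffices to bound $B^c_\delta/C^c_\delta$ away from $0$ near $(\delta,c)=(0,1)$ and then take $\delta$ proportional to $1-c$. Using that $\bid\to\bid[U]=V$ as $\delta\to0$, that $\bid$ depends continuously on $(\delta,c)$, and that $\wev$ is finite under $\E[s^2]<\infty$ (so dominated convergence applies to the ``among winners'' expectations, with the envelope $0\le\bid\le V$ coming from bid shading), the maps $(\delta,c)\mapsto B^c_\delta,\,C^c_\delta$ extend continuously to $\delta=0$, where they equal the corresponding expectations with $\bid$ replaced by $V$; at $(0,1)$ both are finite and strictly positive since $V$ is strictly increasing. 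Hence there are $\delta_0>0$, $c_0<1$, $\rho_0>0$ with $B^c_\delta/C^c_\delta\ge\rho_0$ on $[0,\delta_0]\times[c_0,1]$. For $c\in(c_0,1)$ set $\delta(c):=\min\{\delta_0,(1-c)\rho_0\}$; then $\delta(c)\in(0,1)$ and $\delta(c)<2(1-c)\rho_0\le2(1-c)B^c_{\delta(c)}/C^c_{\delta(c)}$, so $\wev(\delta(c),c)<\wev(0,c)$.

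\emph{Beating pay-as-bid.} At $c=1$, $\delta=1$ all winners share the value $\frac1n\sum_l s_l$, so $u_i-u_j=-(\bid[1](s_i)-\bid[1](s_j))$ and $\wev(1,1)=\E\bigl[\tfrac1{k(k-1)}\sum_{i,j}\tfrac12(\bid[1](s_i)-\bid[1](s_j))^2\mid s_1,\dots,s_k>\kth{k+1}(\ssb)\bigr]>0$, because $\bid[1]$ is strictly increasing and the winning signals are non‑degenerate. By continuity of $c\mapsto\wev(1,c)$ there is $c_1<1$ with $\wev(1,c)\ge\tfrac12\wev(1,1)$ on $(c_1,1)$, while $\wev(\delta(c),c)<\wev(0,c)=(1-c)^2A\to0$. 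Taking $c^*:=\max\{c_0,c_1\}$, enlarged if needed so that for $c>c^*$ one has $(1-c)\rho_0<\min\{\delta_0,1\}$ and $(1-c)^2A<\tfrac12\wev(1,1)$, the $\delta(c)$‑mixed auction with $\delta(c)=(1-c)\rho_0$ satisfies $\wev(\delta(c),c)<\wev(0,c)$ and $\wev(\delta(c),c)<\wev(1,c)$ for every $c\in(c^*,1)$, which is the claim.

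\emph{Main obstacle.} The one non‑routine ingredient is the continuity and strict positivity of $B^c_\delta$ and $C^c_\delta$ as $\delta\to0$ near $c=1$ — equivalently, that $B^c_\delta/C^c_\delta$ stays bounded below there. This rests on the continuous dependence of the equilibrium bid \eqref{equ:alternative-equilibrium-delta} on both $\delta$ and $c$, an integrability argument (from $\E[s^2]<\infty$ and $0\le\bid\le V$) to move limits through the conditional expectations, and the strict monotonicity of $V$ at the limit point; the rest is the elementary analysis of the quadratic $\delta\mapsto\wev(\delta,c)$.
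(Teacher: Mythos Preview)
Your proof is correct and reaches the same conclusion by essentially the same mechanism as the paper, but packaged differently. The paper's argument is two lines: it invokes the ``only if'' direction of \cref{theorem:pure-common} (whose proof shows that for $c<1$ the derivative of pairwise differences at $\delta=0$ is strictly negative, so some small $\delta>0$ beats uniform), and then uses continuity of $\wev$ in $c$ at $c=1$ to get $\wev(0,c)<\wev(1,c)$ for $c$ near $1$; chaining the two inequalities, the $\delta$ that beats uniform automatically beats pay-as-bid. You instead unpack the first step via the explicit decomposition $\wev(\delta,c)=(1-c)^2A-2(1-c)\delta B^c_\delta+\delta^2 C^c_\delta$ and a continuity argument for $B^c_\delta/C^c_\delta$ near $(\delta,c)=(0,1)$, which yields the concrete choice $\delta(c)=(1-c)\rho_0$. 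Both routes rest on exactly the same ingredients --- strict monotonicity of $V$ and continuous dependence of $\bid$ on $(\delta,c)$ --- so the underlying idea is shared; what your version buys is self-containment (no appeal to \cref{theorem:pure-common}) and a constructive $\delta(c)$, while the paper's buys brevity. One minor slip: the uniform part of the payment is $(1-\delta)\kth{k+1}(\bbeta)$, not $(1-\delta)\kth{k+1}(\ssb)$, though this is immaterial since it cancels in $u_i-u_j$. The integrability you flag as the main obstacle is genuine but no worse than what the paper assumes when it asserts ``$\wev$ is continuous in $c$'' without further comment.
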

\begin{proof3}
    See \cref{proof:theorem:mixed-minimizing-WEV}.
\end{proof3}
The intuitive notion that uniform pricing equitably distributes surplus under a pure common value may lead us to assume that pay-as-bid auctions are equity-preferred under private values. However, in the following section, we demonstrate a scenario where it fails and show that,  with pure private values, uniform pricing can result in lower \wev~than pay-as-bid pricing.

\subsection{Challenging the intuition: private values and uniform pricing}\label{sec:counterexample}

To understand the reversal of the intuition, consider \PD~in utility, the building block for \wev. If ex-post absolute differences in utility are greater under uniform pricing than under pay-as-bid pricing for signal pairs with sufficient probability mass, then the reversal may also hold in expectation. To start with, consider any two winning signals $s_i > s_j$, $s_i,s_j \in \support$ and private values only, i.e.,~$c=0$. Let $u_i^0(s_i,\smi)$ and $u_i^1(s_i,\smi)$ denote bidder $i$'s utility in the uniform price and pay-as-bid auction, respectively. Moreover, $\bid[0]$ and $\bid[1]$ denote the corresponding symmetric equilibrium bid functions and $Y_{k+1}(\bbeta)$ the first rejected bid. For $\delta\in [0,1]$ and $c=0$, we have $u_i^\delta (s_i,\smi) = s_i - \delta\bid(s_i) - (1-\delta)Y_{k+1}(\bbeta)$. Thus, we have $\Delta u^0 := \vert u_i^0 - u_j^0 \vert = \vert s_i - s_j \vert$ and $\Delta u^1:=\vert u_i^1 - u_j^1 \vert = \vert s_i - \bid[1](s_i) - (s_j - \bid[1](s_j)) \vert = \vert s_i - s_j - (\bid[1](s_i) - \bid[1](s_j)) \vert$. It holds that 
\begin{align}
    & \Delta U^0 < \Delta U^1 \\
    \Leftrightarrow~ & s_i - s_j < \vert s_i - s_j - (\bid[1](s_i) - \bid[1](s_j)) \vert \label{equ:U0-and-U1}\\
    \Rightarrow~ & 2 (s_i - s_j) < \bid[1](s_i) - \bid[1](s_j)\label{equ:U0-and-U1-2}.
\end{align}
As bid functions are increasing, if $s_i - s_j - (\bid[1](s_i) - \bid[1](s_j))$ was positive, \cref{equ:U0-and-U1} could never hold. Thus, \cref{equ:U0-and-U1-2} follows as a necessary condition for uniform pricing to have lower \PD~than pay-as-bid pricing. For the same statement to hold for \wev, it must be that the bid function has a slope of at least 2 for a sufficient mass of signals $s_i$ and $s_j$. Bid function slopes greater than 2 imply that high-signal bidders shade their bids much less, proportionally to their value, than bidders with lower signals. Consequently, the differential in ex-post surplus with pay-as-bid pricing, comparing two sufficiently different signals, are higher than the differential in signals. The latter equals the surplus difference in the uniform-price auction.

With this intuition, we now prove that it is indeed possible to construct an \emph{equilibrium} bid function with a slope greater than 2 for a sufficient mass of signals. For this, we require an extreme signal distribution where, broadly speaking, signals are equal to zero with probability $\varepsilon$ and equal to one with probability $1-\varepsilon$. However, to compute a Bayes-Nash equilibrium, we need a continuous signal distribution (with respect to the Lebesgue measure, without mass points) with connected support to solve the first-order condition. Thus, we add a small perturbation.
\begin{example}\label{example:counter}
Consider an auction with $n$ bidders and $k = n-1$ items. Each bidder $i$ has a pure private value ($c = 0$) given by its signal $s_i$. The signal is equal to the sum of a Bernoulli random variable with parameter $\varepsilon>0$ and a random perturbation drawn from $\text{Beta}(1, 1/\eta)$, with $\eta>0$. The resulting signal distribution is continuous, with support $[0,2]$. We formally state the signal cdf in \cref{app:sec:private-values-and-uniform-pricing} and the quantile function $F^{-1}$ below. As signals can be mapped one-to-one to quantiles, these can be used equivalently, simplifying the analysis in this example
\begin{align*}
    \forall x\in [0,1],\qquad
    F^{-1}(x) = \ind{x \geq \varepsilon} + \gamma_\eta(x)
    \qquad\text{where}\qquad \gamma_\eta(x) = \begin{cases}
    1 - \left(1-\frac{x}{\varepsilon}\right)^\eta & \text{if }x < \varepsilon\\
    1 - \left(1-\frac{x-\varepsilon}{1-\varepsilon}\right)^\eta & \text{if }x \geq \varepsilon.\\
    \end{cases}
\end{align*}
We fix $\varepsilon = 0.1/n$ and choose $\eta > 0$ to be an arbitrarily small constant. We consider the order statistics with respect to the quantiles,\footnote{$\widetilde G$ and $\widetilde g$ correspond to the definitions of $G_k^{n-1}$ and $g_{k}^{n-1}$ with uniform signals, where $F = F^{-1}$.} and we denote by $\widetilde G$ (resp.~$\widetilde g$) the distribution function (resp.~density) of the $k$-th highest quantile among $n-1$ buyers. We plot the signal cdf and the quantile function in \cref{fig:counterexample-signals-quantiles} in \cref{app:sec:private-values-and-uniform-pricing}.

Applying the formula from \cref{prop:equilibrium-bid-delta}, we can derive the equilibrium bid $b_\eta^\delta(x)$ of a bidder with quantile $x$ (recall that equilibrium bids as functions of signals are denoted by $\bid$). We state the bid function below and illustrate it in \cref{fig:example-bidfunction}. For details, we refer to \cref{app:sec:private-values-and-uniform-pricing}.
For $\delta=0$ we have that $b_\eta^0(x) = \beta^0(F^{-1}(x)) = \ind{x \geq \varepsilon} + \gamma_\eta(x)$; and for all $\delta > 0$ we have
\begin{align*}
\forall x \in [0,1],\qquad
b_\eta^\delta(x) &:= \beta^\delta(F^{-1}(x)) = b_0^\delta(x) + \xi_\eta^\delta(x)\\
\text{where}\qquad b_0^\delta(x) &:= 
    \begin{cases}
        0 &\text{if }x < \varepsilon\\
        1-\left(\frac{G(\varepsilon)}{G(x)}\right)^{\frac{1}{\delta}} &\text{if }x \geq \varepsilon
    \end{cases}
\qquad \text{and} \qquad \xi_\eta^\delta(x) &:= \frac{\int_0^x \gamma_\eta(y) g(y)G(y)^{\frac{1}{\delta}-1}\diff y}{\delta G(x)}.
\end{align*}
\begin{figure}[htp]
    \centering
    \includegraphics[scale=0.3,trim={0 12 0 0},clip]
    {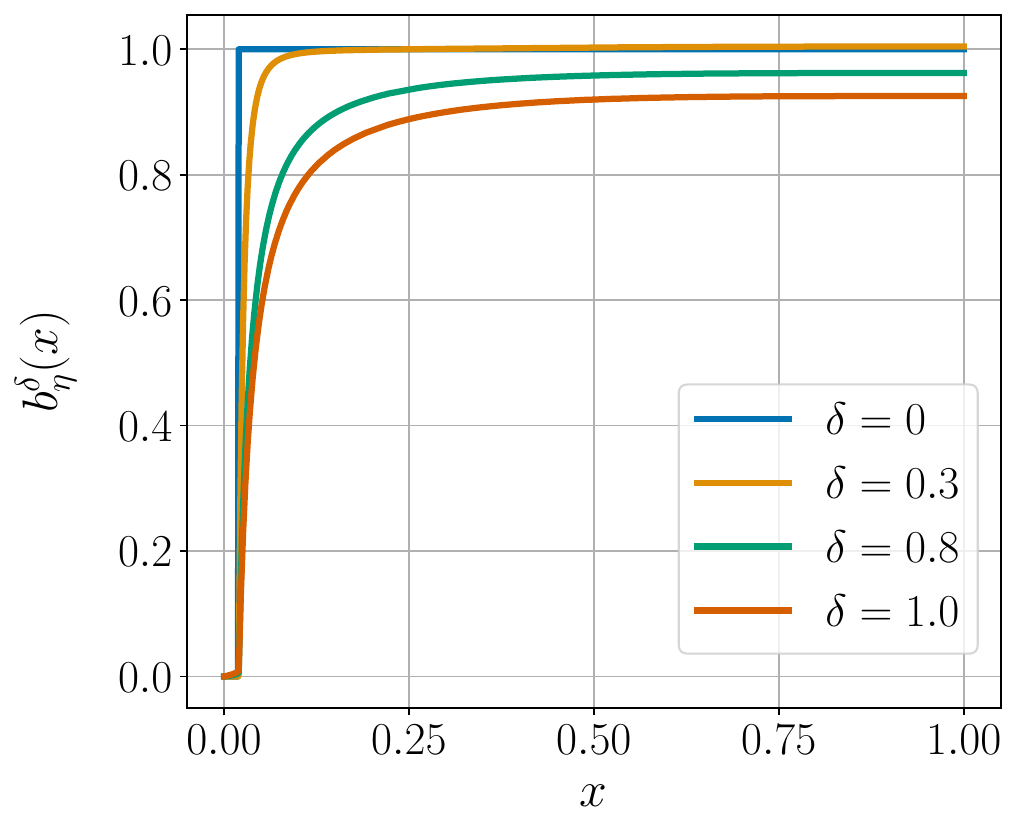}
    \caption{Equilibrium bid as a function of quantiles for $n = 5$ and $\eta = 0.01$}
    \label{fig:example-bidfunction}
\end{figure}

Next, we define the function $\phi_\eta^\delta(x) := F^{-1}(x) - \delta b_\eta^\delta(x)$ and $\wev_\eta^\delta$, the winners' empirical variance in a $\delta$-mixed auction with noise level $\eta$. For all $\delta \in [0,1]$ and for all $\eta > 0$, we have that $2\cdot\wev_\eta^\delta = \mathbb E_{\mathbf x} \left[ (\phi_\eta^\delta(x_1)-\phi_\eta^\delta(x_2))^2 \,|\, x_1,x_2 > \kth{k+1}(\mathbf x)\right]$, where $\mathbf x$ is a random vector of quantiles, with $n$ independent coordinates distributed uniformly on $[0,1]$. For every $x\in [0,1)$,  observe that $\gamma_\eta(x)$ and $\xi_\eta^\delta(x)$ converge towards $0$ when taking $\eta$ arbitrarily small. Therefore, $\wev_\eta^\delta$ converges towards $\wev_0^\delta$, formally defined in \cref{app:sec:private-values-and-uniform-pricing}.
\end{example}
We constructed the above example such that uniform pricing achieves lower \wev~than pay-as-bid pricing, even with pure private values. The result is stated in the proposition below and holds in the limit as $\eta\rightarrow0$, for any number of bidder $n\geq 5$. In the proof, we show that $\wev_0^0 \leq \frac{0.005}{n}$ and $\wev_0^1 \geq \frac{0.01}{n}$.
\begin{proposition}\label{prop:counter-example}
    Let the values be distributed according to the the quantile function $F^{-1}$ defined above. For $n \geq 5$, {there exists $\eta^*$, such that for all $\eta\leq \eta^*$} it holds that the \emph{winners' empirical variance} under uniform pricing is lower than under pay-as-bid pricing.
\end{proposition}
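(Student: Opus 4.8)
The plan is to work entirely in the $\eta \to 0$ limit, where the signal distribution degenerates to the two-point distribution putting mass $\varepsilon$ on $0$ and $1-\varepsilon$ on $1$, and then argue by continuity that a small positive $\eta$ preserves the strict inequality. In the limit, the relevant quantity is $2\cdot\wev_0^\delta = \mathbb{E}_{\mathbf{x}}[(\phi_0^\delta(x_1) - \phi_0^\delta(x_2))^2 \mid x_1, x_2 > \kth{k+1}(\mathbf{x})]$ with $\phi_0^\delta(x) = \ind{x \geq \varepsilon} - \delta b_0^\delta(x)$, where $b_0^\delta$ is the closed-form limiting bid stated in the excerpt. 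So the first step is to compute $\wev_0^0$ and $\wev_0^1$ explicitly (or bound them). For $\delta = 0$: $\phi_0^0(x) = \ind{x \geq \varepsilon}$, so the only surplus dispersion among winners comes from the rare event that a winner has quantile below $\varepsilon$ (i.e.\ signal $0$). Since $k = n-1$, there are $n-1$ winners out of $n$ draws; a winner with quantile $< \varepsilon$ arises with probability on the order of $n\varepsilon$, and one then bounds $\wev_0^0 \leq \tfrac{0.005}{n}$ using $\varepsilon = 0.1/n$.

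The second step, the harder computation, is the lower bound $\wev_0^1 \geq \tfrac{0.01}{n}$ for pay-as-bid. Here $\phi_0^1(x) = F^{-1}(x) - b_0^1(x)$, and on the ``high'' block $x \geq \varepsilon$ we have $b_0^1(x) = 1 - G(\varepsilon)/G(x)$ so that $\phi_0^1(x) = G(\varepsilon)/G(x)$, which varies substantially across the high block: $G$ here is the distribution of the $k$-th highest of $n-1$ uniform quantiles, i.e.\ the minimum (since $k = n-1$), so $G(x) = 1 - (1-x)^{n-1}$, giving $\phi_0^1(x) = (1 - (1-\varepsilon)^{n-1})/(1 - (1-x)^{n-1})$. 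The point is that even conditioning on \emph{all} $n$ quantiles lying above $\varepsilon$ (the dominant event, probability $\approx 1 - n\varepsilon$), two winners will typically have quite different values of $x$ — one near $\varepsilon$ and one near $1$ — so $\phi_0^1$ differs by a constant-order amount between them, and squaring and taking the conditional expectation yields something of order $1/n$ from the near-$\varepsilon$ region (where the density of the minimum-type order statistic concentrates). I would make this rigorous by restricting attention to an explicit sub-event of the conditioning event on which $|\phi_0^1(x_1) - \phi_0^1(x_2)|$ is bounded below by an absolute constant, checking that this sub-event has probability bounded below by a constant times $1/n$, and plugging in $n \geq 5$ to get the clean numeric bound.

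The third step is the continuity/limiting argument: since $\gamma_\eta(x) \to 0$ and $\xi_\eta^\delta(x) \to 0$ pointwise on $[0,1)$ as $\eta \to 0$, and these quantities are uniformly bounded, dominated convergence gives $\wev_\eta^\delta \to \wev_0^\delta$ for each fixed $\delta \in \{0,1\}$; hence for $\eta$ small enough $\wev_\eta^0 < \tfrac{0.005}{n} + o(1) < \tfrac{0.01}{n} - o(1) < \wev_\eta^1$, which yields the claimed $\eta^*$. I expect the main obstacle to be the lower bound in step two: one must be careful that the conditioning event ``$x_1, x_2 > \kth{k+1}(\mathbf{x})$'' — for $k = n-1$ this says $x_1, x_2$ are not the unique smallest of the $n$ quantiles — does not wash out the dispersion, and that the order-statistic density $g$ appearing through $\phi_0^1$ genuinely places enough mass near $x = \varepsilon$ to produce a $1/n$-order contribution rather than something smaller; handling the two cases (the discarded coordinate below $\varepsilon$ versus above $\varepsilon$) cleanly is where the bookkeeping lives. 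The first and third steps are comparatively routine.
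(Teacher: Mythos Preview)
Your overall architecture matches the paper's: reduce to the $\eta\to 0$ limit, establish $\wev_0^0 \leq 0.005/n$ and $\wev_0^1 \geq 0.01/n$, then invoke continuity. Steps~1 and~3 are fine and essentially what the paper does. The difficulty is step~2, and here your proposed technique is too blunt to close the argument.

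The problem is quantitative. With $\varepsilon = 0.1/n$ one has $G(\varepsilon) \approx 0.1$, and $\phi_0^1(x) = G(\varepsilon)/G(x)$ ranges from $1$ at $x=\varepsilon$ down to $\approx 0.1$ at $x=1$. A rectangular sub-event such as $\{x_1 \in [\varepsilon, 2\varepsilon],\ x_2 \geq 1/2,\ \text{both win}\}$ gives $|\phi_0^1(x_1)-\phi_0^1(x_2)| \gtrsim 0.4$, but the conditional probability of this event is only about $\varepsilon \cdot \tfrac{1}{2}\cdot \tfrac{H(\varepsilon)}{(n-2)/n}$, which for $n=5$ is roughly $10^{-3}$. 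The resulting bound $\wev_0^1 \gtrsim 10^{-4}$ is a factor of $20$--$50$ short of the target $0.005/n$ you need to beat. No single sub-event with a constant integrand bound does better: the mass of $(\phi_0^1(x_1)-\phi_0^1(x_2))^2$ is spread logarithmically over $x_1 \in [\varepsilon, O(1/n)]$ with the integrand varying, so a flat cutoff throws away most of the contribution.

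The paper instead uses the decomposition $\wev_0^\delta = \E[\phi_0^\delta(x_1)^2\mid x_1\text{ wins}] - \E[\phi_0^\delta(x_1)\phi_0^\delta(x_2)\mid x_1,x_2\text{ win}]$ and evaluates both terms by explicit one-dimensional integrals in $\widetilde G$ and $\widetilde H$. The key estimates are series bounds such as $\int_\varepsilon^1 \widetilde G(x)^{-1}\,dx \geq 1 + 2.2/n$ and $\int_0^1 x\,\widetilde G(x)^{-1}\,dx \leq 1/2 + 1.65/n^2$ (the latter requiring $n\geq 5$), together with an integration by parts to control the cross term. Both moments are $\approx G(\varepsilon)^2$ to leading order, and it is the $O(1/n)$ \emph{difference} between them that yields $\wev_0^1 \geq 0.01/n$; a sub-event bound on $\wev$ cannot exploit this cancellation structure. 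So to make your step~2 work you would need to abandon the constant-bound sub-event and carry out integral estimates of this kind.
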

\begin{proof3}
    See \cref{proof:prop:counter-example}.
\end{proof3}
Thus, in order to characterize equity-optimal pricing further, we need additional assumptions. In the next section, we show that, for a large class of signal distributions, simple bounds tell us which auction designs are candidates for being equity-preferred in the class of mixed auctions.

\subsection{Equity-optimal pricing for log-concave signal distributions}\label{sec:bounds-on-WEV-minimal-pricing}

For our subsequent results, we assume a regularity condition on the bidders' signal distributions, \emph{log-concavity}. The family of log-concave distributions contains many common distributions, for example uniform, normal, exponential, logistic or Laplace distributions \citep{Bagnoli-2005}.\footnote{Also $\chi$ distribution with degrees of freedom $\geq 1$, gamma with shape parameter $\geq 1$, $\chi^2$ distribution with degree of freedom $\geq 2$, beta with both shape parameters $\geq 1$, Weibull with shape parameter $\geq 1$, and others.}

\begin{definition}
    A real-valued function $h \in \mathbb{R}^{\mathbb{R}}$ is \emph{log-concave} if $\log(h)$ is concave.
\end{definition}
In this class of signal distributions, simple and prior-free bounds characterize the equity-preferred auction design in the class of mixed auctions. The proofs of \cref{theorem:bound-on-optimal-delta} and \cref{theorem:deltas-dominating-uniform} are developed in \cref{sec:proving-main-theorems}.
\begin{theorem}\label{theorem:bound-on-optimal-delta}
    Let signals be drawn from a log-concave distribution. Then, for a given private value proportion $1-c$, the mixed auction with price discrimination $\delta=1-c$ is equity-preferred among all mixed auctions with price discrimination of less than $1-c$.
\end{theorem}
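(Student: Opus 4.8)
The plan is to reduce the claim, via the equilibrium characterisation of \cref{prop:equilibrium-bid-delta}, to two monotonicity statements about a single scalar function: one is bookkeeping, the other is where log-concavity does all the work. Throughout write $G:=\Gnk$, $g:=\gnk$, and assume $c<1$ (for $c=1$ there is no $\delta'<1-c$ and the statement is vacuous). Fix a signal profile $\ssb$ and two winners $i,j$. In the $\delta$-mixed auction both winners pay the same uniform component $(1-\delta)\beta^\delta(\kth{k+1}(\ssb))$ and both values contain the same common term $\tfrac cn\sum_\ell s_\ell$, so
\begin{equation}
    u_i^\delta-u_j^\delta \;=\; \phi^\delta(s_i)-\phi^\delta(s_j), \qquad \phi^\delta(s):=(1-c)s-\delta\beta^\delta(s).
\end{equation}
Hence $\delta=1-c$ dominating $\delta'<1-c$ among winners is equivalent to $|\phi^{1-c}(s_i)-\phi^{1-c}(s_j)|\le|\phi^{\delta'}(s_i)-\phi^{\delta'}(s_j)|$ for all winning signals, with one comparison strict. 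It suffices to prove (a) $(\phi^{1-c})'\ge0$ on $\osupport$ and (b) $(\phi^{\delta'})'\ge(\phi^{1-c})'$ on $\osupport$, strict on the interior, for every $\delta'<1-c$: then for $s_i>s_j$ both increments $\int_{s_j}^{s_i}(\phi^\bullet)'$ are nonnegative and the $\delta'$-one dominates, strictly whenever $s_i\neq s_j$, and a.s.\ the (at least two) winners have distinct signals, so strictness is automatic.

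For (b): differentiating the closed form of \cref{prop:equilibrium-bid-delta} yields the equilibrium identity $\delta(\beta^\delta)'(s)G(s)=\bigl(V(s)-\beta^\delta(s)\bigr)g(s)$, hence
\begin{equation}
    (\phi^\delta)'(s)=(1-c)-\frac{\bigl(V(s)-\beta^\delta(s)\bigr)g(s)}{G(s)}.
\end{equation}
By \cref{prop:equilibrium-bid-delta} together with integration by parts, $\beta^\delta(s)$ is the average of $V$ over $[0,s]$ against weight density proportional to $G(y)^{1/\delta-1}g(y)$; as $1/\delta$ grows this weight moves (in the likelihood-ratio order) toward larger $y$, so $\beta^\delta(s)$ is decreasing in $\delta$ — strictly on $\osupport$ since $V$ is strictly increasing. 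Thus $V(s)-\beta^\delta(s)$ is increasing in $\delta$ and $(\phi^\delta)'(s)$ is decreasing in $\delta$, which is (b).

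For (a), the crux: since $\phi^{1-c}(s)=(1-c)\bigl(s-\beta^{1-c}(s)\bigr)$, the claim is $(\beta^{1-c})'(s)\le1$, i.e.\ the bid shading $s\mapsto s-\beta^{1-c}(s)$ is nondecreasing. With $t:=1/(1-c)\ge1$, the closed form and two integrations by parts give $(\beta^{1-c})'(s)=\dfrac{t\,g(s)}{G(s)^{t+1}}\displaystyle\int_0^s V'(y)G(y)^t\diff y$. Log-concavity of $f$ enters through two bounds. First, $V'(s)\le1$: writing $V(s)=(1-c)s+\tfrac cn\bigl(2s+(k-1)\E[X\mid X>s]+(n-1-k)\E[X\mid X<s]\bigr)$ from the order-statistics conditional-mean identity, log-concavity makes the mean residual life $\E[X\mid X>s]-s$ nonincreasing and the mean inactivity time $s-\E[X\mid X<s]$ nondecreasing, so both conditional means have derivative $\le1$; with coefficients summing to $(k-1)+(n-1-k)=n-2$ this gives $V'(s)\le(1-c)+\tfrac cn(2+n-2)=1$. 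Hence $\int_0^s V'(y)G(y)^t\diff y\le\int_0^s G(y)^t\diff y$, and it remains to show
\begin{equation}
    t\,g(s)\int_0^s G(y)^t\,\diff y\;\le\;G(s)^{t+1},
\end{equation}
which is exactly the statement that $s\mapsto\bigl(\int_0^s G(y)^t\diff y\bigr)/G(s)^t$ is nondecreasing, equivalently that $s\mapsto\int_0^s G(y)^t\diff y$ is log-concave. That holds because $G=\Gnk$ is log-concave whenever $F$ is log-concave (order statistics preserve log-concavity), hence so is $G^t$ for $t\ge1$, hence so is its primitive.

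The work is entirely in (a). The delicate point is not any single calculation but the structural observation that, once the factor $V'\le1$ is peeled off (itself extracted from log-concavity via the monotone mean residual life / mean inactivity time properties), the remaining inequality collapses onto the single log-concavity fact about $\int_0^s G_k^{n-1}(y)^{1/\delta}\diff y$ — the auxiliary lemma carrying the regularity hypothesis. A minor technicality is the behaviour at the left endpoint of the support (where $G,g\to0$ and $\beta^{1-c}\to V$), dealt with by the relevant limiting relations, and the verification that $V$ is strictly increasing so that (b) is strict on the interior.
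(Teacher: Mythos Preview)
Your proof is correct and follows essentially the same route as the paper: reduce winners' pairwise utility differences to $\phi^\delta(s)=(1-c)s-\delta\beta^\delta(s)$, establish that $\partial_s(\delta\beta^\delta)$ is increasing in $\delta$ (your (b), the paper's \cref{lem:equilibrium-bid-monotone}), and bound $(\beta^{1-c})'\le 1$ via $V'\le 1$ together with log-concavity of $\int_0^s G^{1/\delta}$ (your (a), the paper's \cref{prop:derivative-beta-bounded} built from \cref{lem:value-function-bounded,lem:private-derivative-bounded,lem:intG-log-concave}). The only noteworthy variation is that for (b) you read monotonicity off the equilibrium first-order condition $\delta(\beta^\delta)'G=(V-\beta^\delta)g$ combined with $\beta^\delta$ decreasing in $\delta$, whereas the paper computes the cross-partial $\partial^2(\delta\beta^\delta)/\partial s\,\partial\delta$ explicitly; both are equivalent and the remainder of the argument coincides.
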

The equity-preferred pricing rule dominates in \PD~all pricing rules with less price discrimination. In other words, \cref{theorem:bound-on-optimal-delta} provides a lower bound on the amount of price discrimination required to rule out dominated mixed auctions. We illustrate \cref{theorem:bound-on-optimal-delta} in \cref{fig:theorem_2}. All pricing rules in the shaded area in red are dominated by the diagonal $1-c$, given any log-concave distribution of bidders' signals. 

Furthermore, uniform pricing is dominated in \PD~by many alternative pricing rules, i.e.,~these pricing rules are preferred to uniform pricing in terms of equity.

\begin{theorem}\label{theorem:deltas-dominating-uniform}
    Let signals be drawn from a log-concave distribution. Then the uniform-price auction is dominated in \PD~by any strictly mixed pricing with price discrimination of up to $\min\{1,2(1-c)\}$.
\end{theorem}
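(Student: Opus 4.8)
The plan is to reduce dominance in \PD~among winners to a single bound on the slope of the equilibrium bid $\bid$. All auctions here are efficient, so the same $k$ bidders win; in the $\delta$-mixed auction a winner $i$ has ex-post utility $u_i^\delta = v(s_i,\smi) - \delta\bid(s_i) - (1-\delta)\bid\big(\kth{k+1}(\ssb)\big)$, since in the increasing symmetric equilibrium the first rejected bid is $\bid$ at the $(k+1)$-st highest signal. The common-value term $\tfrac cn\sum_\ell s_\ell$ in $v$ and the uniform term $(1-\delta)\bid(\kth{k+1}(\ssb))$ are identical for all winners, so for two winners $i,j$ with $s_i\ge s_j$ we get $u_i^\delta-u_j^\delta = (1-c)(s_i-s_j)-\delta\big(\bid(s_i)-\bid(s_j)\big)$, whereas $u_i^0-u_j^0 = (1-c)(s_i-s_j)$. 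Writing $a:=(1-c)(s_i-s_j)\ge 0$ and $b:=\delta\big(\bid(s_i)-\bid(s_j)\big)$, dominance in \PD~is exactly $|a-b|\le a$ for all winning pairs, i.e.\ $0\le b\le 2a$. The bound $b\ge 0$ holds since $\bid$ is increasing; the bound $b\le 2a$ follows by the mean value theorem once we show
\[
\delta\,(\bid)'(s)\ \le\ 2(1-c)\qquad\text{for every }s\in\osupport\text{ and every }\delta\in(0,\min\{1,2(1-c)\}].
\]

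To prove the displayed bound I would combine the first-order condition satisfied by $\bid$ (equivalently, differentiate \cref{equ:alternative-equilibrium-delta}), $\delta\,(\bid)'(s)\,\Gnk(s) = \big(V(s)-\bid(s)\big)\gnk(s)$, with the identity $V(s)-\bid(s)=\int_0^s V'(y)\big(\Gnk(y)/\Gnk(s)\big)^{1/\delta}\diff y$ read off \cref{prop:equilibrium-bid-delta}. Two ingredients then close it. First, log-concavity of $F$ transfers to the order-statistic c.d.f.\ $\Gnk$ (the density of the $k$-th highest of $n-1$ i.i.d.\ log-concave signals is log-concave, hence so is its c.d.f.), so $\log\Gnk$ is concave and, for $y\le s$, $\Gnk(y)/\Gnk(s)\le\exp\big((y-s)\gnk(s)/\Gnk(s)\big)$, hence $\big(\Gnk(y)/\Gnk(s)\big)^{1/\delta}\le\exp(-(s-y)\lambda)$ with $\lambda:=\gnk(s)/(\delta\Gnk(s))$. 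Second, $V'\le 1$ pointwise: since $V(y)=(1-c)y+c\,\widetilde m(y)$ with $\widetilde m(y)=\E[\tfrac1n\sum_\ell s_\ell\mid s_i=y,\kth{k}(\smi)=y]=\tfrac1n\big(2y+(k-1)\mu_+(y)+(n-1-k)\mu_-(y)\big)$, $\mu_+(y)=\E[s\mid s>y]$, $\mu_-(y)=\E[s\mid s<y]$, we have $V'(y)=(1-c)+\tfrac cn\big(2+(k-1)\mu_+'(y)+(n-1-k)\mu_-'(y)\big)$, and log-concavity of $F$ gives $0\le\mu_+'(y)\le 1$ (decreasing mean residual life: log-concave density $\Rightarrow$ $1-F$ log-concave $\Rightarrow$ IFR $\Rightarrow$ DMRL) and $0\le\mu_-'(y)\le 1$ (equivalent to log-concavity of $\int_0^y F$, which follows from that of $F$); hence $\widetilde m'(y)\le\tfrac1n\big(2+(k-1)+(n-1-k)\big)=1$, so $V'(y)\le 1$.

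Putting these together, $\delta(\bid)'(s)=\tfrac{\gnk(s)}{\Gnk(s)}\int_0^s V'(y)\big(\Gnk(y)/\Gnk(s)\big)^{1/\delta}\diff y\le\tfrac{\gnk(s)}{\Gnk(s)}\int_0^\infty e^{-\lambda u}\diff u=\tfrac{\gnk(s)}{\Gnk(s)\lambda}=\delta\le 2(1-c)$ on the stated range, which yields $b\le 2a$ and hence dominance in \PD. For the strict-inequality requirement, note that a.s.\ at least two winners have distinct signals (as $k\ge 2$ and $F$ is atomless); for such a pair $b>0$ by strict monotonicity of $\bid$, while $b<2a$ because $\delta(\bid)'(s)<\delta$ strictly on $\osupport$ (the exponential estimate for $\Gnk(y)/\Gnk(s)$ is strict, $\log\Gnk$ being strictly concave). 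The crux is the derivative estimate: the dependence on $\delta$ must be kept — a $\delta$-free argument only gives $\delta(\bid)'\le 1$, which is enough solely when $c\le\tfrac12$ — and retaining it is exactly what the exponential bound from log-concavity of $\Gnk$ provides, while the auxiliary bounds $\mu_\pm'\le 1$ are where log-concavity of $F$ genuinely enters.
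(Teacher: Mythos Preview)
Your proof is correct and follows the paper's architecture closely: the reduction to the slope bound $\delta(\bid)'(s)\le 2(1-c)$ is exactly the paper's \cref{prop:delta-dominating-uniform}, and your $V'\le 1$ computation via $\mu_\pm'\le 1$ is essentially the paper's \cref{lem:value-function-bounded}. The one genuine difference is how you extract $(\bid)'\le 1$ from log-concavity. The paper first observes (\cref{lem:private-derivative-bounded}) that $(\bid)'\le 1$ is equivalent to log-concavity of $s\mapsto\int_0^s G^{1/\delta}$, and then establishes the latter via closure properties (\cref{lem:intG-log-concave}). Your route is more direct: you use log-concavity of $G$ itself to get the tangent-line estimate $(G(y)/G(s))^{1/\delta}\le e^{-(s-y)\lambda}$ with $\lambda=g(s)/(\delta G(s))$, and integrate. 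This is a clean, elementary alternative and yields exactly the same bound $\delta(\bid)'\le\delta$; it buys you a shorter argument at the cost of not identifying the structural equivalence $(\bid)'\le 1\Leftrightarrow\int G^{1/\delta}$ log-concave.

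Your closing commentary is a bit misleading. You write that ``a $\delta$-free argument only gives $\delta(\bid)'\le 1$, which is enough solely when $c\le\tfrac12$'', but your own calculation yields the $\delta$-dependent bound $\delta(\bid)'\le\delta$, i.e., $(\bid)'\le 1$ --- and this is precisely what the paper proves in \cref{prop:derivative-beta-bounded}. The $\delta$-dependence you highlight is not a new ingredient relative to the paper; both arguments land on $(\bid)'\le 1$, which suffices for every $\delta\le 2(1-c)$, not just for $c\le\tfrac12$. What is genuinely doing the work in both cases is log-concavity (of $G$ in your tangent bound, of $\int G^{1/\delta}$ in the paper's).
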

\begin{figure}[htp]
    \centering
    \begin{minipage}{0.45\textwidth}
        \includegraphics[scale=0.35,trim={0 1.1cm 0 0},clip]{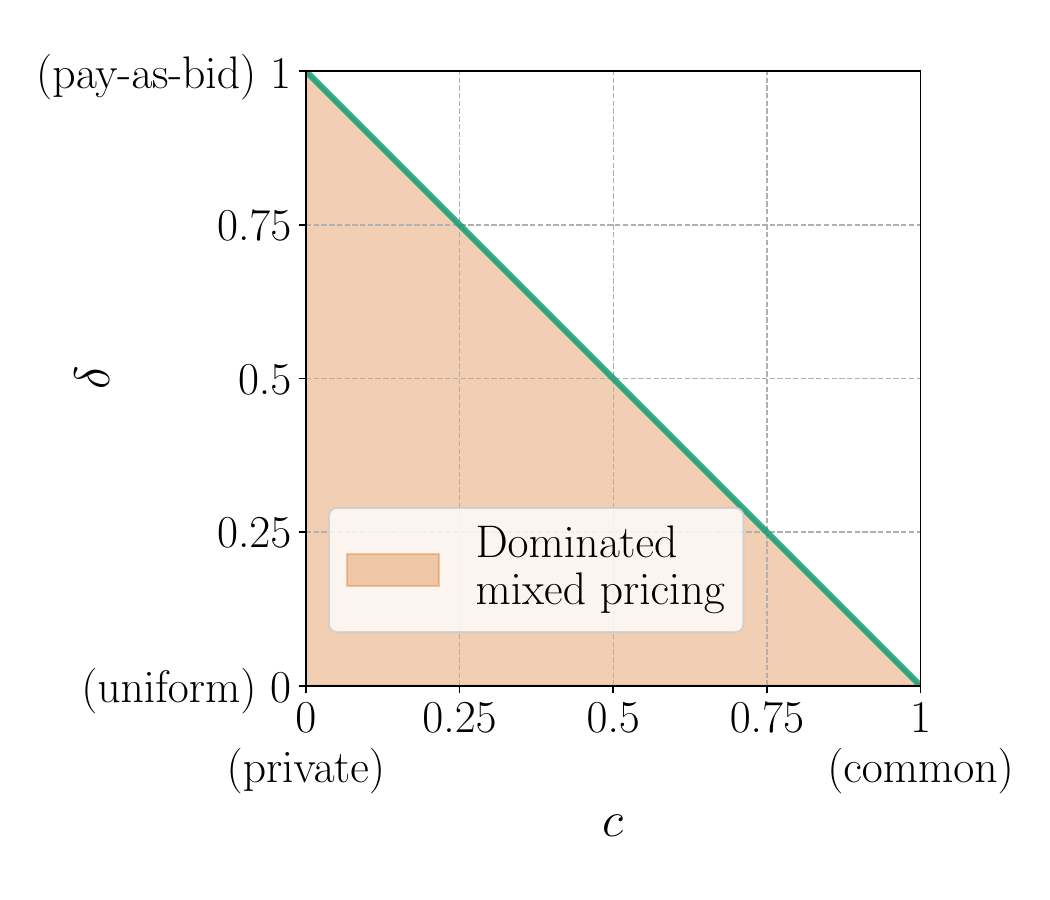}
    \caption{Bounds on equity-optimal combinations of $c$ and $\delta$}
    \label{fig:theorem_2}
    \end{minipage}
    \hspace{1cm}
    \begin{minipage}{0.45\textwidth}
        \includegraphics[scale=0.35,trim={0 1.1cm 0 0},clip]{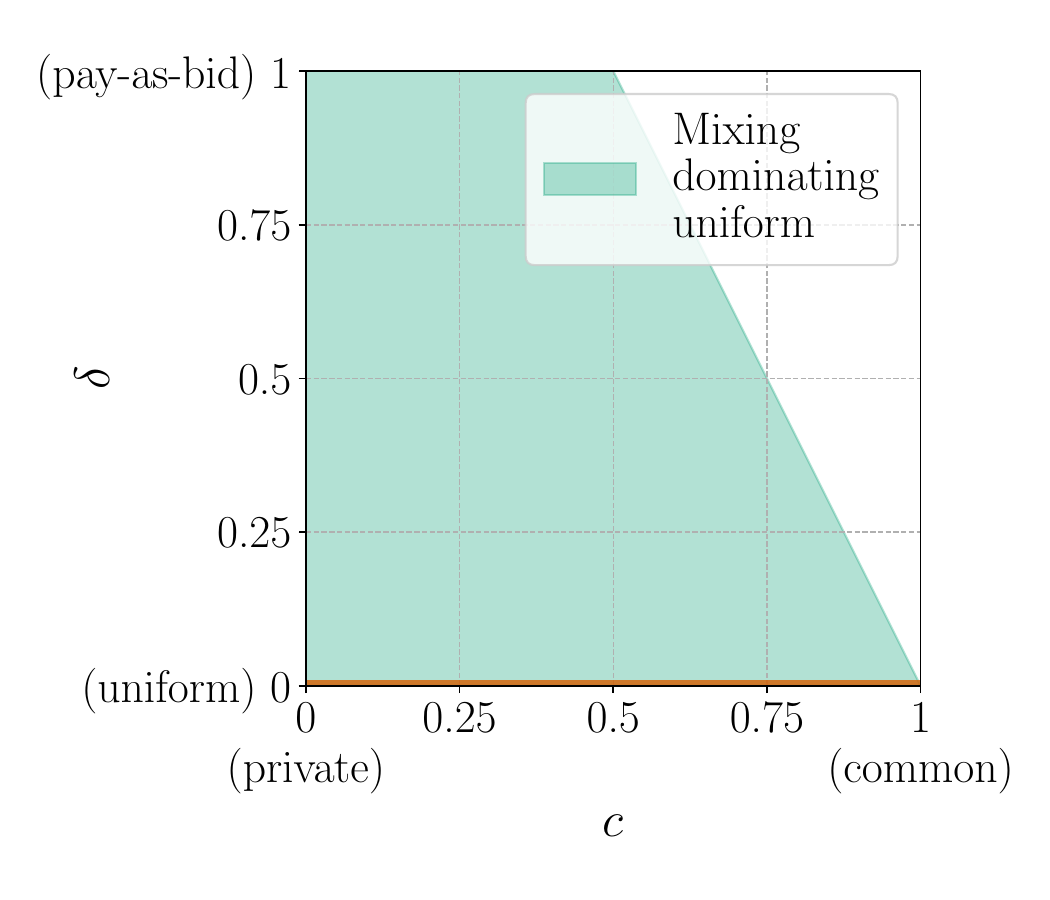}
    \caption{Range of price discrimination dominating uniform pricing}
    \label{fig:theorem_3}
    \end{minipage}
\end{figure}
\cref{theorem:deltas-dominating-uniform} is illustrated in \cref{fig:theorem_3}, in which any pricing rule in the shaded area in green dominates uniform pricing for a given common value $c$.

The intuition behind \cref{theorem:bound-on-optimal-delta} is simple. As we show in \cref{sec:proving-main-theorems}, \PD~are, for any given common value $c$, monotonically decreasing in the extent of price discrimination $\delta$ as long as $\delta$ is between zero and $1-c$. Moreover, we show the equivalence of this result with ex-post utilities that increase in signals. As long as higher signals obtain a higher surplus, more equity can be achieved by taxing higher signals more than lower signal. Because the change in the $\delta$-weighted bid in $\delta$ is increasing in a bidder's signal (as stated in \cref{lem:equilibrium-bid-monotone}), increasing the extent of price discrimination will have the desired effect.

A similar intuition explains \cref{theorem:deltas-dominating-uniform}, where the benefit of higher price discrimination compared to the absence of price discrimination can be realized up to a certain threshold. As long as utilities are increasing in signals, increasing price discrimination results in surplus taxation that benefits equity (cf.~\cref{theorem:bound-on-optimal-delta}). We show in \cref{sec:proving-main-theorems} that increasing ex-post utilities is equivalent to the slope of equilibrium bid functions being bounded $(1-c)/\delta$. With steeper bid functions, the utilities might decrease in the signals. So, while increasing price discrimination might locally, in a neighborhood of $\delta$, increase \PD, price discrimination is still beneficial compared to uniform pricing.  However, for $\delta \geq 2(1-c)$, the bid functions are so steep that an increase in price discrimination results in an absolute utility gap between a high signal and a low signal bidder that is greater than under uniform pricing. With such price discrimination, the higher signal bidder is worse off than the low signal bidder.

With \cref{theorem:bound-on-optimal-delta,theorem:deltas-dominating-uniform}, we can now revisit the question: In terms of equity, should one use pay-as-bid pricing if bidders' values are pure private values? The answer is yes if the signal distributions are log-concave. Moreover, if the common value is small, pay-as-bid pricing is guaranteed to be more equitable than uniform pricing. We state this formally in the corollary below. 
\begin{corollary}\label{cor:when-PAB-is-optimal}
    Assume signals are drawn from a log-concave distribution. Then, for pure private values, pay-as-bid pricing is dominant in \PD, and for a common value $c < \frac{1}{2}$, pay-as-bid pricing dominates uniform pricing in \PD.
\end{corollary}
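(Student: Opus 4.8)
The plan is to derive both halves of the corollary directly from \cref{theorem:bound-on-optimal-delta,theorem:deltas-dominating-uniform} by specialising the common-value proportion $c$, so that essentially no new computation is needed.

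For the first half, I would set $c = 0$ (pure private values), so that the private-value proportion is $1 - c = 1$. \cref{theorem:bound-on-optimal-delta} then states that the mixed auction with price discrimination $\delta = 1$ — which is exactly pay-as-bid pricing — is equity-preferred among all mixed auctions with price discrimination strictly less than $1$. Since the class of mixed auctions is parameterised by $\delta \in [0,1]$, the auctions with $\delta < 1$ are precisely all mixed auctions other than pay-as-bid itself; hence pay-as-bid dominates every other mixed auction in \PD, which is the definition of being dominant in \PD~(equivalently, equity-preferred) within this class.

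For the second half, I would fix $c < \frac{1}{2}$, so that $2(1-c) > 1$ and therefore $\min\{1, 2(1-c)\} = 1$, with the threshold $2(1-c)$ lying \emph{strictly} above the pay-as-bid value $\delta = 1$. \cref{theorem:deltas-dominating-uniform} then gives that uniform pricing is dominated in \PD~by $\delta$-mixed pricing for every $\delta$ in the range up to $\min\{1, 2(1-c)\} = 1$. Because $c < \frac{1}{2}$ places $\delta = 1$ strictly inside this range (not merely at its boundary), the inequality underlying that theorem applies at $\delta = 1$: for two winning signals $s_i > s_j$ (which in a given realisation necessarily share the same first rejected bid) the ex-post utility gap equals $|(1-c)(s_i - s_j) - \delta(\bid(s_i) - \bid(s_j))|$, and this is weakly below the uniform-price gap $(1-c)(s_i - s_j)$ exactly when the average slope of $s \mapsto \delta\bid(s)$ on $[s_j, s_i]$ does not exceed $2(1-c)$; under log-concavity this holds for all $\delta \le 2(1-c)$, in particular at $\delta = 1$, with at least one comparison strict. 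Hence pay-as-bid dominates uniform pricing in \PD.

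The step I expect to be the genuine obstacle is the boundary status of $\delta = 1$ in the second half: pay-as-bid is not a \emph{strictly} mixed auction, so \cref{theorem:deltas-dominating-uniform} cannot be quoted verbatim when its threshold equals $1$ exactly (the knife-edge $c = \frac{1}{2}$). The hypothesis $c < \frac{1}{2}$, equivalently $2(1-c) > 1$, is exactly what resolves this — it keeps $\delta = 1$ in the interior of the dominating range, so the strict inequalities in the proof of \cref{theorem:deltas-dominating-uniform}, and thus the ``one strict pairwise comparison'' demanded by \cref{def:pairwise-differences}, survive. If one prefers a fully self-contained argument for $c < \frac{1}{2}$, the alternative is to bound the average slope of the pay-as-bid equilibrium bid $\bid[1]$ directly via \cref{lem:equilibrium-bid-monotone} together with log-concavity, showing it stays below $2(1-c)$, which yields the same comparison of ex-post utility gaps.
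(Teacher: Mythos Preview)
Your proposal is correct and follows exactly the paper's approach: the paper's proof is just two sentences, specialising \cref{theorem:bound-on-optimal-delta} to $c=0$ for the first claim and \cref{theorem:deltas-dominating-uniform} to $\delta=1$ for the second. You are in fact more careful than the paper about the boundary issue---the theorem's phrasing ``strictly mixed'' technically excludes $\delta=1$, and your observation that $c<\tfrac12$ makes the slope condition $\delta \le 2(1-c)$ hold \emph{strictly} at $\delta=1$ (so the underlying \cref{prop:delta-dominating-uniform} applies and the required strict pairwise comparison survives) is a legitimate refinement the paper glosses over.
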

The first part of the corollary follows by setting $c=0$ in \cref{theorem:bound-on-optimal-delta}. The second part follows by setting $\delta=1$ in \cref{theorem:deltas-dominating-uniform}. Our numerical experiments in \cref{sec:numerical-experiments} show that, for $c<\frac{1}{2}$, pay-as-bid pricing in fact minimizes \wev~for several common distributions. The intuition in the pure private value case carries through under the qualifying assumption of log-concave signals, and it may fail for very concentrated signal distributions. In the latter case, it is important that sufficient probability mass is gathered around higher signals, inducing a bidding equilibrium in which ex-post utilities are decreasing in signals for sufficiently many signal realizations.\footnote{For example, with a $\beta$-distribution as steep as illustrated in \cref{fig:WEV-optimal-various-distrib}, \cref{sec:discussion}, clearly violating log-concavity, pay-as-bid pricing is still optimal for a range of common values including pure private values.}

For specific signal distributions, we can extend the region where \PD~are monotonically decreasing slightly beyond the diagonal $1-c$, as exemplified in the following proposition.
\begin{proposition}\label{proposition:better-bounds}
    For uniformly distributed signals, any pricing dominant in \PD~contains a discriminatory proportion of at least $\frac{2n(1-c)}{2n-c(n-2)}$, and for exponentially distributed signals at least $\frac{2n(1-c)}{2n-c(n-(k+1))}$.
\end{proposition}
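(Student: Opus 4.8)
The plan is to push the monotonicity argument behind \cref{theorem:bound-on-optimal-delta} past the diagonal $\delta=1-c$. Recall from \cref{sec:proving-main-theorems} that, for a fixed common value $c$, the \PD~of the $\delta$-mixed auction are monotonically decreasing in $\delta$ on any interval $[0,\bar\delta]$ on which the equilibrium bid slope obeys $(\bid)'(s)\le (1-c)/\delta$ for all $s$ in the open support (equivalently, ex-post winner utilities are non-decreasing in signals: winners share the same price-setting signal and $v(s_i,\smi)-v(s_j,\smj)=(1-c)(s_i-s_j)$, so the $\delta$-derivative of $u_i^\delta-u_j^\delta$ is $-\partial_\delta\big(\delta\bid(s_i)-\delta\bid(s_j)\big)<0$ by \cref{lem:equilibrium-bid-monotone}). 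Hence any $\delta^*$ dominant in \PD~must satisfy $\delta^*\ge\bar\delta$, since a smaller $\delta^*$ is strictly dominated by $\bar\delta$. So it suffices to exhibit an admissible $\bar\delta$ equal to $\tfrac{2n(1-c)}{2n-c(n-2)}$ for uniform signals and $\tfrac{2n(1-c)}{2n-c(n-(k+1))}$ for exponential signals.

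I would then assemble two ingredients. First, differentiating the closed form of \cref{prop:equilibrium-bid-delta} (equivalently, its ODE $(\bid)'(s)=\tfrac{g(s)}{\delta G(s)}\big(V(s)-\bid(s)\big)$, with $G:=\Gnk$, $g:=\gnk$) gives $(\bid)'(s)=\tfrac{g(s)}{\delta\,G(s)^{1+1/\delta}}\int_0^s V'(y)\,G(y)^{1/\delta}\diff y$. Second, since the uniform and exponential laws have log-concave densities, $G=\Gnk$ is log-concave \citep{Bagnoli-2005}; the tangent-line inequality $\log G(y)\le\log G(s)+\tfrac{g(s)}{G(s)}(y-s)$ for $y\le s$, raised to the power $1/\delta$ and integrated, yields $\int_0^s G(y)^{1/\delta}\diff y\le \tfrac{\delta\,G(s)^{1+1/\delta}}{g(s)}$. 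Combining, $(\bid)'(s)\le V'(0)$ for every $s$ as soon as $V'(y)\le V'(0)$ on $[0,s]$. It remains to compute $V'$: from \cref{ass:bidder-values}, conditioning on the $k$-th highest opponent signal, $V(y)=(1-c)y+\tfrac{c}{n}\big(2y+(k-1)\,\mathbb E[s\mid s>y]+(n-1-k)\,\mathbb E[s\mid s<y]\big)$. For uniform signals both conditional means are affine, so $V'\equiv a:=1-\tfrac{c(n-2)}{2n}$. For exponential signals, memorylessness gives $\mathbb E[s\mid s>y]=y+\mathbb E[s]$, while $\tfrac{\diff}{\diff y}\mathbb E[s\mid s<y]\le\tfrac12$ for all $y$, with equality only at $y=0$ (equivalently, after $t=e^{-y}$, the elementary inequality $-2t\ln t\le 1-t^2$ on $(0,1]$), so $V'(y)\le V'(0)=1-\tfrac{c(n-(k+1))}{2n}$. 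In both cases $(\bid)'(s)\le V'(0)$ for every $s$ and every $\delta$, hence $(\bid)'(s)\le(1-c)/\delta$ whenever $\delta\le(1-c)/V'(0)$; thus $\bar\delta=(1-c)/V'(0)$ is admissible and equals the claimed values (and since $V'(0)\le 1$, these bounds indeed exceed $1-c$).

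The main obstacle is the exponential case: one must verify that $\sup_y \tfrac{\diff}{\diff y}\mathbb E[s\mid s<y]=\tfrac12$, attained as $y\to 0^+$, which is precisely where the index $n-(k+1)$ enters (the uniform case is comparatively trivial because $V'$ is constant). One should also tend to the strictness bookkeeping: the integrated tangent-line bound $\int_0^s e^{\lambda(y-s)}\diff y=\tfrac{1-e^{-\lambda s}}{\lambda}<\tfrac1\lambda$ is strict, and $V'(y)<V'(0)$ for $y>0$ in the exponential case, so $(\bid)'(s)<V'(0)\le(1-c)/\delta$ for a.e.\ $s$, which is what guarantees that $\bar\delta$ strictly dominates every $\delta^*<\bar\delta$ in at least one pairwise comparison as \cref{def:pairwise-differences} demands.
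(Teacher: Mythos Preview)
Your proposal is correct and follows essentially the same route as the paper: bound $(\bid)'(s)$ by $\sup_s V'(s)$ via log-concavity, compute $\sup_s V'(s)$ explicitly for the two distributions, and set $\bar\delta=(1-c)/\sup_s V'(s)$. The paper's proof is terse---it simply invokes ``the same argument as in the proof of \cref{theorem:bound-on-optimal-delta}'' and states the values of $\sup V'$---whereas you spell out the details, including the verification that for exponential signals $V'$ is maximized at $0$ (the inequality $-2t\ln t\le 1-t^2$). The one cosmetic difference is that you derive $\int_0^s G^{1/\delta}\le \delta G^{1+1/\delta}/g$ directly from the tangent-line inequality for the concave function $\log G$, while the paper obtains the same bound by citing the closure of log-concavity under integration (\cref{lem:private-derivative-bounded,lem:intG-log-concave}); these are equivalent and both rest on log-concavity of $G$.
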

\begin{proof3}
    See \cref{proof:prop:better-bounds}.\footnote{The proof should be read in conjunction with \cref{theorem:bound-on-optimal-delta} as it follows a similar reasoning.}
\end{proof3}
Note that both bounds converge to $\frac{1-c}{1-c/2}$ as the number of bidders goes to infinity (and the number of items $k$ is kept constant). We illustrate the bound for signals uniformly distributed on $[0,1]$ and $n=3$ bidders and $k=2$ items in \cref{fig:plot-OPT-LB-uniform} below, together with the equity-preferred pricing in terms of \wev. The figure demonstrates that, for high values of $c$, this bound may be a good heuristic for the optimal pricing rule.

\begin{figure}[htp]
    \centering
    \includegraphics[scale=0.4,trim={0 1.1cm 0 0},clip]{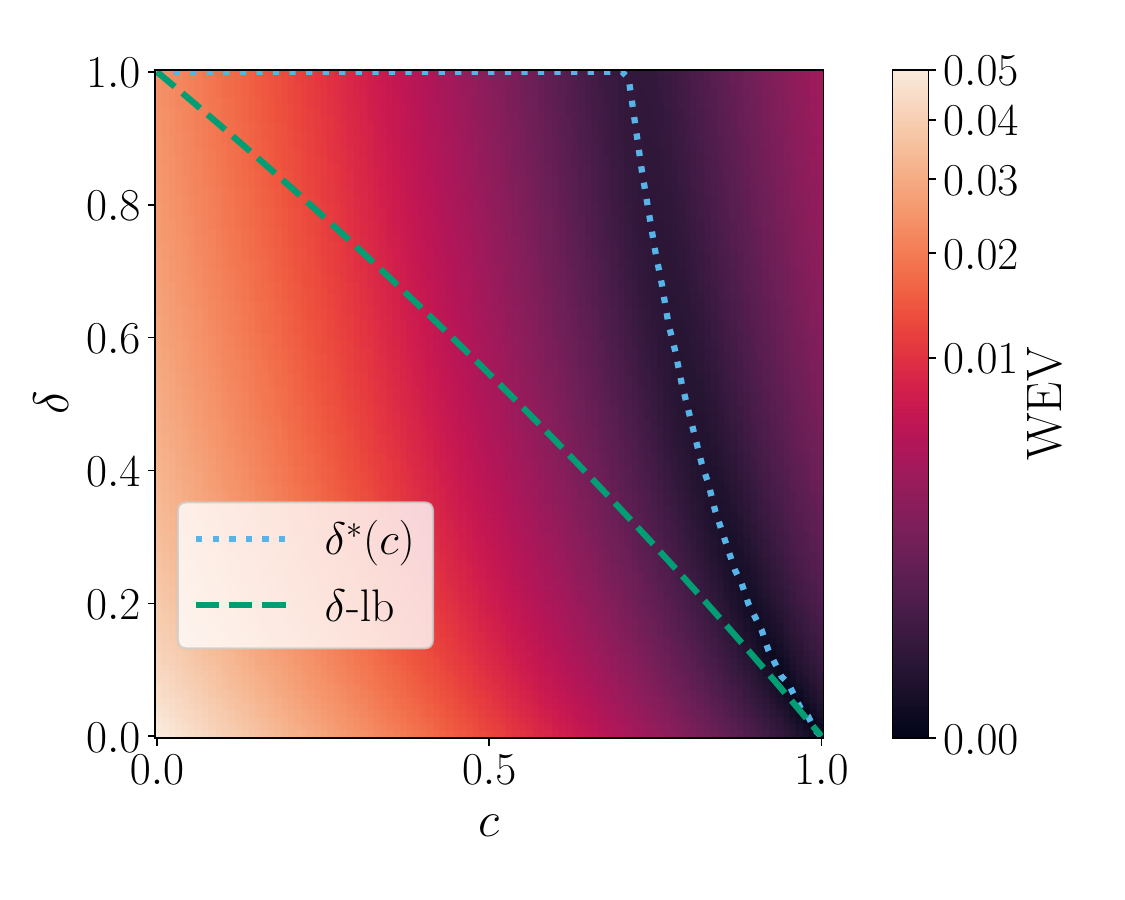}
    \caption{A lower bound $\delta$-lb for pricing candidates dominant in \PD and the equity-optimal pricing $\delta^*$ in terms of \wev}
    \label{fig:plot-OPT-LB-uniform}
\end{figure}

\section{Discussion}\label{sec:discussion}

Our main results hold for all equity metrics that are based on \PD, and, as discussed in the Introduction and in \cref{sec:surplus-equity}, the winners' empirical variance is particularly attractive as an aggregated one-dimensional metric. We illustrate \wev~further in a series of numerical experiments and discuss how it relates to the within-bidder variation of surplus, as well as the empirical variance of surplus between all bidders. We also explain why the regularity assumption of log-concavity is necessary for our argument.

\subsection{Numerical experiments}
\label{sec:numerical-experiments}

We further illustrate the effect of the common value on surplus equity by presenting several numerical examples. Similarly to \cref{fig:plot-OPT-LB-uniform}, we compute the \wev-minimal pricing $\delta^*(c)$ for any given proportion of the private-common value $c$. We also illustrate bounds for \wev-minimal pricing and the condition of monotone ex-post utility (MEU).
All of our experiments are based on equilibrium bid functions, whose calculation is computationally very expensive. Thus, we rely on theoretical simplifications, such as \cref{lem:value-function-bounded} and \cref{lem:alternative-variance-2} (\cref{app:sec:numerical-experiments}). The simulations are performed through numerical integration of our analytical formulae.\footnote{The efficiency and accuracy of the code rely on various techniques. Most importantly, we rewrite all multidimensional expectations as nested one-dimensional integrals (with variable bounds), which we compute by integrating polynomial interpolations. Second, the code ensures that each quantity is computed at most once, using memorization. Integration is not computationally heavy at all and achieves high precision.} Finally, some quantities (such as bidding functions) have multiple analytical expressions, among which we choose the most appropriate for accuracy and speed, depending on the value of the signal (e.g., \cref{equ:alternative-equilibrium-delta} can be integrated more efficiently than \cref{equ:equilibrium-bid-delta}, but is less accurate for small signals). Our code is available on \href{https://github.com/simonfinster/equitable_auctions}{github}.

We consider three signal distributions, a truncated exponential and a truncated normal distribution (both log-concave), as well as a Beta distribution with shape parameters $(0.5, 0.5)$, which is not log-concave. \wev-minimal pricing, a lower bound on the minimizer, and combinations of common value shares and mixed pricing for which MEU holds are shown in \cref{fig:WEV-optimal-various-distrib} for a market with $n=10$ bidders and $k=4$ items.

\begin{figure}[htp]
    \centering
        \hspace{-1cm}
        \begin{subfigure}[t]{0.3\textwidth}
            \includegraphics[scale=0.3]{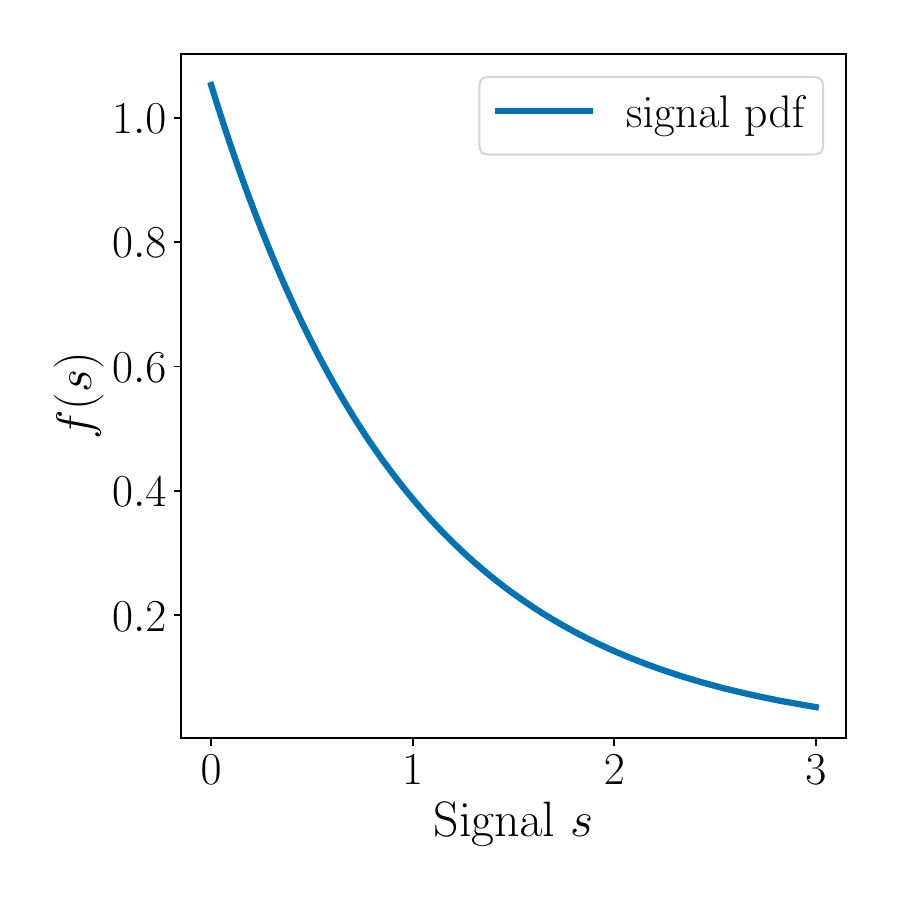}
        \end{subfigure}
        \hspace{0.18cm}
        \begin{subfigure}[t]{0.3\textwidth}
            \includegraphics[scale=0.3]{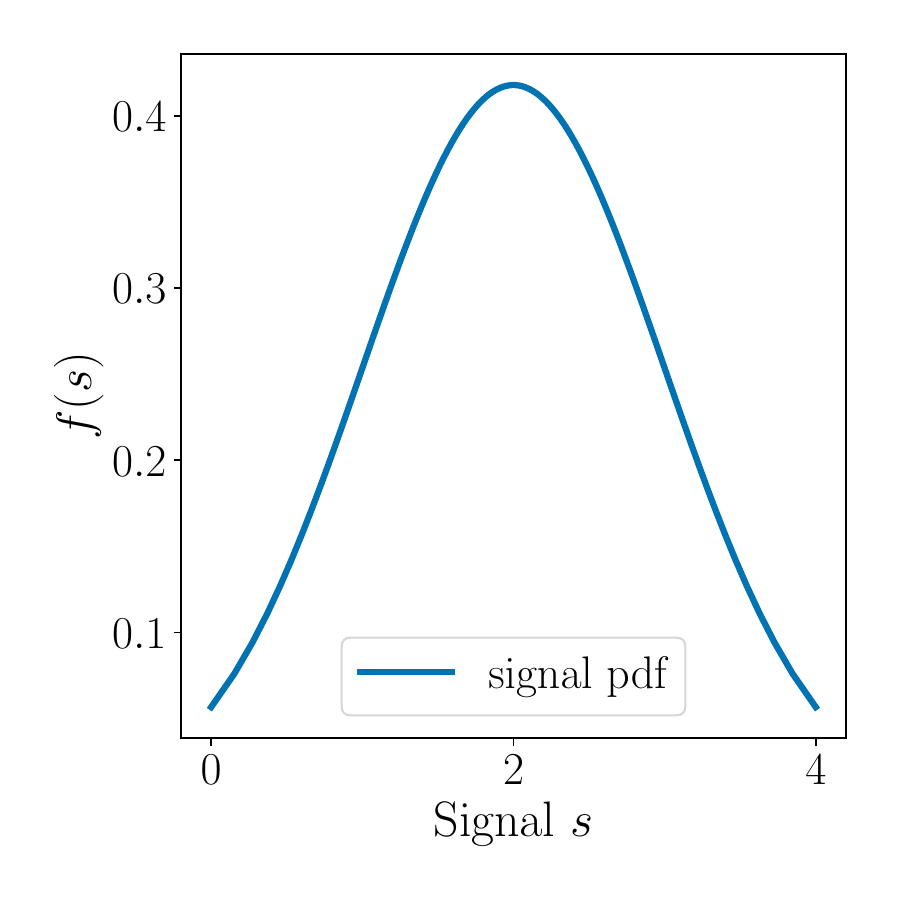}
        \end{subfigure}
        \hspace{0.2cm}
        \begin{subfigure}[t]{0.3\textwidth}
            \includegraphics[scale=0.3]{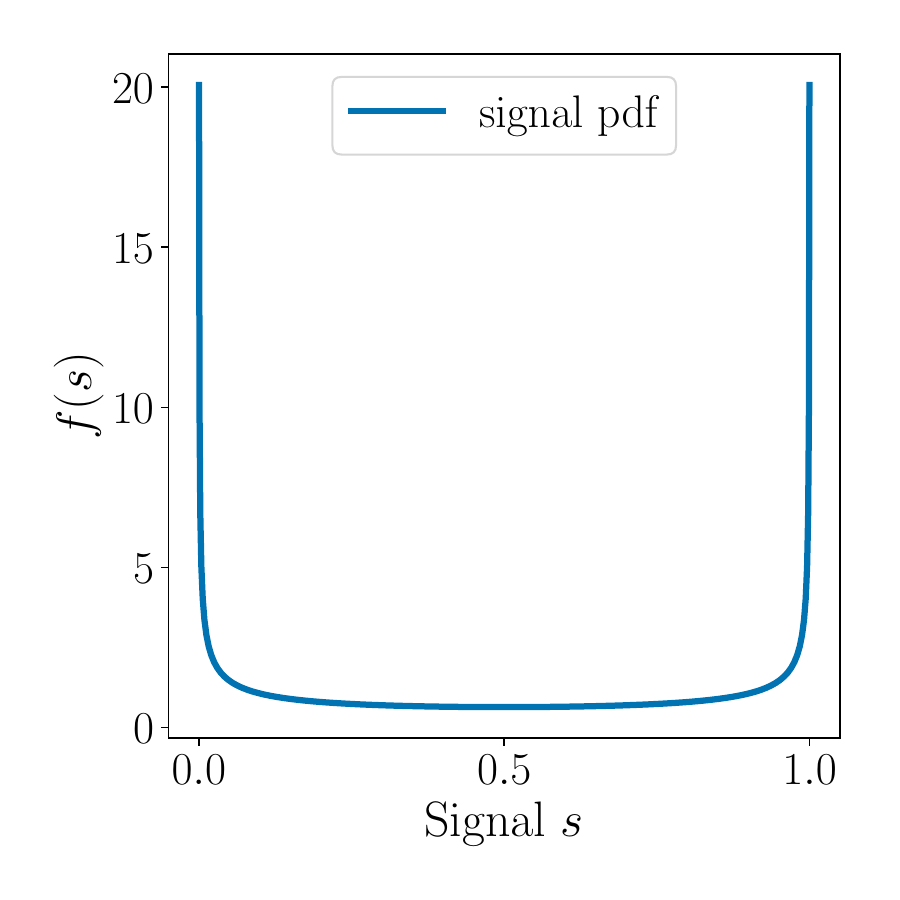}
        \end{subfigure}   \\
        \hspace{-0.5cm}
        \begin{subfigure}[t]{0.3\textwidth}
            \captionsetup[subfigure]{font=footnotesize,margin={0.5cm,0.5cm}}
            \subcaptionbox*{Truncated exponential}{%
            \includegraphics[scale=0.3, trim={0 0.3cm 1.7cm 0},clip]{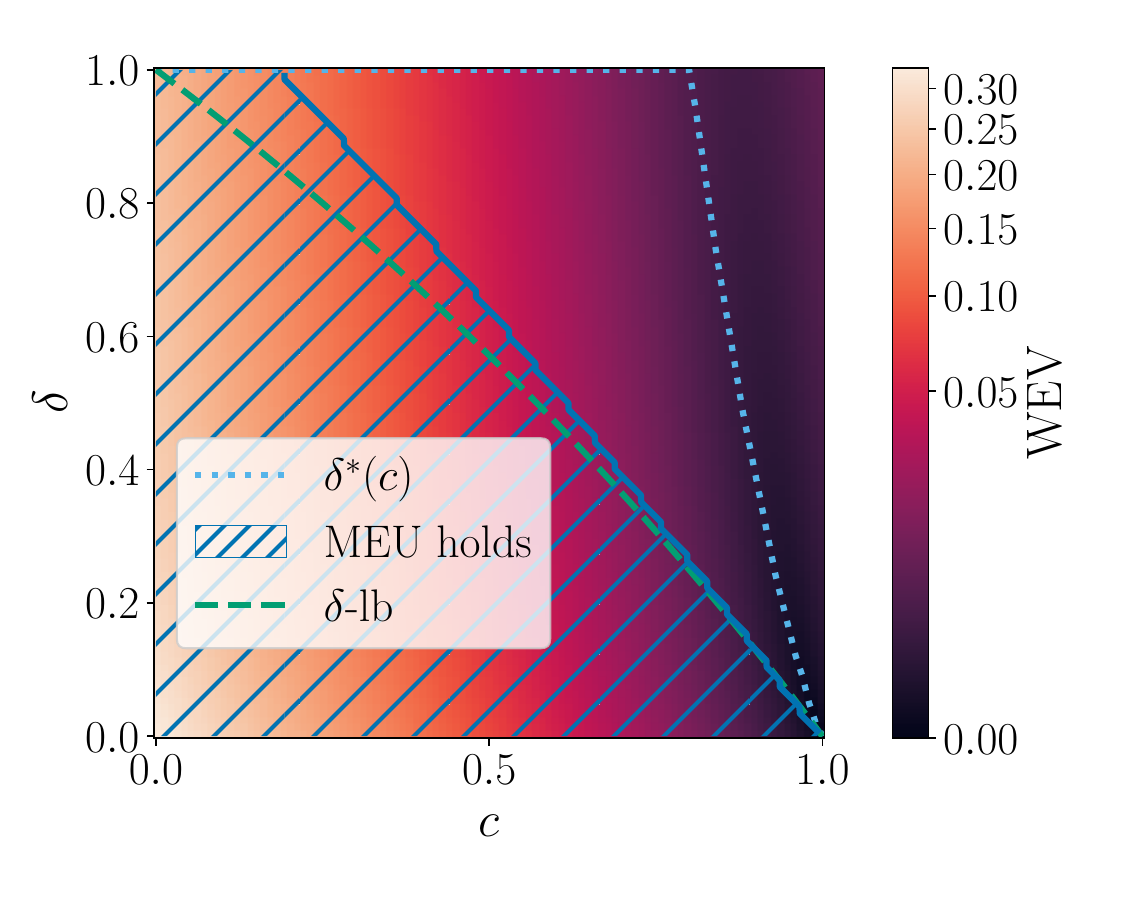}}
        \end{subfigure}
        \hspace{0.5cm}
        \begin{subfigure}[t]{0.3\textwidth}
            \captionsetup[subfigure]{font=footnotesize,margin={0.5cm,0.5cm}}
            \subcaptionbox*{Truncated normal}{%
            \includegraphics[scale=0.3, trim={1.1cm 0.3cm 1.7cm 0},clip]{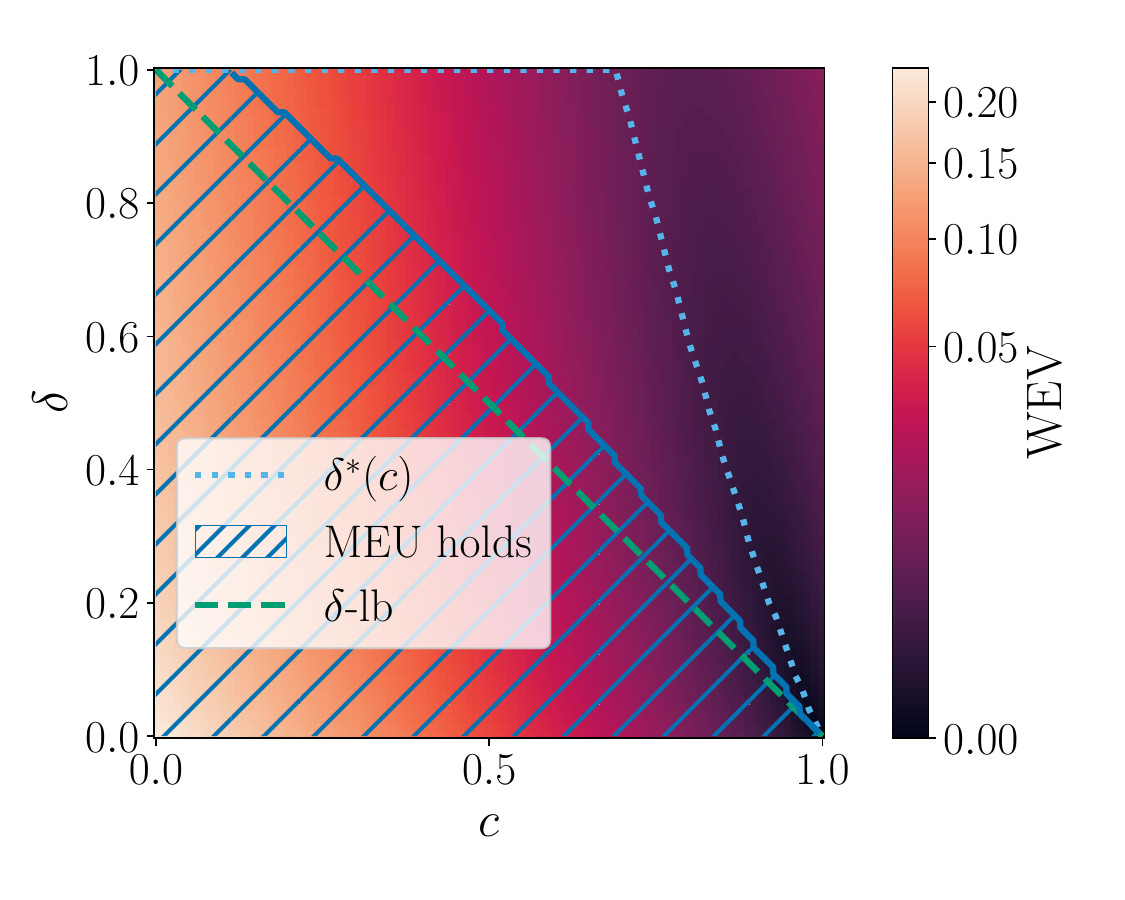}}
        \end{subfigure}
        \hspace{0.2cm}
        \begin{subfigure}[t]{0.3\textwidth}
            \captionsetup[subfigure]{font=footnotesize,margin={0.5cm,0.5cm}}
            \subcaptionbox*{Beta$(0.5,0.5)$}{%
            \includegraphics[scale=0.3, trim={1.1cm 0.3cm 0 0},clip]{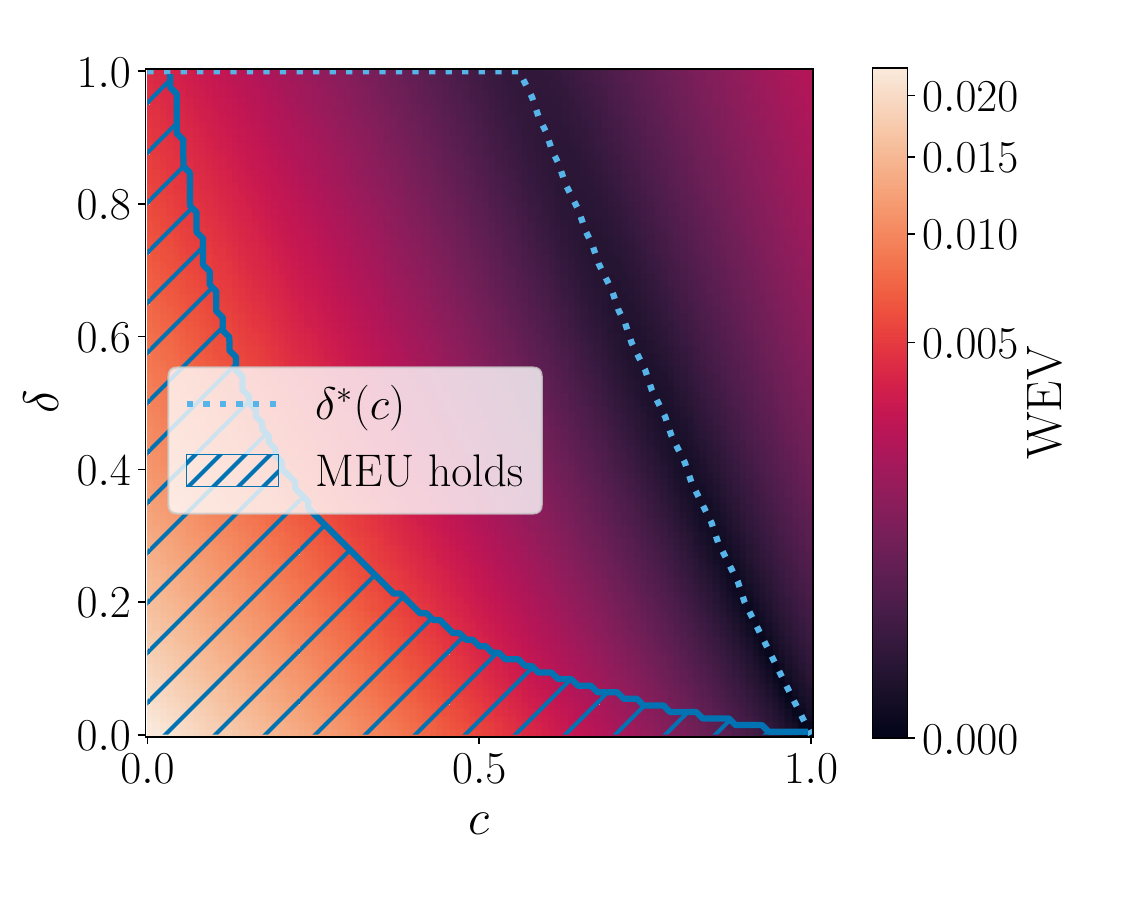}}
        \end{subfigure}
    \caption{\wev-minimizing design $\delta^*(c)$, monotone ex-post utility (MEU), and lower bounds on $\delta^*(c)$ ($\delta$-lb) for truncated exponential, truncated normal, and Beta$(0.5,0.5)$ signal distributions}
    \label{fig:WEV-optimal-various-distrib}
\end{figure}

For the truncated exponential distribution, we show the lower bound of $\frac{2n(1-c)}{2n-c(n-(k+1))}$ (cf.,~\cref{proposition:better-bounds}) on $\delta^*(c)$, and for the normal distribution we show the general lower bound $1-c$ (cf.,~\cref{theorem:bound-on-optimal-delta}). Each of these bounds dominates any extent of price discrimination below it. Note that for the Beta distribution, we cannot provide a theoretical lower bound on the \wev-minimal design $\delta^*$, as the distribution is not log-concave. 
However, the region where MEU holds can be determined numerically, and its ``frontier'' provides a lower bound for the \wev-minimal design $\delta^*$. Illustrating this for all three distributions, we observe that the area is much smaller for the Beta distribution. However, MEU is only a sufficient condition for the monotonicity of \wev~(while it is necessary and sufficient for the monotonicity of \PD). From the heat maps in \cref{fig:WEV-optimal-various-distrib}, it is evident that \wev~is monotone in $\delta$ for any given $c$ up to $\delta^*$.

Finally, we show the \wev-minimal pricing rule $\delta^*(c)$ for each signal distribution. The curve is qualitatively similar in each plot. In line with \cref{theorem:bound-on-optimal-delta} --- noting that the exponential and normal distribution are log-concave --- the figure illustrates that with a high private value component (low $c$), pay-as-bid pricing ($\delta=1$) minimizes \wev; with higher common value components (high $c$),  strictly mixed pricing for some $\delta\in(0,1)$ minimizes \wev~(cf., \cref{theorem:mixed-minimizing-WEV}); and with a pure common value ($c=1$), uniform pricing ($\delta=0$) minimizes \wev~(cf., \cref{theorem:pure-common}). Analogous interpretations hold for the Beta distribution, although we cannot give theoretical guarantees.
    
For small common values, MEU holds for any $\delta$ and thus pay-as-bid pricing is dominant in \PD~(cf., \cref{prop:PD-decreasing}). Even for larger common value parameters the \wev-minimal pricing is still pay-as-bid, but eventually strictly mixing ($\delta\in(0,1)$) is required to minimize \wev. For a pure common value, uniform pricing is \wev-minimal regardless of the signal distribution.
Notice also that \wev~at the minimal $\delta^*$ decreases in $c$. Naturally, with a higher common value share, bidders' values given different signal realizations as well the corresponding bids move closer together, thus explaining smaller differences in utilities (ex-post and in expectation).

\subsection{Variance and risk preferences}\label{sec:risk-aversion}

Surplus equity and distributional concerns are distinct from questions of within-agent variation and associated risk preferences. An appropriate measure to assess the latter is, e.g.,~the ex-ante variance of bidder surplus. While the two notions are distinct, the measures are linked through the covariance (see also \cref{lem:alternative-variance}). In addition, for the pure private value setting, we derive the following result:
\begin{proposition}\label{prop:ex-ante-variance}
    With pure private values ($c=0$), the pay-as-bid auction minimizes the ex-ante variance of surplus among all standard auctions with increasing equilibrium bid functions.
\end{proposition}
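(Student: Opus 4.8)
The plan is to use revenue equivalence to pin down the interim (and hence ex-ante) expected surplus across the whole class, reduce the problem to minimizing a second moment, and then observe that pay-as-bid is special because it makes a winner's ex-post surplus a \emph{deterministic function of their own signal}, which is exactly the pointwise minimizer of that second moment.

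First I would set up the reduction. With $c=0$ we have $v(s_i,\smi)=s_i$ and $\widetilde V(x,y)=x$, so for any standard winners-pay auction in a symmetric increasing equilibrium the envelope characterization recalled in \cref{sec:setup} gives $\partial_1 U_i(s_i,s_i)=\Gnk(s_i)$ and $U_i(0)=0$, hence $U_i(s_i)=\int_0^{s_i}\Gnk(x)\diff x$ --- the \emph{same} function of $s_i$ for every such auction (here $\Gnk(x)$ is also exactly bidder $i$'s winning probability when $s_i=x$, independent of the pricing rule). Consequently $\E_{s_i}[U_i(s_i)]=\E_{\ssb}[u_i(\ssb)]$ is identical across the class, so minimizing $\var_{\ssb}[u_i(\ssb)]$ is the same as minimizing the second moment $\E_{\ssb}[u_i(\ssb)^2]$, and since signals are independent it suffices to minimize $\E[u_i^2\mid s_i=x]$ for (almost) every fixed $x$.

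Second I would prove a pointwise lower bound. Write $W=\ind{x>\kth{k}(\smi)}$ for the winning event, so $\Pr[W=1\mid s_i=x]=\Gnk(x)$ and $u_i=0$ when $W=0$. Then conditional Jensen gives, whenever $\Gnk(x)>0$,
\[
\E[u_i^2\mid s_i=x]=\Gnk(x)\,\E\!\big[u_i^2\mid s_i=x,\,W=1\big]\ \ge\ \Gnk(x)\,\E\!\big[u_i\mid s_i=x,\,W=1\big]^2=\frac{U_i(x)^2}{\Gnk(x)},
\]
the identity on the right because $\E[u_i\mid s_i=x]=U_i(x)$ is carried entirely by the winning event; the case $\Gnk(x)=0$ is trivial since then $u_i=0$ a.s. Equality holds iff the winner's surplus $u_i=x-p_i(x,\smi)$ is a.s.\ constant in $\smi$. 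Third, I would check that pay-as-bid attains this bound for every $x$: its winner pays $\bid[1](x)$, so $u_i=x-\bid[1](x)$ is deterministic given $s_i=x$, and $U_i(x)=\Gnk(x)\big(x-\bid[1](x)\big)$ gives $\E[u_i^2\mid s_i=x]=\Gnk(x)\big(x-\bid[1](x)\big)^2=U_i(x)^2/\Gnk(x)$. Integrating over $x$ then shows pay-as-bid minimizes $\E[u_i^2]$, hence $\var_{\ssb}[u_i(\ssb)]$, over the class.

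The main obstacle is essentially bookkeeping rather than depth: one must make sure revenue equivalence pins down the whole interim utility function $U_i(\cdot)$ (not merely the ex-ante expected payment) for every standard winners-pay auction with an increasing equilibrium --- which is what the $U_i(0)=0$ normalization plus the envelope formula deliver --- and one must handle the null set $\{x:\Gnk(x)=0\}$ separately. Note that no ex-post individual rationality is needed: the Jensen step only uses that $u_i$ vanishes off the winning event.
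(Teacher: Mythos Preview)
Your proof is correct and rests on the same core idea as the paper's: revenue equivalence fixes the interim expected surplus $U_i(\cdot)$ across the class, and pay-as-bid is singled out because the winner's ex-post surplus is a deterministic function of the own signal, which minimizes the interim second moment (hence, via the law of total variance, the ex-ante variance).

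The packaging differs slightly. The paper writes an explicit decomposition $u_i=u_i^D+\delta$ with $\E_{\smi}[\delta\mid s_i]=0$, expands $\E_{\smi}[u_i^2]$, and checks that the cross term and $\E[\delta^2]$ are nonnegative; this is a constructive ``signal plus mean-zero noise'' argument. You instead apply conditional Jensen on the winning event to get the lower bound $\E[u_i^2\mid s_i=x]\ge U_i(x)^2/\Gnk(x)$ and verify that pay-as-bid attains it. Your route is marginally cleaner in that it sidesteps any sign check on the cross term (and, as you note, does not invoke $s_i-\bid[1](s_i)\ge 0$), while the paper's decomposition makes the ``extra variance'' $\E[\delta^2]$ explicit. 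Both arrive at the same pointwise interim conclusion and then integrate; the final appeal to the law of total variance is identical.
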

\begin{proof3}
    See \cref{proof:ex-ante-variance}.
\end{proof3}
Because of revenue equivalence, note that the previous proposition also implies that $\E[u_i^2]$ is minimal in the pay-as-bid auction among standard auctions. The second moment of surplus links the winners' empirical variance and the empirical variance among all bidders, as shown in \cref{lem:connecting-variances} in the Appendix. As a consequence of \cref{lem:connecting-variances}, surplus equity rankings with respect to the winners' empirical variance and the empirical variance among all bidders may not be equivalent. However, applying \cref{prop:delta-dominating-uniform} to the pure private value case, we have the following corollary:
\begin{corollary}\label{cor:empvar-among-all-bidders}
    Assuming pure private values ($c=0)$, consider any $\delta$-mixed auction, $\delta\in(0,1]$, and suppose that the equilibrium bid $\bid$ satisfies $ \pdbdeltas \leq \frac{2}{\delta}$ for all signals $s\in\support$. Then, the empirical variance (among all bidders) is lower for $\delta$-mixed pricing than for uniform pricing.
\end{corollary}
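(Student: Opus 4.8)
The plan is to split the claim into a comparison among winners and a comparison between winners and losers, using the identity linking $\ev$, $\wev$ and the second moment of surplus. Because the auctions are winners-pay, a losing bidder always realizes utility exactly $0$; writing the double sum defining $\ev$ as a sum over winner--winner, winner--loser and loser--loser pairs and taking expectations gives (this is \cref{lem:connecting-variances})
\[
    \ev \;=\; \frac{k(k-1)}{n(n-1)}\,\wev \;+\; \frac{n-k}{n-1}\,\E[u_1^2],
\]
where $u_1$ is bidder $1$'s realized utility, equal to $0$ when bidder $1$ loses. It therefore suffices to show that, on passing from uniform pricing to the $\delta$-mixed auction, the $\wev$ term strictly decreases while the $\E[u_1^2]$ term does not increase.

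For the $\wev$ term I would apply \cref{prop:delta-dominating-uniform}. With $c=0$, the hypothesis $\partial\beta^\delta/\partial s\le 2/\delta$ is precisely that proposition's regularity condition (the threshold $2(1-c)/\delta$ specialized to $c=0$), so it yields that the $\delta$-mixed outcome dominates the uniform outcome in pairwise differences among winners: for almost every signal profile, $|u_i^\delta-u_j^\delta|\le|u_i^0-u_j^0|$ for every winning pair and with strict inequality for at least one pair. Since $\wev$ is the expectation of $\tfrac{1}{k(k-1)}\sum_{i,j}\tfrac{(u_i-u_j)^2}{2}$ over the winners, an increasing aggregator of the absolute pairwise differences, integrating this comparison gives $\wev^\delta<\wev^0$.

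For the $\E[u_1^2]$ term, revenue equivalence makes the interim utility $U_1(\cdot)$ --- hence $\E[u_1]$ --- the same across all standard winners-pay auctions, so $\E[(u_1^\delta)^2]\le\E[(u_1^0)^2]$ is equivalent to $\var[u_1^\delta]\le\var[u_1^0]$. This is the monotone counterpart of \cref{prop:ex-ante-variance}: with pure private values, conditioning on bidder $1$'s own signal $s$ and on winning, the realized surplus equals $s-\delta\beta^\delta(s)-(1-\delta)\beta^\delta(\kth{k}(\smi))$, a decreasing function of the price-setting competitor's signal $\kth{k}(\smi)$ whose conditional mean is pinned down uniformly in $\delta$ by revenue equivalence, and which at $\delta=0$ (where $\beta^0=\mathrm{id}$ because $c=0$) is just $s-\kth{k}(\smi)$. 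One then argues that, conditional on $s$ and on winning, the uniform-price surplus is a mean-preserving spread of the $\delta$-mixed surplus; integrating over $s$ delivers $\var[u_1^\delta]\le\var[u_1^0]$, and substituting both comparisons into the displayed identity gives $\ev^\delta<\ev^0$.

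The crux is this second-moment comparison: \cref{prop:ex-ante-variance} as stated only names pay-as-bid as the global minimizer of ex-ante surplus variance, whereas here we need an arbitrary admissible $\delta$-mixed auction compared against uniform pricing. Concretely, the mean-preserving-spread step reduces to showing that the increasing map $x\mapsto s-\delta\beta^\delta(s)-(1-\delta)\beta^\delta(s-x)$ crosses the identity at most once, and from above, on $[0,s]$ --- equivalently, that passing from $u_1^0\mid\text{win}$ to $u_1^\delta\mid\text{win}$ compresses probability mass toward their common conditional mean. This holds, for instance, whenever $\beta^\delta$ is concave, and more generally it should emerge from the same estimates used to prove \cref{prop:ex-ante-variance}; everything else is bookkeeping in the decomposition together with a direct citation of \cref{prop:delta-dominating-uniform}.
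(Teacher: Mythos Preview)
Your decomposition via \cref{lem:connecting-variances} and your use of \cref{prop:delta-dominating-uniform} for the $\wev$ term are exactly what the paper has in mind: the one-line justification in the text (``applying \cref{prop:delta-dominating-uniform} to the pure private value case'') sits right after the paragraph that invokes \cref{lem:connecting-variances} and \cref{prop:ex-ante-variance}, so the intended proof is precisely the two-term split you wrote down. You have also correctly isolated the only nontrivial ingredient, namely $\E[(u_1^\delta)^2]\le \E[(u_1^0)^2]$, and correctly observed that \cref{prop:ex-ante-variance} as stated only pins the minimum at $\delta=1$, not the ordering of an arbitrary $\delta$-mixed auction against uniform.

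The gap is in your proposed mean-preserving-spread repair for the second moment. Conditionally on $s_1=s$ and on winning, set $h(x)=s-\delta\beta^\delta(s)-(1-\delta)\beta^\delta(s-x)$ so that $u_1^\delta=h(u_1^0)$. Your single-crossing criterion requires $x\mapsto h(x)-x$ to be monotone, i.e.\ $(1-\delta)\,\partial\beta^\delta/\partial s\le 1$, which is \emph{not} the corollary's hypothesis $\partial\beta^\delta/\partial s\le 2/\delta$; for small $\delta$ the latter is much weaker, and for $\delta$ near~$1$ the two are incomparable. So the second-moment comparison does not follow from the stated assumption via your route, and your appeal to the ``same estimates used to prove \cref{prop:ex-ante-variance}'' does not help either: that proof decomposes around the \emph{pay-as-bid} benchmark, giving $\E[u_1^2]\ge \E[(u_1^{D})^2]$ for every standard auction, which says nothing about ordering $\delta$-mixed against uniform. (A related subtlety: unconditional on winning, $u_1^\delta$ has a jump of size $s-\beta^\delta(s)>0$ at the boundary $Y_k(\ssb_{-1})=s$ while $u_1^0$ is continuous there, so $u_1^\delta$ is not literally a Lipschitz image of $u_1^0$.) In short, the structure of your argument matches the paper, but the second-moment step is left open by both the paper's one-liner and your sketch; closing it requires an assumption or argument beyond $\partial\beta^\delta/\partial s\le 2/\delta$.
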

Although this result shows that \Cref{theorem:deltas-dominating-uniform} can be used to extend equity rankings under pure private values to the empirical variance \emph{among all bidders}, this may not hold in the general case.

\subsection{Beyond log-concave distributions}

A crucial ingredient for \cref{theorem:bound-on-optimal-delta} is that the derivative of the equilibrium bid function is bounded by $1$, which holds for log-concave distributions by \cref{prop:derivative-beta-bounded}. In particular, the density of the first rejected signal must be log-concave. In the following, we provide some insights as to why it is difficult to generalize this result beyond log-concave distributions.

For simplicity, consider the pay-as-bid and the uniform-price auction. Considering log-concave signal distributions, we note that log-concavity is equivalent to $(A,G)$ concavity (a generalization of convexity, see \citet{Anderson2007}), and $\frac{\partial \bid}{\partial s} \leq 1$ is thus equivalent to $(A,G)$ concavity of $s \mapsto \int_0^s G_k^{n-1}$.
One idea to extend our results could then be to consider other generalizations of convexity. Considering \cref{prop:delta-dominating-uniform}, one might attempt to bound the slope of the bid functions by $2$. It holds that  $\pd{\bid}{s} \leq 2$ is equivalent to $(A,H)$ concavity of the same function where $H$ is the harmonic mean. But contrary to $(A,G)$ concave functions, there are no simple group closure properties that allow for the $(A,H)$ concavity of $f$ to always imply that of $\int_0^s G_k^{n-1}$. Thus, this route of inquiry does not carry fruits.

We also note that conditions similar to MEU such that uniform pricing yields lower \PD~(or \wev) than pay-as-bid pricing are much more difficult to attain. Why? If we follow the same main ideas as in the proof of \cref{prop:delta-dominating-uniform}, a similar condition using the mean value theorem would be that, for all $s \in \osupport$, $\varphi$ is an expansive mapping, translating into $\vert (1-c)-\pd{\bid}{s} \vert \geq 1-c$. As $\pd{\bid}{s}$ is strictly positive, we must have $\pd{\bid}{s} > 1-c$. For signal distributions with bounded density, $\gnk$ is close to zero near $\vbar$ (this follows from the definition of order statistics), and therefore $\pd{\bid}{s}$ is close to zero for a nonzero interval of signals. Thus, $\pd{\bid}{s} > 1-c$ cannot hold for all signals on the support, and we cannot rely on similar proof techniques to produce the desired conditions.

\subsection{Multi-unit demand}\label{sec:beyond-unit-demand}

A natural question is how robust the equity dominance results for pay-as-bid, uniform, and mixed auctions are with respect to the assumption of unit demand. A simple generalization are flat $d_i$-unit demands for some $d_>1$, $i \in \bidders$, meaning that bidders have the same constant marginal value for the first $d_i$ items obtained and $0$ after. There are two main challenges to extending our results to this setting, both leading to potential equity-efficiency trade-offs. Firstly, each bidder may win a different number of items, and thus equity could be studied with respect to a bidder's total surplus or average per-unit surplus. The latter allows for some efficiency concerns. The second challenge is behavioral concerning the multiplicity of equilibria in uniform-price auctions: \citet{Noussair-1995} show that even for two bidders and two-unit demand, there is no unique symmetric equilibrium as there exists an inefficient, zero-revenue symmetric equilibrium, as well as an efficient, zero-utility symmetric equilibrium.

\citet{Ausubel-2014} show in their Theorem $1$ that a necessary condition for the existence of an ex-post efficient equilibrium in multi-unit auctions is that the total supply of items is an integer multiple of a homogeneous demand, i.e.,~$k/d$ is an integer (``demand divides supply''). Under this strong assumption, our results extend immediately without ambiguity if we consider a mixed pricing between pay-as-bid and Vickrey pricing. Under Vickrey pricing, a bidder who wins $m$ items must pay the sum of the $m$ highest losing bids, not including the bidder's own losing bids. This payment rule is exactly that of the VCG mechanism; hence, the Vickrey auction is incentive-compatible and efficient. When demand divides supply supply, all winners of the Vickrey auction will receive exactly $d$ items, and each winner pays the same price, since they face the same losing bids. This renders the Vickrey payment into a uniform-pricing rule. Under pay-as-bid pricing, the equilibrium bids for the first $d$-items are the same as for the $1$-unit demand case, and thus each bidder also receives exactly $d$ items (see also \citealt{Ausubel-2014}). 
The consequence of demand dividing supply in both the pay-as-bid and Vickrey auction formats is that items are sold in bundles of size $d$, with payments and values scaled by $d$. Therefore, all results of the unit-demand case hold with flat $d$-unit demand when $d$ divides $k$.

When demand does not divide supply, our results do not extend to the multi-unit case. In the example in \citet{Ausubel-2014} with two bidders with flat two-unit demand and pure private values, there exists an inefficient equilibrium in the uniform-price auction. In this equilibrium, each bidder is allocated one unit, which is always more surplus-equitable than the efficient equilibrium of the pay-as-bid auction which allocates both items to the higher-value bidder.

\subsection{Revenue maximization and reserve prices}\label{sec:reserve-prices}

Standard auctions lead to an efficient allocation of items in our model. Moreover, for a large class of probability distributions (and common values), these auctions are also revenue maximizing (see \cite{Bulow-Klemperer-1996}). They show that if, in addition to independent signals, the bidders' marginal revenues are increasing in signals and weakly positive, standard auctions are revenue maximizing.\footnote{With independent signals, the marginal revenue of bidder $i$ given a realization of signals $\ssb$ is given by $MR_i(s_i) = \frac{-1}{f(s_i)}\frac{\partial}{\partial s_i}\left(v(s_i)(1-F(s_i)\right)$.} Marginal revenues are positive in many applications where bidders' values are high, i.e.,~the support of signals is sufficiently positive.\footnote{In our model, without loss of generality, the signal support includes the lower bound zero.} In those cases, it is optimal for the seller to sell their entire supply. Our class of auctions is also revenue maximizing if the seller is legally required to sell their entire supply. In practice, either (or both) of the two conditions are observed in many high-stake auctions, e.g.,~for spectrum licenses.

More generally, seller revenue can be maximized at the expense of efficiency \citep{Riley-1981, Myerson-1981}. A common tool to raise revenue is to set a reserve price $r>0$ such that only bids exceeding $r$ can win and pay the auction price, or at least $r$, whichever is higher. \citet{Myerson-1981,Riley-1981} show that, with pure private values, an optimal reserve price maximizes the seller's expected revenue. 

Some of our results on equity-preferred pricing extend to pay-as-bid and uniform-price auctions with an additional reserve price. With a reserve price, the number of winners, although identical in different standard auctions with identical reserve prices, is not necessarily equal to the number of items $k$ and becomes a random variable. In the uniform-price auction, the standard derivation leads to an equilibrium bid of $V(s)$ for $s>s_{r}$, where $s_{r}$ is some threshold, and $0$ otherwise. Using revenue equivalence, we obtain the equilibrium bid in the pay-as-bid auction $\beta_{r}^{\delta=1}(s)=(r-V(s_r))G(s_r)/G(s) + V(s) - \int_{s_r}^s V'(y)G(y)dy/G(s)$ for $s>s_r$ and $0$ otherwise. For the equity comparison of pay-as-bid and uniform pricing, we establish the following property.

\begin{proposition}\label{prop:MEU-reserve-price}
    If monotone ex-post utility (MEU, \cref{def:monex-post}) holds in the pay-as-bid auction without a reserve price, then it also holds with a strictly positive reserve price.
\end{proposition}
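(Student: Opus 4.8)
\emph{Proof proposal.} The plan is to reduce MEU — in the pay-as-bid auction, with or without a reserve — to an upper bound on the slope of the equilibrium bid function, and then to show that adding a reserve price can only decrease that slope. As recalled in \cref{sec:mixed-auctions}, in a $\delta$-mixed auction the ex-post surplus of a winning bidder with signal $s_i$ equals $v(s_i,\smi)$ minus $\delta\bid(s_i)$ minus a term that does not depend on $s_i$ (the uniform component, evaluated at the opponents' price-setting signal); hence its derivative in $s_i$ is $\partial_{s_i}v(s_i,\smi) - \delta\,\partial_s\bid(s_i) = (1-c+\tfrac{c}{n}) - \delta\,\partial_s\bid(s_i)$, so MEU holds iff $\partial_s\bid(s)\le(1-c+\tfrac{c}{n})/\delta$. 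Introducing a reserve price $r$ does not change that a pay-as-bid winner pays exactly their own bid, and it does not change the set of opponents' signals against which a participating bidder wins (a top-$k$ bidder whose bid exceeds $r$ always obtains an item, and any opponent it must beat automatically bids above $r$ as well). So MEU in the pay-as-bid auction with reserve $r$ is exactly the statement $\partial_s\beta_r^{\delta=1}(s)\le 1-c+\tfrac{c}{n}$ for all $s\in(s_r,\vbar)$.

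Next I would compare the two bid functions directly. Writing $G:=\Gnk$ and $g:=\gnk$, and subtracting the closed form of \cref{prop:equilibrium-bid-delta} at $\delta=1$ from the reserve-price bid function of \cref{sec:reserve-prices}, the $V(s)$ terms cancel and one obtains, for $s>s_r$,
\begin{align*}
\beta_r^{\delta=1}(s) = \beta^{\delta=1}(s) + \frac{C}{G(s)}, \qquad C:= (r-V(s_r))\,G(s_r) + \int_0^{s_r}V'(y)G(y)\,\diff y,
\end{align*}
with $C$ a constant independent of $s$. Integration by parts, using $G(0)=0$, rewrites $C = rG(s_r) - \int_0^{s_r}V(y)g(y)\,\diff y = \int_0^{s_r}\bigl(r-V(y)\bigr)g(y)\,\diff y \ge 0$: the screening type satisfies $V(s_r)=r$ (the uniform-price equilibrium bid $s\mapsto V(s)$ exceeds $r$ precisely for $s>s_r$), and $V$ is non-decreasing, so $V(y)\le r$ for all $y\le s_r$. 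Hence $C/G(s)\ge 0$ is non-increasing on $(s_r,\vbar)$, and
\begin{align*}
\partial_s\beta_r^{\delta=1}(s) = \partial_s\beta^{\delta=1}(s) - C\,\frac{g(s)}{G(s)^2} \;\le\; \partial_s\beta^{\delta=1}(s) \;\le\; 1-c+\tfrac{c}{n},
\end{align*}
the last inequality being MEU in the reserve-free pay-as-bid auction. Together with the first step, this is MEU with reserve price $r$.

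The algebra above is routine; the two points that need care are the enabling reductions. First, the slope characterisation of MEU \emph{with} a reserve requires checking that the reserve leaves the winning event of a participating bidder unchanged and that winners still pay exactly their bid — both immediate here, but worth stating explicitly (in particular in the common-value case, where one only ever compares surpluses of bidders who actually win). Second, the sign of $C$ rests on the characterisation of the screening threshold $s_r$; the integration-by-parts form $C=\int_0^{s_r}(r-V(y))g(y)\,\diff y$ is convenient precisely because it shows $C\ge 0$ already under the weaker inequality $V(s_r)\le r$, so the conclusion is insensitive to the exact convention for $s_r$. I expect establishing $V(s_r)\le r$ cleanly from the equilibrium screening condition to be the only genuinely non-mechanical part of the argument.
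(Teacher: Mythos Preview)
Your overall strategy is the paper's: reduce MEU to a bound on the bid-function slope and show that the reserve-price bid derivative is pointwise below the reserve-free one. The decomposition $\beta_r^{\delta=1}(s)=\beta^{\delta=1}(s)+C/G(s)$ and the integration-by-parts form $C=rG(s_r)-\int_0^{s_r}V(y)g(y)\diff y$ are both correct and match the paper's computation.

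There is one genuine gap, exactly where you flagged the non-mechanical step. Your justification for $C\ge 0$ relies on $V(s_r)\le r$ (or $V(s_r)=r$), but this is false whenever $c>0$. The screening threshold $s_r$ is \emph{not} where the uniform-price bid $V(\cdot)$ crosses $r$; it is determined by the indifference condition $\int_0^{s_r}\widetilde V(s_r,y)g(y)\diff y = rG(s_r)$ (the bidder with signal $s_r$ is indifferent between participating and staying out). Since $\widetilde V(s_r,y)\le \widetilde V(s_r,s_r)=V(s_r)$ for $y\le s_r$, this condition gives $r\le V(s_r)$, the opposite inequality. So the integrand $r-V(y)$ in your expression for $C$ need not be non-negative pointwise. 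The correct route to $C\ge 0$ substitutes the indifference condition for $rG(s_r)$ and uses monotonicity of $\widetilde V$ in its \emph{first} argument:
\[
C=\int_0^{s_r}\bigl(\widetilde V(s_r,y)-\widetilde V(y,y)\bigr)g(y)\diff y\ge 0,
\]
which is precisely what the paper does.

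A smaller slip: your MEU characterisation has the wrong constant. MEU compares two winners $i,j$ in the \emph{same} realisation $\ssb$, so the common-value term $\tfrac{c}{n}\sum_k s_k$ is identical for both and cancels; the correct bound is $\partial_s\bid(s)\le(1-c)/\delta$ (this is \cref{lem:monotone-ex-post-equivalence}), not $(1-c+\tfrac{c}{n})/\delta$. This does not affect the logic of your argument, since whatever the constant, it is the same with and without reserve.
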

\begin{proof3}
    See \cref{proof:reserve_price}.
\end{proof3}
Then, the following corollary is immediate from \cref{prop:MEU-reserve-price} (a slightly weaker statement than \cref{prop:PD-decreasing}).
\begin{corollary}
    Suppose MEU holds in the equilibrium of the pay-as-bid auction without reserve price. Then the pay-as-bid auction with a given reserve price is equity-preferred to (dominates in pairwise differences) any uniform-price auction with the same reserve price.
\end{corollary}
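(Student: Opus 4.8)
The plan is to combine \cref{prop:MEU-reserve-price} with the argument underlying \cref{theorem:deltas-dominating-uniform} specialised to $\delta=1$. First, since MEU holds in the pay-as-bid auction without a reserve price by hypothesis, \cref{prop:MEU-reserve-price} gives that MEU also holds in the pay-as-bid auction with the given reserve price $r$. From here I work entirely in the reserve-price environment, writing $\beta_r$ for the pay-as-bid equilibrium bid displayed in \cref{sec:reserve-prices} (equal to $0$ below the participation threshold $s_r$ and strictly increasing above it) and recalling that the uniform-price equilibrium bid is $V(s)$ for $s>s_r$ and $0$ otherwise, with the \emph{same} threshold $s_r$.

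Second, I fix a signal realization $\ssb$ and observe that in both auctions the set of winners is the same: it consists of the (at most $k$) bidders whose signal exceeds $s_r$ and is among the $k$ largest, since equilibrium bids are increasing and the reserve is applied identically. If at most one bidder wins there are no winning pairs and the comparison is vacuous, so suppose $i$ and $j$ both win with $s_i>s_j>s_r$. In the uniform-price auction all winners pay the same price (the first rejected bid, or $r$ itself when fewer than $k$ bidders clear the reserve), so, because the common-value component of $v$ is identical across all bidders in a given realization, $|u_i^0-u_j^0| = |v(s_i,\smi)-v(s_j,\smj)| = (1-c)(s_i-s_j)$. In the pay-as-bid auction only the own-bid term differs between winners, so $|u_i^1-u_j^1| = \bigl|(1-c)(s_i-s_j)-(\beta_r(s_i)-\beta_r(s_j))\bigr|$.

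Third, I invoke MEU in the reserve-price pay-as-bid auction, which says precisely that the higher-signal winner obtains weakly more surplus, i.e.\ $\beta_r(s_i)-\beta_r(s_j)\le(1-c)(s_i-s_j)$; combined with monotonicity of $\beta_r$ this yields $0\le\beta_r(s_i)-\beta_r(s_j)\le(1-c)(s_i-s_j)$, hence
\[
|u_i^1-u_j^1| = (1-c)(s_i-s_j)-(\beta_r(s_i)-\beta_r(s_j)) \le (1-c)(s_i-s_j) = |u_i^0-u_j^0|.
\]
Since $\beta_r$ shades bids strictly on a set of positive measure above $s_r$, with positive probability there is a winning pair for which $\beta_r(s_i)-\beta_r(s_j)>0$, making the inequality strict; thus almost surely every winning pairwise difference is weakly smaller under pay-as-bid pricing and at least one is strictly smaller, which is exactly dominance in pairwise differences in the sense of \cref{def:pairwise-differences}. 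This is the same computation as in the proof of \cref{prop:PD-decreasing}, and it transfers to the reserve-price setting because it uses only that equilibrium bids are increasing, that MEU holds, and that the common-value term cancels across bidders — none of which relies on exactly $k$ bidders winning.

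The main obstacle is not conceptual but the bookkeeping introduced by the reserve price: one must check that the participation threshold, and hence the winning set, coincide across the two auctions; that the uniform-price payment is genuinely common to all winners when the number of winners is a random variable; and that the strictness requirement of \cref{def:pairwise-differences} holds almost surely, which needs $\beta_r$ to be strictly increasing on a set of positive measure above $s_r$. The substantive work — carrying MEU across the introduction of a reserve price — has already been done in \cref{prop:MEU-reserve-price}, so the corollary is then immediate.
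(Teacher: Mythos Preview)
Your proposal is correct and follows essentially the same route the paper indicates: invoke \cref{prop:MEU-reserve-price} to transport MEU to the reserve-price setting, then repeat the pairwise-difference computation (as in the proofs of \cref{prop:PD-decreasing}/\cref{prop:delta-dominating-uniform}) using that both auctions share the same participation threshold $s_r$ and hence the same winning set. The only minor remark is that the calculation you perform is closer in spirit to \cref{prop:delta-dominating-uniform} (direct comparison with uniform pricing via a non-expansive map) than to \cref{prop:PD-decreasing}; this does not affect correctness.
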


\section{Conclusion}\label{sec:conclusion}

This article studies the division of surplus between buyers in auction design. We introduce a family of equity measures that are based on absolute pairwise differences in realized utilities. Our equity family includes well-known metrics, such as the empirical variance and the expected Gini index.
Considering standard and winners-pay auctions in an independent signal setting with single-crossing values, focusing on equity as a single design objective is costless in terms of potential trade-offs with efficiency and revenue. 

First, we design the surplus-equitable mechanism, a direct and truthful mechanism that efficiently allocates the items for sale and charges each winner a personalized price. This price equalizes the winners' realized utilities, while losers pay nothing. Turning to the class of uniform, pay-as-bid, and mixed auctions, we show that, in most cases, some degree of price discrimination is beneficial in terms of equity. We show that equity-preferred pricing crucially depends on the common value proportion in the buyers' value structure.
Moreover, our results have significant implications for the design of multi-unit auctions in practice. By carefully selecting a pricing mixture based on (an estimate of) the common value, auctioneers can achieve a more equitable division of surplus among buyers. 

Future research could explore the trade-offs between efficiency, revenue, and equity, or extend our analysis to other types of auctions and value distributions. For example, in multi-unit demand settings in which items may be allocated inefficiently (\cref{sec:beyond-unit-demand}), trade-offs become relevant. In practice, other designs such as dynamic auctions or the Spanish auction\footnote{Spanish treasury auctions are a hybrid design with partial price discrimination, in which only high-price bidders pay a uniform price. Those bidding higher than the weighted average winning bid price (WAP) pay the WAP, while winners bidding below the WAP pay their bid \citep{Alvarez-2007}.} are used, and understanding the impact of such designs on surplus equity remains an open question.

\newpage
\printbibliography

\newpage
\appendix
\appendixpage

\section{Structural insights and proofs of \cref{theorem:bound-on-optimal-delta,theorem:deltas-dominating-uniform}}\label{sec:proving-main-theorems}

In this section, we provide an overview of the proofs of \cref{theorem:bound-on-optimal-delta,theorem:deltas-dominating-uniform}. Each of these theorems is proved by combining a proposition on monotonicity of \PD~and dominance of \PD, respectively, with a third proposition that bounds the slope of bid functions. In particular, we identify the property of \emph{monotone ex-post utility} as a fundamental and sufficient condition for our dominance results.
\begin{definition}[Monotone ex-post utility]\label{def:monex-post}
    The ex-post utility $u(\ssb)$
    satisfies \emph{monotone ex-post utility (MEU)} iff, for any two signals $s_i, s_j \in \support$ and $\forall~\ssb_{-i},\ssb_{-j}$, $s_i \leq s_j \Leftrightarrow u_i(s_i,\ssb_{-i}) \leq u_j(s_j,\ssb_{-j})$.
\end{definition}
Monotone ex-post utility (MEU) relates to the slope of equilibrium bids as follows.
\begin{lemma}\label{lem:monotone-ex-post-equivalence}
    An equilibrium satisfies monotone ex-post utility iff equilibrium bid functions $\bid$ satisfy $\pd{\bid}{s}\leq \frac{1-c}{\delta} $ for all signals $s\in\support$.
\end{lemma}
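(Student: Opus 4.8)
The plan is to work from the ex-post utility of a winning bidder in the $\delta$-mixed auction and differentiate with respect to their own signal, holding the opponents' signals fixed. Recall from \cref{def:mixed-auctions} that, in a symmetric increasing equilibrium, a winning bidder $i$ with signal $s_i$ pays $\delta\bid(s_i) + (1-\delta)\bid(Y_{k+1}(\smi))$ (the first rejected bid among competitors), so their ex-post utility is
\begin{equation*}
    u_i(s_i,\smi) = v(s_i,\smi) - \delta\bid(s_i) - (1-\delta)\bid(Y_{k+1}(\smi)).
\end{equation*}
Only the first two terms depend on $s_i$; by \cref{ass:bidder-values}, $\partial v(s_i,\smi)/\partial s_i = (1-c) + c/n$. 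Hence
\begin{equation*}
    \pd{u_i(s_i,\smi)}{s_i} = (1-c) + \frac{c}{n} - \delta\,\pd{\bid(s_i)}{s_i}.
\end{equation*}

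The second step is to translate MEU (\cref{def:monex-post}) into a statement about this derivative. Since $v$ is symmetric in the opponents' signals and all bidders use the same strategy $\bid$, the map $s \mapsto u(s,\smi)$ has the same functional form for every bidder once we fix the multiset of competing signals; MEU asserts exactly that this common function is non-decreasing in $s$ on $\support$, for every realization of the other signals. Because $\partial u_i/\partial s_i$ above does \emph{not} depend on $\smi$ at all, monotonicity for one realization is equivalent to monotonicity for all, and "$s\mapsto u(s,\smi)$ non-decreasing" is equivalent to $\partial u_i/\partial s_i \ge 0$ everywhere on $\osupport$, i.e.
\begin{equation*}
    (1-c) + \frac{c}{n} - \delta\,\pd{\bid}{s} \;\geq\; 0
    \quad\Longleftrightarrow\quad
    \pd{\bid}{s} \;\leq\; \frac{(1-c) + c/n}{\delta}.
\end{equation*}

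The discrepancy with the claimed bound $(1-c)/\delta$ — the extra $c/n$ — is the point that needs care, and I suspect it is resolved because the relevant comparison in MEU is across \emph{winning} bidders, whose own signals enter the common-value term \emph{and} whose competitors' signals overlap. Concretely, when comparing two winners $i$ and $j$ with $s_i \le s_j$ and opponent profiles $\smi, \smj$, the quantity that must be monotone is $u_i(s_i,\smi) - u_j(s_j,\smj)$; writing $\smi$ and $\smj$ in terms of the shared signals, bidder $i$'s own signal $s_i$ appears inside $v(s_j,\smj)$ through the common-value average (since $i$ is one of $j$'s opponents), and symmetrically. Carrying out this bookkeeping, the marginal effect of raising $s_i$ on the utility \emph{gap} is $\big((1-c)+\tfrac{c}{n}\big) - \delta\bid{}'(s_i) - \tfrac{c}{n} = (1-c) - \delta\bid{}'(s_i)$, where the last $-\tfrac{c}{n}$ comes from $s_i$'s contribution to $j$'s common value. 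Requiring this to be non-negative for all $s_i$ gives $\pd{\bid}{s} \le (1-c)/\delta$, which is the stated equivalence; the converse direction (bound on the slope $\Rightarrow$ MEU) then follows by integrating this derivative along any path from $s_i$ to $s_j$. The main obstacle is precisely this accounting of the cross-effects of signals through the common-value term — getting the sign and the coefficient $c/n$ to cancel correctly — rather than any analytic subtlety.
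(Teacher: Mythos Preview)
Your proposal is correct and arrives at exactly the same equivalence the paper proves, via essentially the same mechanism: for two winners $i$ and $j$ in the same realization $\ssb$, the common-value term $\tfrac{c}{n}\sum_k s_k$ and the uniform-price component $(1-\delta)\bid(Y_{k+1}(\ssb))$ are identical for both, so $u_i-u_j=(1-c)(s_i-s_j)-\delta(\bid(s_i)-\bid(s_j))$, and MEU is equivalent to $\varphi(s)=(1-c)s-\delta\bid(s)$ being non-decreasing, i.e.\ $\bid{}'(s)\le(1-c)/\delta$.

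The only difference is presentational. The paper goes directly to the difference $u_i-u_j$ and observes the cancellation in one line, then divides by $s_j-s_i$ and lets the increment shrink. You instead first compute $\partial u_i/\partial s_i$ with $\smi$ frozen (picking up the spurious $c/n$), and then repair it by tracking how $s_i$ feeds into $u_j$ through $\smj$. That detour is unnecessary: once you write $u_i-u_j=\varphi(s_i)-\varphi(s_j)$, the $c/n$ never appears and both directions of the iff are immediate from monotonicity of $\varphi$. Your ``integrating along any path'' remark is also slightly overwrought in one dimension---it is just the fundamental theorem of calculus applied to $\varphi$.
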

\begin{proof3}
    See \cref{proof:lem:monotone-ex-post-equivalence}.
\end{proof3}
The ex-post difference in utilities depends only on the private value proportion $(1-c)s$ and the discriminatory part of the payment $\delta\bid$. Thus, as long as the discriminatory payment does not grow faster in the signal than the private-value share, ex-post utilities are monotone.

As seen in \cref{fig:bids-uniform-n3-k2}, the equilibrium exhibit several monotonicity properties, and these hold beyond uniform signals. By assumption, equilibrium bids are increasing in the bidder's own signal. Equilibrium bids are also decreasing in the extent of price discrimination: if a higher proportion of one's own bid affects the price, the incentive to bid-shade increases. Finally, the change in the payment corresponding to a bidder's own bid, due to a signal increase, is increasing in the weight of price discrimination, and vice versa. The latter monotonicity is crucial for \cref{prop:PD-decreasing}.
\begin{lemma}\label{lem:equilibrium-bid-monotone}
    The equilibrium bid functions satisfy the following monotonicity properties:
    \begin{enumerate}
        \item $\bid(s)$ is strictly increasing in $s$, for all fixed $\delta\in[0,1]$ (consistent with the assumption), and
        is strictly decreasing in $\delta$, for all fixed $s\in\osupport$.
        \item  $\frac{\partial(\delta\bid(s))}{\partial \delta}$ is strictly increasing in $s$, for all fixed $\delta\in[0,1]$, and $\frac{\partial(\delta\bid(s))}{\partial s}$ is strictly increasing in $\delta$, for all fixed $s\in\osupport$.
    \end{enumerate}
\end{lemma}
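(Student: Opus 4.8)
The plan is to work directly from the closed-form equilibrium bid in \cref{prop:equilibrium-bid-delta}, namely $\bid(s) = V(s) - \int_0^s V'(y)\Gnk(y)^{1/\delta}\diff y / \Gnk(s)^{1/\delta}$, together with the fact that $\Gnk$ is a CDF (so $\Gnk(y)/\Gnk(s)\le 1$ for $y\le s$) and $V$ is strictly increasing (since $\widetilde V$ is strictly increasing in its first argument). The key observation is that $\bid(s) = V(s) - \E\big[V'(W)\cdot\text{(weighting by }\Gnk^{1/\delta})\big]$ can be rewritten as a weighted average that makes monotonicity in $\delta$ transparent. Concretely, I would substitute and write $\bid(s) = \int_0^1 V'\big(\Gnk^{-1}(t\,\Gnk(s))\big)\,\delta t^{\delta-1}\,\diff t$ by the change of variables $\Gnk(y)^{1/\delta} = t\,\Gnk(s)^{1/\delta}$, i.e.\ $t\in[0,1]$ is a probability and $\delta t^{\delta-1}$ is the density of $t = (\text{Uniform})^{1/\delta}$. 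This exhibits $\bid(s)$ as $\E[V'(\Gnk^{-1}(T\Gnk(s)))]$ where $T=U^{1/\delta}$, $U\sim\mathrm{Unif}[0,1]$.

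For part (1): strict monotonicity in $s$ is the standing equilibrium assumption (and also follows from the formula since $\Gnk^{-1}(T\Gnk(s))$ is increasing in $s$ and $V'>0$). For strict decrease in $\delta$: as $\delta$ increases, $T=U^{1/\delta}$ increases in the likelihood-ratio (hence first-order stochastic) sense — larger $\delta$ puts more mass near $t=1$ — so $\Gnk^{-1}(T\Gnk(s))$ FOSD-increases, and since $V'$ need not be monotone I would instead argue directly: differentiate $\bid(s)$ in $\delta$ under the integral, or better, use integration by parts on the original formula to write $\bid(s) = \int_0^s \big(\Gnk(y)/\Gnk(s)\big)^{1/\delta} \diff V(y)$ after noting the boundary terms vanish; then $\partial_\delta\big[(\Gnk(y)/\Gnk(s))^{1/\delta}\big] = (\Gnk(y)/\Gnk(s))^{1/\delta}\cdot(-1/\delta^2)\log(\Gnk(y)/\Gnk(s))$, which is strictly negative for $y<s$ (since $\log(\Gnk(y)/\Gnk(s))<0$ gives $-\log(\cdot)>0$, wait — sign check: $\log(\text{ratio})<0$, times $-1/\delta^2<0$ gives positive, times $(\cdot)^{1/\delta}>0$ gives positive — so this shows the integrand \emph{increases} in $\delta$; I must recheck, but the monotonicity direction is pinned down by the well-known bid-shading logic and the sign will come out from careful bookkeeping, possibly requiring the substitution form instead). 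The cleanest route is the substitution form: $\bid(s) = \int_0^1 V'\!\big(\Gnk^{-1}(u^{1/\delta}\Gnk(s))\big)\diff u$; increasing $\delta$ raises $u^{1/\delta}$ pointwise for $u\in(0,1)$, hence raises the argument of $\Gnk^{-1}$, but since $V'$ is not monotone this alone is insufficient, so I would instead integrate by parts once more to land on an expression $\int_0^s V(y)\,\diff_y\!\big[(\Gnk(y)/\Gnk(s))^{1/\delta}\big]$-type formula where the weight's $\delta$-derivative has a definite sign by log-concavity-free reasoning.

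For part (2), the two claims: $\partial_\delta(\delta\bid(s))$ strictly increasing in $s$, and $\partial_s(\delta\bid(s))$ strictly increasing in $\delta$. Note $\delta\bid(s) = \delta V(s) - \delta\int_0^s V'(y)\Gnk(y)^{1/\delta}\diff y/\Gnk(s)^{1/\delta}$. I would compute $\partial_s(\delta\bid(s))$ directly from the original ODE characterization: the equilibrium bid solves the first-order condition, which (as derived in the proof of \cref{prop:equilibrium-bid-delta}) can be written as $\delta\bid'(s)\Gnk(s) = (V(s)-\bid(s))\gnk(s)\cdot(\text{const})$ or similar; differentiating the defining relation $\frac{\diff}{\diff s}\big[\bid(s)\Gnk(s)^{1/\delta}\big] = V'(s)\Gnk(s)^{1/\delta}$ gives $\bid'(s) = V'(s) - \frac{1}{\delta}\frac{\gnk(s)}{\Gnk(s)}(V(s)-\bid(s))$, so $\delta\bid'(s) = \delta V'(s) - \frac{\gnk(s)}{\Gnk(s)}(V(s)-\bid(s))$ — and then $\partial_\delta(\delta\bid'(s)) = V'(s) - \frac{\gnk(s)}{\Gnk(s)}\partial_\delta(\bid(s)-\delta\bid'(s))$... this couples the two statements, so the right strategy is to establish them together. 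The plan is: (i) show $V(s)-\bid(s) = \int_0^s V'(y)(\Gnk(y)/\Gnk(s))^{1/\delta}\diff y$ is the ``bid-shading'' quantity, strictly positive and — this is the crux — strictly increasing in both $s$ and $\delta$ (increasing in $\delta$ because larger $\delta$ means weights $(\Gnk(y)/\Gnk(s))^{1/\delta}$ are larger for each $y<s$; increasing in $s$ by a Leibniz-rule computation using that $(\Gnk(y)/\Gnk(s))^{1/\delta}$ increases in $s$); (ii) then $\partial_\delta(\delta\bid(s)) = \bid(s) + \delta\partial_\delta\bid(s) = V(s) - \text{shading} + \delta\partial_\delta\bid(s)$, and combine with (i); similarly $\partial_s(\delta\bid(s)) = \delta V'(s) - \frac{\gnk(s)}{\Gnk(s)}\cdot\text{shading}$, and since shading decreases... no, shading is positive so this needs the explicit form. \textbf{The main obstacle} will be controlling the sign of $\partial_\delta\bid$ and $\partial_\delta\partial_s\bid$ simultaneously, because $V'$ is not assumed monotone, so every argument must go through monotone rearrangements of the weighting kernel $(\Gnk(y)/\Gnk(s))^{1/\delta}$ rather than through pointwise comparison of $V'$; the trick throughout is to integrate by parts to shift all $\delta$- and $s$-dependence onto this kernel (for which $\log(\Gnk(y)/\Gnk(s))\le 0$ gives clean sign control) and off of $V$, whose only needed property is $V'>0$.

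\medskip
\noindent\emph{Remark on execution.} A secondary subtlety is boundary behaviour at $s\to 0$: since $\Gnk(s)\to 0$, the ratios are $0/0$ and one must verify (using that $V$ is Lipschitz on compacts, $V'$ bounded, and $\Gnk$ regularly varying near $0$) that the shading term and its derivatives extend continuously, so that the strict-monotonicity statements hold on the \emph{open} support $\osupport$ exactly as stated. I would handle this with a standard dominated-convergence / L'Hôpital argument, which is routine and I would not dwell on it.
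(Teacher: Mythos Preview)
Your overall strategy---differentiate the closed form $\bid(s)=V(s)-\int_0^s V'(y)\,(G(y)/G(s))^{1/\delta}\diff y$ under the integral and read off signs from $V'>0$ and $\log(G(y)/G(s))<0$---is exactly what the paper does. The paper's proof is a one-liner referring to a lemma that tabulates the four relevant derivatives; signs are then immediate. So the approach is the same, but your execution is tangled in two places.

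\emph{Part (1), $\delta$-monotonicity.} Your sign check is correct: $\partial_\delta\big[(G(y)/G(s))^{1/\delta}\big]=(G(y)/G(s))^{1/\delta}\cdot(-1/\delta^2)\log(G(y)/G(s))>0$. You then stall, but there is nothing left to do: this is the kernel in the \emph{subtracted} shading term, so $\partial_\delta\bid(s)=-\int_0^s V'(y)\,\partial_\delta[(G(y)/G(s))^{1/\delta}]\diff y<0$. No monotonicity of $V'$ is ever needed---only $V'>0$. (Aside: your line ``$\bid(s)=\int_0^s(G(y)/G(s))^{1/\delta}\diff V(y)$'' drops the $V(s)$; the integral is the shading, not the bid.)

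\emph{Part (2).} Here you take a detour through the first-order ODE and end up with an incorrect formula (the correct one is $\bid'(s)=\tfrac{1}{\delta}\tfrac{g(s)}{G(s)}(V(s)-\bid(s))$, not $V'(s)-\tfrac{1}{\delta}\tfrac{g(s)}{G(s)}(V(s)-\bid(s))$), and then worry about the two claims being ``coupled.'' The paper avoids all of this by computing the \emph{single} mixed partial
\[
\frac{\partial^2(\delta\bid(s))}{\partial s\,\partial\delta}
=\frac{-g(s)}{\delta G(s)}\int_0^s V'(y)\,\log\!\Big(\big(\tfrac{G(y)}{G(s)}\big)^{1/\delta}\Big)\Big(\tfrac{G(y)}{G(s)}\Big)^{1/\delta}\diff y,
\]
which is a negative prefactor times a strictly negative integral, hence strictly positive. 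By equality of mixed partials this one computation delivers \emph{both} statements in (2) at once. Your closing remark---shift all $\delta,s$-dependence onto the kernel, whose sign is controlled by the log---is precisely the idea; you just need to apply it to $\delta\bid$ rather than to $\bid$, and differentiate in both variables rather than chasing the ODE. The boundary/dominated-convergence checks you flag are indeed needed and are handled in the paper's derivative lemma.
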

\begin{proof3}
    See \cref{proof:lem:equilibrium-bid-monotone}.
\end{proof3}

We now characterize the fundamental role of monotone ex-post utility: it is equivalent to the monotonicity property of \PD. 
\begin{proposition} \label{prop:PD-decreasing}
    For a given common value $c$ and for some $\bar\delta \in [0,1]$, \PD~are monotonically decreasing over $[0,\bar\delta]$ if and only if the equilibrium (which depends on $c$ and $\bar\delta$) satisfies MEU.
\end{proposition}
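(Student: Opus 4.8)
The plan is to collapse all pairwise differences onto the monotonicity behaviour of a single scalar map and then read off the equivalence from the shape of that map. First I fix the common value $c$ and a realization $\ssb$, and take any two winners with signals $s_i>s_j$; since the first rejected bid $Y_{k+1}(\bbeta)$ is the same for both, writing $\varphi^\delta(s):=(1-c)s-\delta\bid(s)$ makes the common-value term $\tfrac{c}{n}\sum_\ell s_\ell$ and the uniform component $(1-\delta)Y_{k+1}(\bbeta)$ cancel in the difference of the two winners' ex-post utilities, so
\[
D(\delta)\;:=\;u_i^\delta(s_i,\smi)-u_j^\delta(s_j,\smj)\;=\;\varphi^\delta(s_i)-\varphi^\delta(s_j)\;=\;(1-c)(s_i-s_j)-\delta\bigl(\bid(s_i)-\bid(s_j)\bigr),
\]
and the pairwise difference to be tracked is exactly $|D(\delta)|$. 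The two structural facts I need are: $D(0)=(1-c)(s_i-s_j)\ge 0$, with equality only if $c=1$; and, because $s_i>s_j$ and $\pd{(\delta\bid(s))}{\delta}$ is strictly increasing in $s$ by \cref{lem:equilibrium-bid-monotone}, the map $\delta\mapsto\delta\bid(s_i)-\delta\bid(s_j)$ is strictly increasing, so $D$ is strictly decreasing on $[0,1]$.

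Combining these, $D$ is nonnegative on an initial subinterval and changes sign at most once, so $\delta\mapsto|D(\delta)|$ is unimodal (weakly decreasing while $D\ge0$, then strictly increasing). Hence, for a fixed winning pair, $|D|$ is decreasing on $[0,\bar\delta]$ iff $D(\bar\delta)\ge0$, i.e.\ iff $\varphi^{\bar\delta}(s_i)\ge\varphi^{\bar\delta}(s_j)$. Aggregating over winning pairs, the family $(u^\delta)_{\delta\in[0,\bar\delta]}$ is monotonically decreasing in \PD~among winners iff $\varphi^{\bar\delta}$ is non-decreasing on $\osupport$. For the forward implication I also use that $\pd{(\delta\bid(s))}{s}$ is increasing in $\delta$ (\cref{lem:equilibrium-bid-monotone}), which upgrades ``$\varphi^{\bar\delta}$ non-decreasing'' to ``$\varphi^\delta$ non-decreasing for every $\delta\le\bar\delta$''; then for $\delta_1<\delta_2$ in $[0,\bar\delta]$ each pair satisfies $|u^{\delta_2}_i-u^{\delta_2}_j|=D(\delta_2)\le D(\delta_1)=|u^{\delta_1}_i-u^{\delta_1}_j|$ (outer equalities because $\varphi^\delta$ is non-decreasing), strictly whenever $s_i\neq s_j$ — an event of probability one (iid continuous signals, finitely many winners), which supplies the strict inequality required by \cref{def:pairwise-differences}. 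Finally, \cref{lem:monotone-ex-post-equivalence} identifies ``$\varphi^{\bar\delta}$ non-decreasing on $\osupport$'' with $\pd{\beta^{\bar\delta}}{s}\le\frac{1-c}{\bar\delta}$, i.e.\ with the $\bar\delta$-mixed equilibrium satisfying MEU (equivalently, MEU holding for every $\delta\in[0,\bar\delta]$, by the same lemma), closing the equivalence.

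The step I expect to be the main obstacle is the reverse implication: turning the pointwise failure of monotonicity of $\varphi^{\bar\delta}$ into an honest, positive-probability, strict violation of \PD-monotonicity \emph{among winners}. If $\varphi^{\bar\delta}(s')>\varphi^{\bar\delta}(s'')$ for some $s'<s''$ in $\osupport$, I would pass to neighbourhoods using continuity of the equilibrium bid and, since $k\ge2$ and $f>0$ on $\osupport$, exhibit a positive-probability configuration in which two winners carry signals drawn from these neighbourhoods; on that event $D(\bar\delta)<0\le D(0)$, so the sign change of $D$ occurs at some $\delta^\star\in(0,\bar\delta)$, and unimodality gives $|D(\bar\delta)|>|D(\delta^\star)|$, so the family is not monotone on $[0,\bar\delta]$. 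The degenerate corners — $c=1$ (then $D(0)=0$, $|D|$ is increasing on $(0,\bar\delta]$, and MEU fails for every $\delta>0$) and $\bar\delta=0$ (a vacuous interval) — are verified directly from the formula for $D$.
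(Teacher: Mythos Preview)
Your proof is correct and follows essentially the same approach as the paper: both reduce the pairwise utility difference to $\varphi^\delta(s)=(1-c)s-\delta\bid(s)$ and then invoke the monotonicity of $\delta\bid$ in $s$ and in $\delta$ from \cref{lem:equilibrium-bid-monotone}, together with \cref{lem:monotone-ex-post-equivalence}. Your unimodality framing of $|D(\delta)|$ and your explicit positive-probability construction for the ``only if'' direction are in fact more complete than the paper's own proof, which handles that direction only implicitly through its chain of $\Leftrightarrow$'s.
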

\begin{proof3}
    See \cref{proof:prop:PD-decreasing}.
\end{proof3}
The equivalence between decreasing \PD~and MEU being satisfied in equilibrium is crucial in our proof of \cref{theorem:bound-on-optimal-delta}. When MEU holds, the slope of the equilibrium bid function is sufficiently flat and more price discrimination impacts higher signal bidders more than lower signal bidders. In contrast, including more uniform pricing in the price mix will, proportionally to the change in $\delta$, offer higher signal bidders a higher discount than lower signal bidders and thus does not improve surplus equity. A similar intuition holds for \cref{prop:delta-dominating-uniform} below (for details on the intuition, see \cref{sec:bounds-on-WEV-minimal-pricing}).
\begin{proposition}\label{prop:delta-dominating-uniform}
    For a given common value $c$, consider any $\delta$-mixed auction, $\delta\in(0,1]$, and suppose the equilibrium bidding function $\bid$ satisfies $ \pdbdeltas \leq \frac{2(1-c)}{\delta}$ for all signals $s\in\support$. Then, $\delta$-mixed pricing dominates uniform pricing in \PD.
\end{proposition}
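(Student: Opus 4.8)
The plan is to reduce \cref{prop:delta-dominating-uniform} to a pointwise bound on the slope of $\bid$ and then apply the mean value theorem. First I would fix a realized signal profile $\ssb$ and pick any two winners $i,j\in\bidders$, say with $s_i>s_j$; let $u^\delta_i$ and $u^0_i$ denote bidder $i$'s ex-post utility in the $\delta$-mixed and the uniform-price auction. In the $\delta$-mixed auction each winner $\ell$ pays $\delta\bid(s_\ell)+(1-\delta)\kth{k+1}(\bbeta)$, and since $\bid$ is strictly increasing (\cref{lem:equilibrium-bid-monotone}) the first rejected bid is $\kth{k+1}(\bbeta)=\bid(\kth{k+1}(\ssb))$, which is the same quantity for every winner; likewise the common-value term $\tfrac cn\sum_{\ell\in\bidders}s_\ell$ of $v(s_i,\smi)$ is common to $i$ and $j$. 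These terms therefore cancel in the pairwise difference of ex-post utilities, leaving
\[
u_i^\delta-u_j^\delta=(1-c)(s_i-s_j)-\delta\bigl(\bid(s_i)-\bid(s_j)\bigr),
\qquad
u_i^0-u_j^0=(1-c)(s_i-s_j),
\]
so it suffices to show $\bigl|(1-c)(s_i-s_j)-\delta(\bid(s_i)-\bid(s_j))\bigr|\le(1-c)(s_i-s_j)$ for every winning pair, with the inequality strict for at least one pair almost surely.

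Next I would estimate $\delta(\bid(s_i)-\bid(s_j))$ via the mean value theorem (using that $\bid$ is differentiable on the support, which is immediate from the closed form in \cref{prop:equilibrium-bid-delta}): there is $\xi\in(s_j,s_i)$ with $\bid(s_i)-\bid(s_j)=\pdbdeltas(\xi)\,(s_i-s_j)$. Combining $\pdbdeltas>0$ (strict monotonicity) with the hypothesis $\pdbdeltas\le\tfrac{2(1-c)}{\delta}$ gives $0<\delta(\bid(s_i)-\bid(s_j))\le 2(1-c)(s_i-s_j)$. Writing $a:=(1-c)(s_i-s_j)$ — which is positive, since $\pdbdeltas>0$ together with the slope bound forces $c<1$ — and $b:=\delta(\bid(s_i)-\bid(s_j))$, the relation $0<b\le 2a$ yields $-a\le a-b<a$, hence $|a-b|\le a$, with $|a-b|<a$ whenever $b<2a$. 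This is weak dominance, pair by pair. For the strictness clause I would apply this to the pair of winners with the largest and smallest winning signals: for that pair $b=2a$ would force $\pdbdeltas$ to equal the constant $\tfrac{2(1-c)}{\delta}$ Lebesgue-almost everywhere on a nondegenerate interval, contradicting the first-order condition $\pdbdeltas=\tfrac{\gnk}{\delta\Gnk}(V-\bid)$ satisfied by the equilibrium bid (for instance, $\pdbdeltas\to0$ near the top of the support when the support is bounded and $k\ge 2$). Hence the weak inequalities hold everywhere and at least one is strict almost surely, which is precisely dominance of $\delta$-mixed pricing over uniform pricing in \PD.

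The only genuinely delicate point I anticipate is this strictness clause, namely excluding the degenerate possibility that the equilibrium bid is exactly affine with slope $2(1-c)/\delta$ over a range of signals wide enough to contain the span of the winners; everything else is the same cancellation-plus-mean-value-theorem mechanism underlying \cref{prop:PD-decreasing}. It is worth emphasising the contrast with that proposition: there the bid-slope bound is $\pdbdeltas\le(1-c)/\delta$, which by \cref{lem:monotone-ex-post-equivalence} is equivalent to monotone ex-post utility and gives \emph{monotonicity} of \PD\ in $\delta$; here the weaker bound $\pdbdeltas\le 2(1-c)/\delta$ no longer guarantees MEU — utilities may fail to be increasing in signals — but it is exactly the condition under which the absolute surplus gap between any two winners does not exceed its value at the uniform-price endpoint $\delta=0$.
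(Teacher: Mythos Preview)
Your proposal is correct and follows essentially the same route as the paper: the paper phrases the key step as showing that $\varphi^\delta(s)/(1-c)$, with $\varphi^\delta(s)=(1-c)s-\delta\bid(s)$, is a non-expansive map (via the derivative bound $|\partial_s\varphi^\delta|\le 1-c$, which reduces to $\delta\,\pdbdeltas\le 2(1-c)$), whereas you implement the identical estimate by the mean value theorem and the elementary observation $0<b\le 2a\Rightarrow|a-b|\le a$. You are in fact more careful than the paper about the strictness clause in \cref{def:pairwise-differences}, which the paper's proof leaves implicit.
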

\begin{proof3}
    See \cref{proof:prop:delta-dominating-uniform}.
\end{proof3}
\begin{example}[continues=uniform-example]
    Whether MEU is satisfied can be verified numerically, either by computing differences in realized utilities for every pair of signals or simply by checking the derivative of the bid function. We illustrate this for the example of uniform signal distributions and $n=3$ and $k=2$ in \cref{fig:plot-MEU-uniform} below. For example, with $c=0.8$ and $\delta = 0.3$, close to the MEU boundary in \cref{fig:plot-MEU-uniform}, the derivative of the bid function cannot be larger than $0.667 = \frac{1-0.8}{0.3}$. From \cref{fig:bids-uniform-n3-k2}, the slope of the bid function with $c=0.8$ and $\delta = 0.3$ is close to $0.68$ for low signals. Thus, for this combination of $c$ and $\delta$, MEU is not satisfied.
    \begin{figure}[htp]
        \centering
        \includegraphics[scale=0.4,trim={0 1cm 0 0},clip]{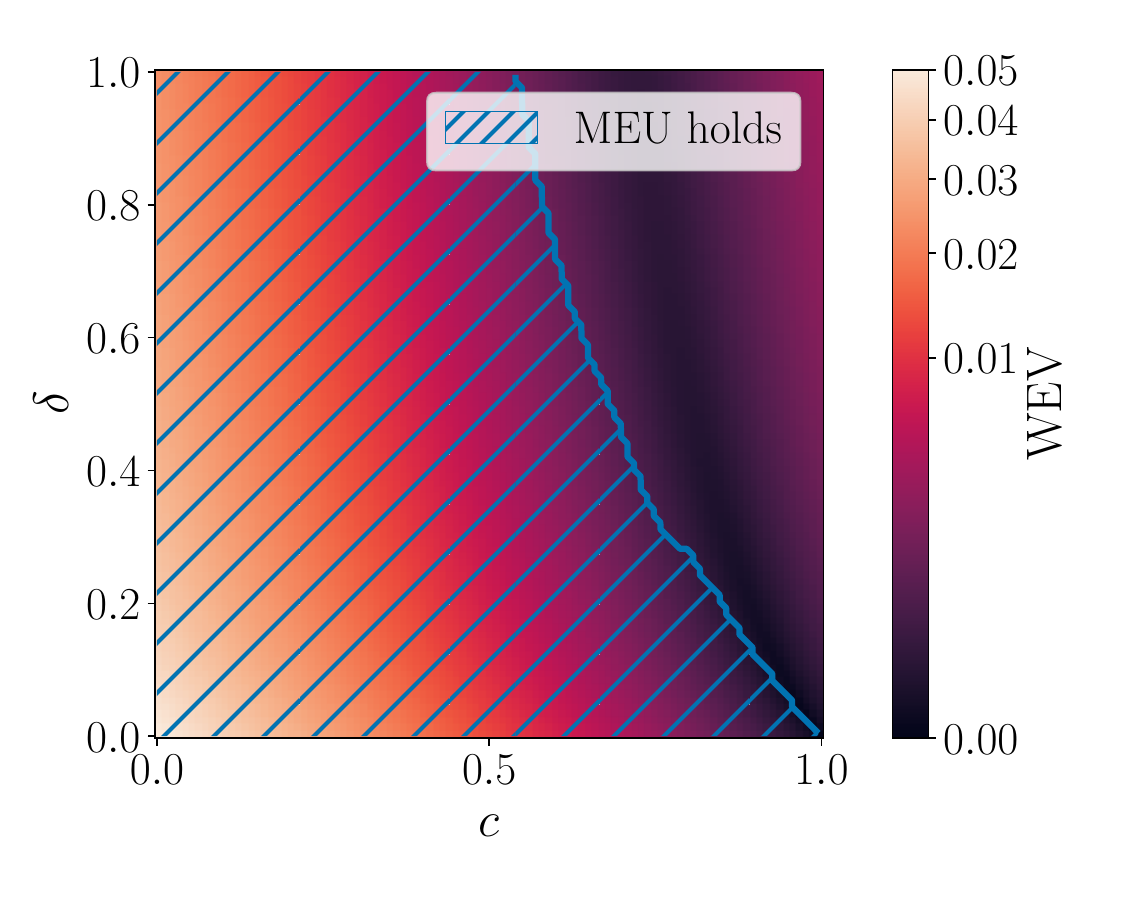}
        \caption{Monotone ex-post utility for common value $c$ and price discrimination $\delta$}
        \label{fig:plot-MEU-uniform}
    \end{figure}
\end{example}
The final crucial proposition bounds the slope of the equilibrium bid functions by 1 for the family of log-concave signal distribution.
\begin{proposition}\label{prop:derivative-beta-bounded}
    If the signal density $f$ is log-concave, then $ \frac{\partial \bid(s)}{\partial s} \leq 1$ for all signals $s\in\support$. 
\end{proposition}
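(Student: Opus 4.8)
The plan is to differentiate the closed form of \cref{prop:equilibrium-bid-delta} and then bound the result termwise, using two consequences of log-concavity of $f$: (a) that $V'\leq 1$ everywhere, and (b) that $G:=\Gnk$ is log-concave. The case $\delta=0$ is immediate, since $\beta^0=V$ and (a) already gives $(\beta^0)'=V'\leq1$; so I would take $\delta\in(0,1]$ and set $\alpha:=1/\delta\geq1$. It suffices to prove the bound on the open support $\osupport$, where $g:=\gnk>0$ (because $f>0$ there), the value $s=0$ then following by continuity. Rewriting \cref{equ:alternative-equilibrium-delta} as $\bid(s)=V(s)-G(s)^{-\alpha}\int_0^s V'(y)\,G(y)^{\alpha}\diff y$ and differentiating, the two terms in $V'(s)$ cancel and one is left with
\begin{equation*}
\pd{\bid(s)}{s}=\frac{\alpha\,g(s)}{G(s)^{\alpha+1}}\int_0^s V'(y)\,G(y)^{\alpha}\,\diff y.
\end{equation*}

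Next I would record the two structural inputs. For (a), write $V(y)=\widetilde V(y,y)=(1-c)y+\tfrac{c}{n}\bigl(2y+(k-1)\mu_+(y)+(n-1-k)\mu_-(y)\bigr)$, where $\mu_-(y)=\E[s\mid s<y]$ and $\mu_+(y)=\E[s\mid s>y]$ are the conditional means contributed by the $n-1-k$ opponents below and the $k-1$ opponents above the pivotal $k$-th highest signal. A standard computation gives $\mu_-'(y)=\tfrac{f(y)}{F(y)}\bigl(y-\mu_-(y)\bigr)\geq0$ and $\mu_+'(y)=\tfrac{f(y)}{1-F(y)}\bigl(\mu_+(y)-y\bigr)\geq0$, and log-concavity of $f$ --- which by \citet{Bagnoli-2005} makes both $F$ and $1-F$ log-concave, i.e.\ $f/F$ nonincreasing and $f/(1-F)$ nondecreasing --- forces $\mu_-'\leq1$ and $\mu_+'\leq1$; after integration by parts these reduce to $f(y)\int_0^y F\leq F(y)^2$ and $f(y)\int_y^{\vbar}(1-F)\leq(1-F(y))^2$, each of which follows from monotonicity of the respective (reverse) hazard rate. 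Summing with the weights above, $V'(y)=(1-c)+\tfrac{c}{n}\bigl(2+(k-1)\mu_+'(y)+(n-1-k)\mu_-'(y)\bigr)\leq(1-c)+\tfrac{c}{n}\cdot n=1$; this is \cref{lem:value-function-bounded}. For (b), log-concavity of $f$ implies (again by \citet{Bagnoli-2005}) that the density $g$ of the $k$-th largest order statistic among $n-1$ signals is log-concave, and then $G(s)=\int_0^s g$ is log-concave as the running integral of a log-concave function; equivalently $g/G$ is nonincreasing, so $G/g$ is nondecreasing on $\osupport$ and $G(y)/g(y)\leq G(s)/g(s)$ whenever $0<y\leq s$.

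Finally I would combine the pieces: in the displayed integral, factor $G(y)^{\alpha}=\tfrac{G(y)}{g(y)}\cdot g(y)G(y)^{\alpha-1}$, apply $V'(y)\leq1$ from (a) and $G(y)/g(y)\leq G(s)/g(s)$ from (b), and use $\int_0^s g(y)G(y)^{\alpha-1}\diff y=\tfrac{1}{\alpha} G(s)^{\alpha}$ (an exact antiderivative, with $G(0)=0$), to obtain
\begin{equation*}
\pd{\bid(s)}{s}\;\leq\;\frac{\alpha\,g(s)}{G(s)^{\alpha+1}}\cdot\frac{G(s)}{g(s)}\int_0^s g(y)G(y)^{\alpha-1}\diff y\;=\;\frac{\alpha\,g(s)}{G(s)^{\alpha+1}}\cdot\frac{G(s)}{g(s)}\cdot\frac{G(s)^{\alpha}}{\alpha}\;=\;1.
\end{equation*}
The one genuinely substantive step is (a): unlike (b), the bound $V'\leq1$ really uses log-concavity, since a sufficiently concentrated (non-log-concave) signal density can make a truncated mean steeper than $1$, and then $V'$ exceeds $1$ as soon as $c>0$; so log-concavity enters twice and in essential ways. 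The remaining points are routine --- checking the cancellation in the derivative, that $g>0$ on $\osupport$ so the factorization is legitimate, and that $G/g$ (which behaves like $y/(n-k)$ near $0$) extends continuously to the left endpoint.
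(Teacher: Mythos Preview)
Your proof is correct and follows essentially the same architecture as the paper's: compute the derivative as $\pd{\bid}{s}=\frac{g(s)}{\delta G(s)^{1/\delta+1}}\int_0^s V'(y)G(y)^{1/\delta}\diff y$, bound $V'\leq 1$ via the decomposition into truncated means (the paper's \cref{lem:value-function-bounded}), and then reduce to the inequality $\int_0^s G^{1/\delta}\leq \frac{\delta\,G(s)^{1/\delta+1}}{g(s)}$, which is exactly log-concavity of $s\mapsto\int_0^s G^{1/\delta}$ (the paper's \cref{lem:private-derivative-bounded,lem:intG-log-concave}). The one genuine difference is in how that last inequality is established: the paper observes that $G^{1/\delta}$ is itself a log-concave cdf and invokes the Bagnoli--Sumon closure property ``log-concave cdf $\Rightarrow$ log-concave integral'', whereas you bypass this by factoring $G(y)^{\alpha}=\tfrac{G(y)}{g(y)}\cdot g(y)G(y)^{\alpha-1}$, using monotonicity of $G/g$ (log-concavity of $G$), and integrating the exact antiderivative $gG^{\alpha-1}$. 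Your route is slightly more elementary---it avoids checking that $G^{1/\delta}$ is a bona fide cdf and appealing to an external closure theorem---but the two arguments are equivalent in content, since both ultimately rest on the chain $f$ log-concave $\Rightarrow g$ log-concave $\Rightarrow G$ log-concave.
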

The proof is given in \cref{sec:proving-bid-function-slope} below.
With pure private values, this bound implies that ex-post utility is non-decreasing in signals for log-concave signal distributions. Indeed, for $u(s)=s-\delta \bid(s) - (1-\delta)Y_{k+1}(\bbeta)$, we have that $\frac{\partial u}{\partial s} = 1- \delta \frac{\partial \bid(s)}{\partial s} \geq 0$. A similar reasoning leads to \cref{theorem:bound-on-optimal-delta,theorem:deltas-dominating-uniform}.\\

\begin{proof2}[Proof of \cref{theorem:bound-on-optimal-delta,theorem:deltas-dominating-uniform}]
    Because of \cref{prop:derivative-beta-bounded}, we have that under log-concave signal distributions MEU holds if $\delta \leq (1-c)$, as $\pdbdeltas \leq 1 \leq \frac{(1-c)}{\delta}$ (see \cref{lem:monotone-ex-post-equivalence}). Thus, applying \cref{prop:PD-decreasing}, \PD~are monotonically decreasing for $\delta \in [0, 1-c]$ and \cref{theorem:bound-on-optimal-delta} follows.
    
    Similarly, because of \cref{prop:derivative-beta-bounded}, it holds that with log-concave signals, $\pdbdeltas \leq 1 \leq \frac{2(1-c)}{\delta}$ if $\delta \leq 2(1-c)$. Applying \cref{prop:delta-dominating-uniform}, it follows that any mixed pricing with $\delta \in (0, 2(1-c)]$ dominates uniform pricing in \PD.
\end{proof2}

\subsection{Proving the bound on bid function slopes}\label{sec:proving-bid-function-slope}

Bounding the bid function slope for log-concave distributions requires three main observations, which we detail in the lemmas below and then use to prove \cref{prop:derivative-beta-bounded}.

The first lemma establishes a simplified expression of $V(s)$ which allows to bound $V'(s)$ by 1.
\begin{lemma}\label{lem:value-function-bounded}
    Assuming common values ($c=1$), $V(s)$ is differentiable on $\osupport$, and can be expressed as
    \begin{equation*}
        V(s)= \frac{2}{n}s+\frac{n-k-1}{n} \frac{\int_0^{s} tf(t)\diff t}{F(s)}+\frac{k-1}{n}\frac{\int_s^{\vbar} tf(t)\diff t}{1-F(s)}
    \end{equation*}
    Moreover, if the signal density $f$ is log-concave, then $V'(s)\leq 1$ for all signals $s \in \support$.
\end{lemma}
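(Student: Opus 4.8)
The plan is to first obtain the closed form for $V$ by a conditioning argument, and then to differentiate it and bound $V'$ term by term using log-concavity.

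First I would compute $V(s)=\widetilde V(s,s)$ directly. Fixing $c=1$ so that $v(s_i,\smi)=\tfrac1n\sum_{j\in\bidders}s_j$ (see \cref{ass:bidder-values}), I condition on $s_i=s$ and $\kth{k}(\smi)=s$. This pins two of the $n$ signals at the value $s$: bidder $i$'s own signal and the order statistic we conditioned on. Since the $n-1$ coordinates of $\smi$ are i.i.d.\ and atomless, on this event exactly $k-1$ of the remaining coordinates lie strictly above $s$ and $n-k-1$ strictly below, and, conditional on which side of $s$ they fall, each is an independent draw from $F$ truncated to $(s,\vbar)$ resp.\ $(0,s)$ (the standard fact that the above- and below-$s$ parts of an order-statistic conditioning are conditionally i.i.d.\ from the truncated laws). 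Taking expectations gives
\[
  V(s)=\tfrac1n\Big(2s+(k-1)\,\E[t\mid t>s]+(n-k-1)\,\E[t\mid t<s]\Big),
\]
which is the stated identity after writing the two conditional means as $\int_0^s tf(t)\diff t/F(s)$ and $\int_s^{\vbar} tf(t)\diff t/(1-F(s))$.

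Next I would differentiate. Log-concavity makes $f$ continuous on the interior of its support, and $f>0$ on $\osupport$ gives $0<F(s)<1$ there, so $V$ is $C^1$ on $\osupport$. Writing $A(s):=\E[t\mid t<s]$ and $B(s):=\E[t\mid t>s]$, the quotient rule gives $A'(s)=\tfrac{f(s)}{F(s)}\big(s-A(s)\big)$ and $B'(s)=\tfrac{f(s)}{1-F(s)}\big(B(s)-s\big)$, hence
\[
  V'(s)=\tfrac2n+\tfrac{n-k-1}{n}A'(s)+\tfrac{k-1}{n}B'(s),
\]
a combination with nonnegative weights summing to $1$ (using $2\le k\le n-1$). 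So it suffices to prove $A'(s)\le1$ and $B'(s)\le1$.

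The main obstacle is exactly these two estimates; everything else is bookkeeping. I would first rewrite, by integration by parts, $s-A(s)=\int_0^s F(t)/F(s)\,\diff t$ and $B(s)-s=\int_s^{\vbar}(1-F(t))/(1-F(s))\,\diff t$ (the boundary term in the latter vanishes since a finite second moment forces $t(1-F(t))\to0$). Then I would use log-concavity of $F$ \citep{Bagnoli-2005}: $\log F$ concave means it lies below its tangent at $s$, i.e.\ $F(t)/F(s)\le\exp\!\big(\tfrac{f(s)}{F(s)}(t-s)\big)$; integrating this over $(0,s)$ gives $\int_0^s F(t)/F(s)\,\diff t\le F(s)/f(s)$, hence $A'(s)\le1$. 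Symmetrically, log-concavity of $1-F$ gives $(1-F(t))/(1-F(s))\le\exp\!\big(-\tfrac{f(s)}{1-F(s)}(t-s)\big)$ for $t\ge s$, whence $B'(s)\le1$. Plugging both bounds into the expression for $V'$ yields $V'(s)\le\tfrac2n+\tfrac{n-k-1}{n}+\tfrac{k-1}{n}=1$ for all $s\in\osupport$, and the inequality extends to $\support$ by continuity (reading $V'(0)$ as a right derivative).
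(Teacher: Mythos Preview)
Your proof is correct and follows the same overall structure as the paper's: derive the closed form for $V$ via the truncated-distribution decomposition of order statistics conditioned on $Y_k=s$, then bound $A'(s)$ and $B'(s)$ separately by $1$ and combine via the convex-combination identity $\tfrac2n+\tfrac{n-k-1}{n}+\tfrac{k-1}{n}=1$.

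The one methodological difference lies in how you establish $A'(s)\le 1$ and $B'(s)\le 1$. The paper rewrites these as $f(s)\int_0^s F\,/\,F(s)^2\le 1$ and $f(s)\int_s^{\vbar}(1-F)\,/\,(1-F(s))^2\le 1$, observes that each inequality is \emph{equivalent} to log-concavity of the corresponding running integral ($s\mapsto\int_0^s F$ and $s\mapsto\int_s^{\vbar}(1-F)$), and then invokes \citet{Bagnoli-2005} for the fact that log-concavity of $f$ propagates all the way to these integrals. Your route is more direct: you use the tangent-line characterisation of log-concavity of $F$ (resp.\ $1-F$) to dominate $F(t)/F(s)$ (resp.\ $(1-F(t))/(1-F(s))$) by an exponential and integrate it explicitly. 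This sidesteps the intermediate appeal to log-concavity of the integrals and is slightly more self-contained, though both arguments ultimately rest on the same Bagnoli--Bergstrom implication that $f$ log-concave $\Rightarrow$ $F$ and $1-F$ log-concave.
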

\begin{proof3}
    See \cref{proof:lem:value-function-bounded}.
\end{proof3}
The proof proceeds by noticing that order statistics conditioned on other order statistics behave just like order statistics of a truncation of the original distribution. Thus, a more tractable expression of the expected valuation $V$ can be derived for the pure common value case ($c=1$). Together with results on log-concavity by \citet{Bagnoli-2005}, we use this expression to show that $V'\leq 1$ for all signals $s\in\osupport$.

The next lemma establishes a sufficient condition for the equilibrium bid function slope to be bounded by 1 in the pure private value case. Differentiating twice $\int_0^s G^{1/\delta}$, we establish that its log-concavity is equivalent to $\frac{\partial \bid(s)}{\partial s} \leq 1$.

\begin{lemma}\label{lem:private-derivative-bounded}
    Assuming private values ($c=0$), for any $\delta \in (0,1]$, $\frac{\partial \bid(s)}{\partial s} \leq 1$ iff $\int^{s}_0 G^{\frac{1}{\delta}}(y)\diff y $ is log-concave.
\end{lemma}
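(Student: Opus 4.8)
The plan is to use the fact that with pure private values the equilibrium bid of \cref{prop:equilibrium-bid-delta} collapses to a very simple closed form, and then to turn the slope condition into the defining second-order inequality of log-concavity.

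First I would specialise \cref{ass:bidder-values} to $c=0$: then $v(s_i,\smi)=s_i$, so $\widetilde V(x,y)=x$, hence $V(y)=\widetilde V(y,y)=y$ and $V'\equiv 1$. Substituting into \cref{equ:alternative-equilibrium-delta} and writing $H(s):=\int_0^s \Gnk(y)^{1/\delta}\diff y$, this yields
\[
\bid(s) \;=\; s-\frac{H(s)}{H'(s)},\qquad H'(s)=\Gnk(s)^{1/\delta},\qquad H''(s)=\tfrac1\delta\,\Gnk(s)^{1/\delta-1}\gnk(s).
\]
On $\osupport$ we have $F(s)\in(0,1)$, hence $\Gnk(s)>0$; therefore $H(s)>0$, $H'(s)>0$, and $H''$ exists wherever $\gnk$ does, i.e.\ almost everywhere, so the quotient is differentiable a.e.

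Second I would differentiate and simplify:
\[
\pd{\bid(s)}{s} \;=\; 1-\frac{H'(s)^2-H(s)H''(s)}{H'(s)^2}\;=\;\frac{H(s)H''(s)}{H'(s)^2}.
\]
Since $H>0$, the inequality $\pd{\bid(s)}{s}\le 1$ is thus equivalent to $H(s)H''(s)\le H'(s)^2$, which is precisely $(\log H)''(s)=\dfrac{H(s)H''(s)-H'(s)^2}{H(s)^2}\le 0$, i.e.\ log-concavity of $H=\int_0^s\Gnk$. The claimed ``iff'' over $s\in\support$ then follows by observing that $\bid(0)=0$ handles the endpoint $s=0$, while ``$H$ log-concave'' is a statement on the open support.

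The computations above are entirely routine; the step I expect to be the only delicate one is the regularity bookkeeping surrounding them — verifying that $H$ and $H'$ stay strictly positive on $\osupport$ so the quotient manipulation is legitimate, that $H'$ (equivalently $\Gnk^{1/\delta}$) is absolutely continuous so that ``$(\log H)''\le 0$ a.e.'' genuinely implies concavity of $\log H$, and that the universal quantifier over $s\in\support$ in the statement is read as the corresponding almost-everywhere condition on $\osupport$. Once that is pinned down, the two displayed identities deliver the equivalence directly.
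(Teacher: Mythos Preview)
Your proposal is correct and follows essentially the same approach as the paper: both reduce to the identity $\dfrac{\partial \bid(s)}{\partial s}=\dfrac{H(s)H''(s)}{H'(s)^2}$ with $H(s)=\int_0^s G^{1/\delta}$, from which the equivalence with $(\log H)''\le 0$ is immediate. The paper starts from $(\log H)''$ and simplifies, whereas you start from $\bid=s-H/H'$ and differentiate; the computations are the same up to this cosmetic reordering.
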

\begin{proof3}
    See \cref{proof:lem:private-derivative-bounded}.
\end{proof3}
Finally, we establish that a log-concave signal density is sufficient for the integral of their order statistics to be log-concave, using closure properties of product and integration of log-concave distributions, and results by \citet{Bagnoli-2005}.
\begin{lemma}\label{lem:intG-log-concave}
    If the density of signals $f$ is log-concave, then so is $\int^{s}_0 G^{\frac{1}{\delta}}(y)\diff y$.
\end{lemma}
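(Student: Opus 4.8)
The plan is to show that $s \mapsto \int_0^s \Gnk(y)^{1/\delta}\diff y$ is log-concave by combining the known closure properties of log-concave functions with the results of \citet{Bagnoli-2005}. The argument proceeds in three stages. First, since $f$ is log-concave, its cdf $F$ is log-concave by \citet{Bagnoli-2005}; moreover $1-F$ is log-concave as well. Recall that
\[
    \Gnk(y) = \sum_{j=0}^{k-1}\binom{n-1}{j} F(y)^{n-1-j}(1-F(y))^j,
\]
the probability that at most $k-1$ of the $n-1$ opponents have a signal above $y$. The first task is to establish that $\Gnk$ itself is log-concave. One route is to note that $\Gnk(y) = \mathbb P[\kth{k}(\smi) \leq y]$ is the cdf of the order statistic $\kth{k}(\smi)$, whose density $\gnk$ is (up to a constant) $f(y) F(y)^{n-1-k}(1-F(y))^{k-1}$ — a product of log-concave functions (using $n-1-k \geq 0$ and $k-1 \geq 0$, which hold since $2 \leq k < n$), hence log-concave; and the cdf of a distribution with log-concave density is log-concave, again by \citet{Bagnoli-2005}.

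Second, since $\Gnk$ is log-concave and $1/\delta > 0$, the power $\Gnk^{1/\delta}$ is log-concave: raising a log-concave function to a positive power preserves log-concavity because $\log(\Gnk^{1/\delta}) = \tfrac1\delta \log \Gnk$ is a positive multiple of a concave function, hence concave. Third, the integral of a log-concave function over an interval of the form $[0,s]$ is log-concave — this is the standard closure-under-integration property of log-concave functions (it follows, e.g., from the Prékopa–Leindler inequality, and is also recorded in \citet{Bagnoli-2005} as the statement that if $h$ is log-concave then $s \mapsto \int_{-\infty}^s h$ is log-concave). Applying this to $h = \Gnk^{1/\delta}$ yields that $\int_0^s \Gnk(y)^{1/\delta}\diff y$ is log-concave, which is the claim.

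The step I expect to require the most care is establishing log-concavity of $\Gnk$. The ``density is log-concave $\Rightarrow$ cdf is log-concave'' implication from \citet{Bagnoli-2005} handles it cleanly once one writes $\gnk(y) \propto f(y) F(y)^{n-k-1}(1-F(y))^{k-1}$ and checks that each factor is log-concave on the open support — here the constraints $k \geq 2$ and $k \leq n-1$ (equivalently $n-k-1 \geq 0$) are exactly what guarantee the exponents are nonnegative, so that $F^{n-k-1}$ and $(1-F)^{k-1}$ are log-concave powers of log-concave functions rather than problematic negative powers. The remaining two steps (positive powers and integration) are routine applications of standard closure properties.
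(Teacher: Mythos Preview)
Your proposal is correct and follows essentially the same route as the paper: use \citet{Bagnoli-2005} to pass from log-concavity of $f$ to log-concavity of $F$ and $1-F$, write $\gnk$ as a product of log-concave factors to get $\Gnk$ log-concave, take the positive power $1/\delta$, and then integrate. The only cosmetic difference is that the paper explicitly verifies $\Gnk^{1/\delta}$ is itself a cdf before invoking the integral-of-a-cdf result from \citet{Bagnoli-2005}, whereas you appeal directly to the Prékopa--Leindler closure-under-integration property; both are valid.
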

\begin{proof3}
    See \cref{proof:lem:intG-log-concave}.
\end{proof3}
With the three lemmas above, we can prove \cref{prop:derivative-beta-bounded}.\\

\begin{proof2}[Proof of \cref{prop:derivative-beta-bounded}]\phantomsection\label{proof:prop:derivative-beta-bounded}
    First, we recall the expression of the derivative of the bid function for any $s \in (0,\vbar)$:
    \begin{equation}
        \pd{\bid(s)}{s}= \frac{g(s)}{G(s)}\frac{\int_0^s V'(y) G^{\frac{1}{\delta}}(y)\diff y}{\delta G^{\frac{1}{\delta}}(s)}
    \end{equation}
    Note that for any $c \in [0,1]$, $V(s)$ is a linear combination of $s$ and $V$. In the case of a pure common value, the derivative of the latter is bounded by $1$ by \cref{lem:value-function-bounded}. Hence for any $c$, $V'(s)\leq 1$. 
    Moreover, because of \cref{lem:intG-log-concave}, we know that $\int_0^s G^{\frac{1}{\delta}}$ is log-concave, and we can therefore apply \cref{lem:private-derivative-bounded}. 
    Hence using the above results,
    \begin{equation}
        \frac{\partial \bid(s)}{\partial s} \leq  \frac{g(s)}{G(s)}\frac{\int_0^s \max_t V'(t) G^{\frac{1}{\delta}}(y)\diff y}{\delta G^{\frac{1}{\delta}}(s)} \leq \frac{g(s)}{G(s)}\frac{\int_0^s 1\cdot G^{\frac{1}{\delta}}(y)\diff y}{\delta G^{\frac{1}{\delta}}(s)} \leq 1.
    \end{equation}
\end{proof2}

\section{Revenue equivalence and efficiency}\label{app:sec:revenue-equivalence}

We recall results from \citet{Krishna-2009} that show that the auctions we consider exhibit revenue equivalence and (allocative) efficiency.

\begin{proposition*}[Revenue equivalence, \citealt{Krishna-2009}]\label{prop:Kri1}
    Assuming iid signals, any standard auction, under any symmetric and increasing equilibrium with an expected payment of zero at value zero, yields the same expected revenue to the seller.
\end{proposition*}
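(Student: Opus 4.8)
\emph{Proof proposal.} The plan is to reduce the statement to the envelope characterisation of interim payments already carried out in \cref{sec:setup}, following \citet{Krishna-2009}. First I would observe that in any symmetric and increasing equilibrium $\bbeta$ of a standard $k$-unit auction the $k$ highest bids win and $\beta$ is increasing, so bidder $i$ receives an item if and only if $s_i$ is among the $k$ highest signals; that is, the allocation rule, viewed as a function of the reported signals, is $\pi_i(s_i,\smi)=\ind{s_i>\kth{k}(\smi)}$, which is \emph{identical} across all standard auctions satisfying the hypotheses (and coincides with the efficient allocation under single crossing). Revenue equivalence is then a statement purely about the pricing rule once this common allocation rule is fixed.

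Second, passing to the induced direct game, I would write the interim utility of a bidder with true signal $x$ who mimics $\hat s$ as $U_i(x,\hat s)=\E_{y=\kth{k}(\smi)}[\ind{\hat s\ge y}\,\widetilde V(x,y)]-P_i(\hat s)$, with $P_i$ the interim payment induced by the auction's pricing rule. Incentive compatibility of the equilibrium gives $U_i(x)=\max_{\hat s}U_i(x,\hat s)=U_i(x,x)$, so the envelope theorem \citep{Milgrom-2002} yields
\[
U_i'(x)=\partial_1 U_i(x,x)=\int_0^x \partial_1\widetilde V(x,y)\,\gnk(y)\,\diff y .
\]
The right-hand side depends only on the primitive valuation $v$ and on $\gnk$ — i.e.\ on the commonly shared allocation rule — and not at all on the pricing rule. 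Using that a bidder with $s_i=0$ wins with probability zero (almost surely $n-1\ge k$ competing signals exceed $0$) and has $P_i(0)=0$ by hypothesis, we get $U_i(0)=0$, hence $U_i(x)=\int_0^x U_i'(t)\,\diff t$ and
\[
P_i(x)=\E_{y=\kth{k}(\smi)}[\ind{x\ge y}\,\widetilde V(x,y)]-\int_0^x U_i'(t)\,\diff t ,
\]
an expression that is the same for every standard auction meeting the hypotheses. Expected revenue equals $\sum_{i\in\bidders}\E_{s_i}[P_i(s_i)]$, which by symmetry and the iid assumption is $n\,\E_{s}[P_1(s)]$; since $P_1$ is pinned down, so is the expected revenue.

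The only delicate point is the envelope step itself: one must check the Milgrom–Segal regularity conditions (absolute continuity of $x\mapsto U_i(x)$, an integrable envelope, differentiability of $\widetilde V$ in its first argument) that license $U_i'(x)=\partial_1 U_i(x,x)$. These are exactly the conditions already verified in \cref{sec:setup} — $\widetilde V(\cdot,y)$ is non-decreasing (indeed strictly increasing), and the relevant expectations are finite because signals have a finite second moment — so the argument transfers verbatim and the rest is bookkeeping.
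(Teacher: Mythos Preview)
Your proposal is correct and follows essentially the same route as the paper: the paper does not give a separate proof of this proposition (it is cited from \citealt{Krishna-2009}), but the envelope-theorem derivation you outline is precisely the sketch carried out in \cref{sec:setup}, where the common allocation rule $\pi_i(s_i,\smi)=\ind{s_i>\kth{k}(\smi)}$ together with $U_i(0)=P_i(0)=0$ pins down $P_i$ and hence expected revenue. The only cosmetic difference is that the paper obtains $P_i(0)=0$ from the winners-pay assumption rather than invoking it directly as a hypothesis, but this is immaterial to the argument.
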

We note that the crucial assumption for revenue equivalence is the independence of signals. In settings where signals are correlated, revenue equivalence fails \cite[Chapter 6.5]{Krishna-2009}.
It can be further shown that a bidder with signal $s_i$ has an expected surplus
\begin{equation*}
\tilde u(s_i) := \mathbb E_{\smi}[u_i(s_i,\smi)] = \int_0^{s_i} (\widetilde V(s_i,y)-\widetilde V(y,y))\gnk(y)\diff y
\end{equation*}
A value function $v(\ssb)$ satisfies the $\emph{single crossing}$ condition if for all $i, j \neq i \in \bidders$ and for all $\ssb$, $\frac{\partial v(s_i,\smi)}{\partial s_i} \geq \frac{\partial v(s_j,\ssb_{-j})}{\partial s_i}$, and the value function $v$ as given in \cref{ass:bidder-values} satisfies this condition.
\begin{proposition*}[Efficiency, \citealt{Krishna-2009}]\label{prop:Kri2}
    Any standard auction, under any symmetric and increasing equilibrium and values satisfying the single-crossing condition, is efficient.
\end{proposition*}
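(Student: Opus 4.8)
The plan is to establish efficiency directly from the equilibrium structure: in any symmetric increasing equilibrium of a standard auction the $k$ winners are exactly the $k$ bidders holding the highest values $v(s_i,\smi)$, so the realized allocation maximizes total realized value. First I would use the equilibrium hypothesis. Every bidder submits $\beta(s_i)$ for a single, strictly increasing $\beta$, so $\beta(s_i) > \beta(s_j) \iff s_i > s_j$, and the ranking of submitted bids coincides with the ranking of signals. Since the auction is \emph{standard}, the $k$ highest bids receive the items; hence the set of winners is precisely the set of holders of the $k$ highest signals (signal ties have probability zero because $F$ is absolutely continuous).

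It then remains to show that the bidder holding the $m$-th highest signal also holds the $m$-th highest value, i.e.\ that $s_i \ge s_j$ implies $v(s_i,\smi) \ge v(s_j,\smj)$ at the given profile. In an unrestricted environment this is where single-crossing enters: fixing a profile with $s_i > s_j$, first lower $i$'s signal to $s_j$ --- by symmetry of $v$ in the opponents' coordinates the two bidders' values are then equal --- and then raise $i$'s signal back to $s_i$; along this path single-crossing, $\partial v(s_i,\smi)/\partial s_i \ge \partial v(s_j,\smj)/\partial s_i$, guarantees that $i$'s value grows at least as fast as $j$'s, so $v(s_i,\smi) \ge v(s_j,\smj)$. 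For the interpolation model of \cref{ass:bidder-values} this collapses to one line, since the common-value term $\tfrac{c}{n}\sum_{\ell\in\bidders} s_\ell$ is identical for every bidder and hence $v(s_i,\smi)-v(s_j,\smj) = (1-c)(s_i-s_j)\ge 0$. Combining the two steps, the signal-rank-to-value-rank map is order preserving, the $k$ highest signals belong to the $k$ highest-value bidders, and these are exactly the winners, so the allocation is efficient. (When $c=1$ all realized values coincide, so any allocation is trivially efficient.)

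The only genuinely delicate point is the passage from ``higher signal'' to ``higher value'' in an unrestricted valuation environment, which is the careful integration-along-a-path argument above and is the substance of the result as quoted from \citet{Krishna-2009}. For the linear value structure used in this paper there is essentially no obstacle; the statement is recorded mainly for completeness and to make explicit the efficiency property invoked throughout.
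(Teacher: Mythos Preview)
The paper does not actually supply a proof of this proposition: it is recorded in \cref{app:sec:revenue-equivalence} purely as a quotation from \citet{Krishna-2009}, with no argument given beyond the attribution. Your sketch is correct and is essentially the standard textbook argument (bid ranking $=$ signal ranking by strict monotonicity of $\beta$; signal ranking $=$ value ranking via the single-crossing integration-along-a-path step you describe; hence the $k$ highest bids belong to the $k$ highest-value bidders). Since there is no in-paper proof to compare against, there is nothing further to contrast; your write-up could serve as the missing justification, and your observation that for the linear model of \cref{ass:bidder-values} the value-ranking step is immediate is accurate.
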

Given the prior propositions, we can focus on the question of surplus distribution among buyers more succinctly without considering potential trade-offs.

\section{Proofs}\label{app:sec:proofs}

\subsection{Surplus equity}\label{app:sec:surplus-equity}

\begin{proof2}[Proof of \cref{lem:alternative-variance}]\phantomsection\label{proof:lem:alternative-variance}
The empirical variance of surplus can be transformed as follows.
\begin{align*}
    \mathbb E_s\left[ \frac{1}{n-1 } \sum_i^{n} \left( u_i - \frac{1}{n}\sum_j^n u_j \right )^2 \right]
    &= \mathbb E_s\left[\frac{1}{2n(n-1)} \sum_{i=1}^{n}\sum_{j=1}^n (u_i - u_j)^2 \right]\\
    &=\frac{\mathbb E_s\left[(u_1-u_2)^2\right]}{2}\\
    &= \mathbb E_s[u_1^2] - \mathbb E_s[u_1 u_2]\\
    &= \var(u_1) - \text{Cov}(u_1, u_2)
\end{align*}
Similarly, the empirical variance conditioned on winning can be written as
\begin{align*}
\mathbb E_{s}\left[ \frac{1}{k-1 } \sum_{i=1}^{k} \left( u_i - \frac{1}{k}\sum_{j=1}^k u_j \right )^2 \middle | \text{$1,\dots, k$ win}\right]
&= \frac{\mathbb E_s\left[(u_1-u_2)^2\mid\text{$1$ and $2$ win}\right]}{2}\\
&= \mathbb E_s\left[u_1^2\mid\text{$1$ wins}\right]
- \mathbb E_s\left[u_1u_2\mid\text{$1$ and $2$ win}\right]\\
&= \var(u_1\mid\text{$1$ wins}) - \text{Cov}(u_1, u_2\mid\text{$1$ and $2$ win}).
\end{align*}
\end{proof2}
\bigbreak

With pure private values, ex-post individual rationality holds. The lemma below shows that, in this case, any ranking of auction formats in terms of ex-ante variance ($\var$) or winners' ex-ante variance ($\wvar$) is identical. In contrast, a ranking with respect to empirical variance ($\ev$) can differ depending on whether only winners are considered or all bidders.
\begin{lemma}\label{lem:connecting-variances}
    Assuming that the auction is a \emph{winners-pay auction}, the empirical variance and the ex-ante variance can be decomposed, respectively, as
    $\ev = \frac{k(k-1)}{n(n-1)}\cdot\wev~+ \left(1-\frac{k-1}{n-1}\right)E_{\ssb}[u_1^2]$ and 
    $\var = \frac{k}{n}\cdot\wvar + \left(\frac{n}{k}-1\right)\cdot E_{\ssb}[u_1]^2$.
\end{lemma}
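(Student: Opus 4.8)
The plan is to exploit the one structural feature that is special to a \emph{winners-pay} auction: losing bidders pay nothing and get nothing, so $u_i(\ssb) = 0$ whenever $i$ is not among the $k$ highest bids. Together with the exchangeability of the iid signals --- which makes every $k$-subset of $\bidders$ equally likely to be the winning set --- this lets me translate conditional-on-winning moments into unconditional ones. Specifically, $P(1\text{ wins}) = k/n$ and $P(1\text{ and }2\text{ win}) = \binom{n-2}{k-2}/\binom{n}{k} = \frac{k(k-1)}{n(n-1)}$, and since $u_1$ vanishes off $\{1\text{ wins}\}$ and $u_1u_2$ vanishes off $\{1\text{ and }2\text{ win}\}$, I obtain the three conversion formulas $E_{\ssb}[u_1] = \frac{k}{n}E_{\ssb}[u_1\mid 1\text{ wins}]$, $E_{\ssb}[u_1^2] = \frac{k}{n}E_{\ssb}[u_1^2\mid 1\text{ wins}]$, and $E_{\ssb}[u_1u_2] = \frac{k(k-1)}{n(n-1)}E_{\ssb}[u_1u_2\mid 1\text{ and }2\text{ win}]$. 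These three identities do all the work.

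For the first decomposition, I would start from the unconditional identity $\ev = E_{\ssb}[u_1^2] - E_{\ssb}[u_1u_2]$ (exactly the computation already carried out in the proof of \cref{lem:alternative-variance}, using $E_{\ssb}[u_1^2] = E_{\ssb}[u_2^2]$) and from $\wev = E_{\ssb}[u_1^2\mid 1\text{ wins}] - E_{\ssb}[u_1u_2\mid 1\text{ and }2\text{ win}]$. Substituting the conversion formulas into the expression for $\ev$ and then eliminating $E_{\ssb}[u_1u_2\mid 1\text{ and }2\text{ win}]$ via the $\wev$ formula gives $\ev = \frac{k(k-1)}{n(n-1)}\wev + \bigl(\frac{k}{n} - \frac{k(k-1)}{n(n-1)}\bigr)E_{\ssb}[u_1^2\mid 1\text{ wins}]$. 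Simplifying $\frac{k}{n} - \frac{k(k-1)}{n(n-1)} = \frac{k}{n}\cdot\frac{n-k}{n-1}$ and multiplying back by $E_{\ssb}[u_1^2\mid 1\text{ wins}] = \frac{n}{k}E_{\ssb}[u_1^2]$ turns the last coefficient into $\frac{n-k}{n-1} = 1 - \frac{k-1}{n-1}$, which is the claimed identity.

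For the second decomposition, I would write $\var = E_{\ssb}[u_1^2] - E_{\ssb}[u_1]^2$ and $\wvar = E_{\ssb}[u_1^2\mid 1\text{ wins}] - E_{\ssb}[u_1\mid 1\text{ wins}]^2$, substitute $E_{\ssb}[u_1^2] = \frac{k}{n}E_{\ssb}[u_1^2\mid 1\text{ wins}]$ and $E_{\ssb}[u_1] = \frac{k}{n}E_{\ssb}[u_1\mid 1\text{ wins}]$, and then eliminate $E_{\ssb}[u_1^2\mid 1\text{ wins}]$ using the $\wvar$ formula. This leaves $\var = \frac{k}{n}\wvar + \bigl(\frac{k}{n} - \frac{k^2}{n^2}\bigr)E_{\ssb}[u_1\mid 1\text{ wins}]^2$; replacing $E_{\ssb}[u_1\mid 1\text{ wins}] = \frac{n}{k}E_{\ssb}[u_1]$ converts the coefficient $\frac{k}{n}-\frac{k^2}{n^2} = \frac{k(n-k)}{n^2}$ into $\frac{n-k}{k} = \frac{n}{k}-1$, giving the stated formula.

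There is no real obstacle here: the argument is purely bookkeeping once the two combinatorial probabilities $k/n$ and $k(k-1)/(n(n-1))$ are in hand, and these are immediate from exchangeability. The only points worth stating explicitly in the write-up are (i) that $k \geq 2$ guarantees $\{1\text{ wins}\}$ and $\{1\text{ and }2\text{ win}\}$ have positive probability, so the conditional expectations are well defined, and (ii) that ``winners-pay'' is used precisely through $u_i = 0$ on the losing event --- without it none of the conversion identities hold.
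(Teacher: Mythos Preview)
Your proposal is correct and follows essentially the same route as the paper's proof: both arguments rest on the winners-pay identity $u_i=0$ off the winning event, derive the conversion formulas $E_{\ssb}[u_1^m]=\frac{k}{n}E_{\ssb}[u_1^m\mid 1\text{ wins}]$ and $E_{\ssb}[u_1u_2]=\frac{k(k-1)}{n(n-1)}E_{\ssb}[u_1u_2\mid 1,2\text{ win}]$, and then do the same algebraic elimination to arrive at the two decompositions. Your explicit mention of the combinatorial probabilities and of the role of $k\ge 2$ is a nice touch that the paper leaves implicit.
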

Recall that $E_\ssb[u_1]$ does not depend on the auction format (by revenue equivalence), while $E_\ssb[u_1^2]$ does.\\

\begin{proof2}[Proof of \cref{lem:connecting-variances}]\phantomsection\label{proof:connecting-variances}
We first note that
\begin{align*}
    \wvar &= E_\ssb[u_1^2\,|\,\text{1 wins}] - E_{\ssb}[u_1\,|\,\text{1 wins}]^2 = \frac{n}{k}E_\ssb[u_1^2] - \left(\frac{n}{k}\right)^2E_{\ssb}[u_1]^2
\end{align*}
For the ex-ante variance, we write:
\begin{align*}
\var &= E_\ssb[u_1^2] - E_{\ssb}[u_1]^2
= P_{\ssb}[\text{1 wins}]\cdot E_{\ssb}[u_1^2\,|\,\text{1 wins}] - E_{\ssb}[u_1]^2\\
&= P_{\ssb}[\text{1 wins}]\cdot E_{\ssb}[u_1^2\,|\,\text{1 wins}] - P_{\ssb}[\text{1 wins}]\cdot E_{\ssb}[u_1\,|\,\text{1 wins}]^2 + P_{\ssb}[\text{1 wins}]\cdot E_{\ssb}[u_1\,|\,\text{1 wins}]^2 - E_{\ssb}[u_1]^2\\
&= P_{\ssb}[\text{1 wins}]\cdot \wvar + P_{\ssb}[\text{1 wins}]\cdot E_{\ssb}[u_1\,|\,\text{1 wins}]^2 - E_{\ssb}[u_1]^2\\
&= P_{\ssb}[\text{1 wins}]\cdot \wvar + \frac{P_{\ssb}[\text{1 wins}]^2}{P_{\ssb}[\text{1 wins}]}\cdot E_{\ssb}[u_1\,|\,\text{1 wins}]^2 - E_{\ssb}[u_1]^2\\
&= P_{\ssb}[\text{1 wins}]\cdot \wvar+\left(\frac{n}{k}-1\right)\cdot E_{\ssb}[u_1]^2
\end{align*}
For the empirical variance, we write:
\begin{align*}
\wev~&= E_\ssb[u_1^2] - E_{\ssb}[u_1u_2]
= P_{\ssb}[\text{1 wins}]\cdot E_{\ssb}[u_1^2\,|\,\text{1 wins}] - P_{\ssb}[\text{1 and 2 win}]\cdot E_{\ssb}[u_1 u_2 \,|\, \text{1 and 2 win}]\\
&= P_{\ssb}[\text{1 and 2 win}]\cdot \wev+\left(1-\frac{P_{\ssb}[\text{1 and 2 win}]}{P_{\ssb}[\text{1 wins}]}\right)\cdot E_{\ssb}[u_1^2]
\end{align*}
\end{proof2}

\begin{proof2}[Proof of \cref{prop:WEV-satisfies-Pigou-Dalton}]\phantomsection\label{proof:prop:WEV-satisfies-Pigou-Dalton}
    Without loss of generality, consider an outcome profile $u$ with three outcomes, $u_i,u_j$ and $U$, where $u_i > u_j$, and $U$ is arbitrary. Induce a Pigou-Dalton transfer $t>0$ such that $u_i' = u_i - t > u_j$ and $u'_j = u_j + t < u_i$, and $U$ remains the same. The outcome profile after the transfer is denoted $u'$. We show that the ranking between $u$ and $u'$ according to \wev~coincides with what the Pigou-Dalton principle requires, namely it must be that $WEV(u') < \wev(u)$. Let $W:=(u_i - U)^2 + (u_j - U)^2 $. Then
    \begin{align*}
          & (u'_i - U)^2 + (u'_j - U)^2 \\
        =~ & (u_i - t - U)^2 + (u_j + t - U)^2 \\ 
        =~ & (u_i - U)^2 - 2t(u_i - U) + t^2 + (u_j - U)^2 + 2t(u_j - U) + t^2\\
        =~ & W + 2t \left( t - u_i + U + u_j - U \right)\\
        =~ & W + 2t \left( u_j - (u_i - t)\right)\\
        <~ & W
    \end{align*}
    The final inequality follows by the assumption that the transfer does not make $i$ poorer than $j$ was to start with. As $U$ was arbitrarily chosen and, to compute \wev, expectations are taken around the sum of squared differences of the realized utilities, the result follows.
\end{proof2}

\subsection{Equilibrium bidding}\label{app:sec:equilibrium-bidding}
\begin{proof2}[Proof of \cref{prop:equilibrium-bid-FRB}]\phantomsection\label{proof:equilibrium-bid-FRB}
    Consider bidder $i$ and let all bidders $j\neq i$ follow the bidding strategy $\bid[U](s_j) = \widetilde V(s_j,s_j)$. First, observe that $\bid[U]$ is continuous and increasing. Then bidder $i$'s expected payoff when their signal is $s_i$ and bidding $\bid[U](z)$ is given by
    \begin{align*}
        U(s_i,z) :=
        \int_0^{z} \left(\widetilde V(s_i,y) - \widetilde V(y,y)\right) \gnk(y)\diff y
    \end{align*}
    Because $\widetilde V(s_i,y)$ is increasing in $s_i$, it holds for all $y < s_i$ that $\widetilde V(s_i,y)-\widetilde V(y,y) > 0 $, and for all $y > s_i$ that $\widetilde V(s_i,y) - \widetilde V(y,y) < 0$.
    Therefore, choosing $z = s_i$ maximizes bidder $i$' expected payoff $U(s_i,z)$.
\end{proof2}

\bigbreak
\begin{proof2}[Proof of \cref{prop:equilibrium-bid-delta}]\phantomsection\label{proof:prop:equilibrium-bid-delta}
    First, observe that $\bid$ is continuous. We verify that it is also monotone: writing $\Gnk =: G$, $\gnk =:g$, and $\widetilde V(s,s) =: V(s)$, an alternative expression for $\bid$ is given by
    \begin{equation}
        \bid(s) = V(s) - \frac{\int_{0}^s V'(y)G(y)^{\frac{1}{\delta}}\diff y}{G(s)^{\frac{1}{\delta}}}.
    \label{eq:biddelta}
    \end{equation}
    In particular, it is differentiable almost everywhere and we can compute its derivative.
    \begin{equation}
        \bid{}'(s) = \frac{g(s)\int_{0}^s V'(y)G(y)^{\frac{1}{\delta}}\diff y}{\delta G(s)^{1+\frac{1}{\delta}}}
    \label{eq:biddeltaprime}
    \end{equation}
    which it positive almost everywhere. Next, assume that all bidders $j\neq i$ follow the bidding strategy $\bid$, and let $\bid(z)$ be bidder $i$'s bid, whose expected utility is given by
    \begin{align*}
        U(s_i,z) :=
        \int_0^{z} \left( \widetilde V(s_i,y) - \delta \bid(z)  - (1-\delta)  \bid(y) \right) g(y) \diff y
    \end{align*}
    The derivative of $U(s_i,z)$ is
    \begin{align*}
        \frac{\diff U}{\diff z}(s_i,z) &=
        \widetilde V(s_i,z)g(z)-\delta\bid{}'(z)G(z)-\delta\bid(z)g(z)-(1-\delta)\bid(z)g(z)\\
        &= (\widetilde V(s_i,z)-\bid(z))g(z)-\delta\bid{}'(z)G(z).
    \end{align*}
    In equilibrium, the first order condition requires $\frac{\diff U}{\diff z}(s_i,s_i) = 0$. We solve this differential equation using $G^{\frac{1}{\delta}-1}$ as the integrating factor. We obtain
    \begin{align*}
        \frac{\diff}{\diff z} \left[G(z)^{\frac{1}{\delta}}\bid(z)\right]
        = \left(\frac{1}{\delta}G(z)^{\frac{1}{\delta}-1}\right)\cdot
        (\bid(z)g(z)+\delta\bid{}'G(z))
        = \left(\frac{1}{\delta}G(z)^{\frac{1}{\delta}-1}\right) \cdot\widetilde V(s_i,z) g(z).
    \end{align*}
    Solving for $\bid$, we obtain
    \begin{align}\label{equ:equilibrium-bid-delta}
        \bid(s) = \frac{\int_{0}^{s} V(y) \gnk(y) \Gnk(y)^{\frac{1}{\delta} -1} \diff y}{\delta \Gnk(s)^{\frac{1}{\delta}}}.
    \end{align}
    Using equations (\ref{eq:biddelta}) and (\ref{eq:biddeltaprime}), and the fact that $\widetilde V(s_i,z)$ is increasing in $s_i$, we have that $\frac{\diff U}{\diff z}$ is positive when $z\leq s_i$ and negative when $z \geq s_i$. Therefore, choosing $z=s_i$ maximizes $i$'s expected payoff $U(s_i,z)$.
    
    Finally, we derive the expression for $\bid$ stated in the the proposition from \cref{equ:equilibrium-bid-delta}. Writing $\Gnk =: G$ and $\gnk =: g$, observe that the derivative of $\delta G^{\frac{1}{\delta}}$ is $gG^{\frac{1}{\delta}-1}$. Using integration by parts and a change of variable, we obtain
    \begin{align*}
        \int_0^s V(y) g(y) G(y)^{\frac{1}{\delta}-1}\diff y
        &=\left[\delta V(y) G(y)^{\frac{1}{\delta}}\right]_0^s-\delta \int_0^s V'(y) G(y)^{\frac{1}{\delta}}\diff y\\
        &=\delta V(s) G(s)^{\frac{1}{\delta}}-\delta \int_{V(0)}^{V(s)} G(V^{-1}(y))^{\frac{1}{\delta}}\diff y.
    \end{align*}
    Dividing by $\delta G^{\frac{1}{\delta}}$ gives the result.
\end{proof2}

\bigbreak
\begin{lemma} \label{app:lem:limit_delta_0}
For any continuous function $\varphi: \support \rightarrow \mathbb R$, and for all $s\in\osupport$, we have
\begin{align*}
\lim_{\delta\rightarrow 0} \int_0^s \frac{\varphi(t)}{\delta} \left(\frac{G(t)}{G(s)}\right)^\frac{1}{\delta} \diff t &= \varphi(s)\cdot\frac{G(s)}{g(s)}\\
\lim_{\delta\rightarrow 0} \int_0^s \ln\left(\frac{G(s)}{G(t)}\right)\frac{\varphi(t)}{\delta^2} \left(\frac{G(t)}{G(s)}\right)^\frac{1}{\delta} \diff t &= \varphi(s)\cdot\frac{G(s)}{g(s)}
\end{align*}
where $\Gnk =: G$ and $\gnk =: g$.
\end{lemma}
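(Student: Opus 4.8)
Here is the proof I would give.

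The plan is to reduce both limits, via the change of variables $u=G(t)/G(s)$, to the elementary fact that $\tfrac1\delta u^{1/\delta}$ and $\tfrac{-\ln u}{\delta^2}u^{1/\delta}$ are approximate identities on $(0,1]$ concentrating at $u=1$ as $\delta\to0$, and then to conclude by splitting the domain of integration near $u=1$. First I would fix $s\in\osupport$ and set up the substitution: since $f>0$ on $\osupport$, the density $g=\gnk$ is strictly positive there and $G=\Gnk$ is a continuous, strictly increasing, absolutely continuous bijection of $\osupport$ onto $(0,1)$ with $G(0)=0$; put $t(u):=G^{-1}(uG(s))$, so that $t\colon(0,1]\to(0,s]$ is a homeomorphism with $t(1)=s$, and set $\psi(u):=\varphi(t(u))\,G(s)/g(t(u))$, so $\psi(1)=\varphi(s)G(s)/g(s)$ is exactly the claimed limit. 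Multiplying each integrand by $g(t)/g(t)$ and changing variables (first $w=G(t)$, then $w=uG(s)$) gives
\begin{align*}
\int_0^s \frac{\varphi(t)}{\delta}\Bigl(\tfrac{G(t)}{G(s)}\Bigr)^{1/\delta}\diff t &= \int_0^1\frac{\psi(u)}{\delta}\,u^{1/\delta}\diff u,\\
\int_0^s \ln\!\Bigl(\tfrac{G(s)}{G(t)}\Bigr)\frac{\varphi(t)}{\delta^2}\Bigl(\tfrac{G(t)}{G(s)}\Bigr)^{1/\delta}\diff t &= \int_0^1(-\ln u)\,\frac{\psi(u)}{\delta^2}\,u^{1/\delta}\diff u.
\end{align*}
The key structural observation, obtained by changing variables back, is that $\int_0^{1-\eta}|\psi(u)|\diff u=\int_0^{t(1-\eta)}|\varphi(t)|\diff t<\infty$ for every $\eta\in(0,1)$, since $\varphi$ is continuous, hence bounded, on the compact set $[0,t(1-\eta)]\subset\support$.

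Next I would use the elementary identities $\int_0^1\tfrac1\delta u^{1/\delta}\diff u=\tfrac1{1+\delta}$ and $\int_0^1(-\ln u)\tfrac1{\delta^2}u^{1/\delta}\diff u=\tfrac1{(1+\delta)^2}$, both $\to1$. Writing $\psi=\psi(1)+(\psi-\psi(1))$, the $\psi(1)$-part contributes $\tfrac{\psi(1)}{1+\delta}\to\psi(1)$ (resp.\ $\tfrac{\psi(1)}{(1+\delta)^2}\to\psi(1)$), so it remains to show the $(\psi-\psi(1))$-part vanishes. Given $\varepsilon>0$, pick $\eta>0$ with $|\psi(u)-\psi(1)|<\varepsilon$ on $[1-\eta,1]$; the portion of each integral over $[1-\eta,1]$ is then at most $\varepsilon$ times the corresponding full-kernel integral, hence $\le\varepsilon$ for all $\delta$. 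On $[0,1-\eta]$ I would bound the kernels uniformly: $u^{1/\delta}\le(1-\eta)^{1/\delta-1}u$ for $\delta<1$, and $(-\ln u)\,u^{1/\delta}\le\tfrac{2\delta}{e}\,(1-\eta)^{1/(2\delta)}$ using $\sup_{u\in(0,1]}(-\ln u)u^{1/(2\delta)}=\tfrac{2\delta}{e}$. Hence the $[0,1-\eta]$-portions are bounded by $\tfrac{(1-\eta)^{1/\delta-1}}{\delta}\int_0^{1-\eta}|\psi-\psi(1)|\diff u$ and $\tfrac{2(1-\eta)^{1/(2\delta)}}{e\delta}\int_0^{1-\eta}|\psi-\psi(1)|\diff u$ respectively, and the prefactors tend to $0$ as $\delta\to0$ because the exponential factor dominates $1/\delta$. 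Letting first $\delta\to0$ and then $\varepsilon\to0$ shows both integrals converge to $\psi(1)=\varphi(s)G(s)/g(s)$.

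The only delicate point, and the one I would be most careful with, is the behaviour near $u=0$ (i.e.\ $t=0$): there $1/g(t)$ blows up, and for the second limit $\ln(G(s)/G(t))$ blows up as well, so there is a genuine competition between an exploding weight and an exponentially small kernel. This is neutralised precisely because, after the substitution, $\psi$ is honestly integrable on $[0,1-\eta]$ with $\int_0^{1-\eta}|\psi|=\int_0^{t(1-\eta)}|\varphi|$, while the kernels decay exponentially and uniformly on $[0,1-\eta]$; consequently no pointwise estimate of $1/g$ near the lower endpoint of the support is needed. (The substitution is a $C^1$ diffeomorphism of $(0,s)$, so its validity as a change of variables in the improper integrals is routine; the only other mild regularity used is continuity of $g$ at the single point $s$, already implicit throughout the paper.)
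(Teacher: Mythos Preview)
Your proof is correct. Both you and the paper begin with the same substitution $u=G(t)/G(s)$, arriving at integrals of the form $\int_0^1 \psi(u)\cdot K_\delta(u)\diff u$ with the same $\psi$. From there the arguments diverge. The paper performs a \emph{second} change of variable $u=v^\delta$, reducing the first (resp.\ second) integral to $\int_0^1 \psi(v^\delta)v^\delta\diff v$ (resp.\ $\int_0^1 \ln(1/v)\psi(v^\delta)v^\delta\diff v$), and then applies dominated convergence; to obtain an integrable majorant it needs the pointwise estimate $\psi(u)=O(1/u)$ near $0$, which in turn requires checking that $G/g$ stays bounded near the lower endpoint of the support. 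You instead run a direct approximate-identity argument: split at $1-\eta$, use continuity of $\psi$ at $1$ on the right piece, and on the left piece exploit only the integrability $\int_0^{1-\eta}|\psi(u)|\diff u=\int_0^{t(1-\eta)}|\varphi(t)|\diff t<\infty$, which is immediate from undoing the substitution. Your route is more elementary (no dominated convergence, no second substitution) and sidesteps the verification that $G/g$ is bounded near $0$; the paper's route is shorter once that bound is in hand. Both rely on the same mild regularity---continuity of $g$ (hence of $\psi$) at the single point $s$---which you correctly flag.
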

\begin{proof}
    Fix $\delta > 0$, and let $\psi: (0,1]\rightarrow\mathbb R$ be a continuous function, such that $\psi(u) = O(1/u)$ when $u\rightarrow 0$. Using the change of variable $u = v^\delta$, we have that
    \begin{align*}
        \int_0^1 \frac{\psi(u)}{\delta} u^\frac{1}{\delta}\diff u = \int_0^1 \psi(v^\delta) v^\delta dv\\
        \int_0^1 \ln(1/u)\frac{\psi(u)}{\delta^2} u^\frac{1}{\delta}\diff u = \int_0^1 \ln(1/v)\psi(v^\delta) v^\delta dv.
    \end{align*}
    Observe that for all fixed $v \in (0,1]$, and taking $\delta\rightarrow 0$, the first (resp. second) integrand converges towards $\psi(1)$ (resp., $\psi(1)\ln(1/v)$). We define the constant $M = \sup_{u\in(0,1]} u\psi(u)$, we bound the first integrand by $M$ (resp.the second integrand by $M\ln(1/v)$), and we use the theorem of dominated convergence, which gives
    \begin{align*}
        \lim_{\delta\rightarrow 0}
        \int_0^1 \frac{\psi(u)}{\delta} u^\frac{1}{\delta}\diff u &= \int_0^1 \psi(1) \diff v = \psi(1)\\
        \lim_{\delta\rightarrow 0}
        \int_0^1 \ln(1/u)\frac{\psi(u)}{\delta^2} u^\frac{1}{\delta}\diff u &= \int_0^1 \psi(1)\ln(1/v) \diff v = \psi(1)
    \end{align*}
    To prove the lemma, observe that with the change of variable $u = \frac{G(t)}{G(s)}$, we have
    \begin{align*}
        \int_0^s \frac{\varphi(t)}{\delta} \left(\frac{G(t)}{G(s)}\right)^\frac{1}{\delta} \diff t &= \int_0^1 \frac{\psi(u)}{\delta} u^\frac{1}{\delta}\diff u\\
        \int_0^s \ln\left(\frac{G(s)}{G(t)}\right)\frac{\varphi(t)}{\delta^2} \left(\frac{G(t)}{G(s)}\right)^\frac{1}{\delta} \diff t  &= \int_0^1 \ln(1/u)\frac{\psi(u)}{\delta^2} u^\frac{1}{\delta}\diff u
    \end{align*}
    where we define
    $$
    \psi(u) := G(s)\cdot \frac{\varphi(G^{-1}(uG(s))}{g(G^{-1}(uG(s))}.
    $$
    Finally, it remains to prove that $\psi(u) = O(1/u)$ when $u\rightarrow 0$. First, observe that $\varphi$ is bounded on $[0,s]$. Second, observe that we have
    $$
    \frac{u}{g(G^{-1}(uG(s)))} = \frac{1}{G(s)}\frac{G(x)}{g(x)}, 
    $$
    where $x = G^{-1}(uG(s)) \rightarrow 0$. Because $g$ is positive and integrable in $0$, we have that $G/g$ is bounded.
    Therefore, the overall limit when $\delta\rightarrow 0$ is equal to $\psi(1)$, which concludes the proof.
\end{proof}

\begin{lemma}\label{lem:derivatives} The following formulas can be derived:
\begin{align*}
\bid(s) &= \begin{cases}
V(s) &\text{if }\delta=0\\
V(s) - \int_0^s V'(y) \left(\frac{G(y)}{G(s)}\right)^{\frac{1}{\delta}}\diff y &\text{if }\delta > 0
\end{cases}\\
\frac{\partial(\bid(s))}{\partial s} &= \begin{cases}
V'(s) &\text{if }\delta=0, s>0\\
\frac{g(s)}{G(s)}\int_0^s \frac{V'(y)}{\delta} \left(\frac{G(y)}{G(s)}\right)^{\frac{1}{\delta}}\diff y &\text{if }\delta > 0
\end{cases}\\
\frac{\delta\partial( \bid(s))}{\partial \delta} &= \begin{cases}
0 &\text{if $\delta=0$ or $s=0$} \\
\int_0^s V'(y) \ln\left(\left(\frac{G(y)}{G(s)}\right)^{1/\delta}\right)\left(\frac{G(y)}{G(s)}\right)^{\frac{1}{\delta}}\diff y  &\text{for  }\delta,s >0
\end{cases}\\      
\frac{\partial^2( \delta\bid(s))}{\partial s\partial \delta} &= \frac{-g(s)}{\delta G(s)}\int_0^s V'(y) \log\left( \left( \frac{G(y)}{G(s)}\right)^{1/\delta} \right) \left( \frac{G(y)}{G(s)}\right)^{1/\delta} \diff y \text{  for  }\delta,s >0
\end{align*}
\end{lemma}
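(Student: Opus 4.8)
The plan is to handle the interior regime $s\in\osupport,\ \delta>0$ by direct differentiation of the closed form for $\bid$ already obtained, and to recover the boundary values ($\delta=0$ or $s=0$) as limits. For the first formula: when $\delta=0$, \cref{prop:equilibrium-bid-FRB} gives $\bid(s)=V(s)$ outright; when $\delta>0$, equation~\eqref{eq:biddelta} states $\bid(s)=V(s)-\bigl(\int_0^s V'(y)G(y)^{1/\delta}\diff y\bigr)/G(s)^{1/\delta}$, and pulling the $s$-constant $G(s)^{-1/\delta}$ inside the integral gives $\bid(s)=V(s)-\int_0^s V'(y)\bigl(G(y)/G(s)\bigr)^{1/\delta}\diff y$, as claimed. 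For the second formula: for $\delta=0$, $s>0$, $\bid=V$ is differentiable on $\osupport$ (the $c=1$ expression and its differentiability are recorded in \cref{lem:value-function-bounded}; for general $c$, $V$ is an affine combination of $s$ and the $c=1$ expression), so $\partial_s\bid(s)=V'(s)$; for $\delta>0$, equation~\eqref{eq:biddeltaprime} rearranged the same way is exactly $\partial_s\bid(s)=\frac{g(s)}{G(s)}\int_0^s\frac{V'(y)}{\delta}\bigl(G(y)/G(s)\bigr)^{1/\delta}\diff y$.

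For the third formula I would fix $s\in\osupport$, $\delta>0$ and differentiate the first formula in $\delta$ under the integral sign. The chain rule gives, for fixed $r\in(0,1]$, $\partial_\delta r^{1/\delta}=-\frac1\delta\ln(r^{1/\delta})\,r^{1/\delta}$; applying this with $r=G(y)/G(s)$ yields $\partial_\delta\bid(s)=\frac1\delta\int_0^s V'(y)\ln\!\bigl((G(y)/G(s))^{1/\delta}\bigr)(G(y)/G(s))^{1/\delta}\diff y$, hence the asserted expression for $\delta\,\partial_\delta\bid(s)$. For $s=0$ the integral is empty and equals $0$; for $\delta\to0$ I would rewrite $\delta\,\partial_\delta\bid(s)=-\delta\cdot\frac1{\delta^2}\int_0^s V'(y)\ln(G(s)/G(y))(G(y)/G(s))^{1/\delta}\diff y$ and apply \cref{app:lem:limit_delta_0} with $\varphi=V'$, so the second factor tends to $V'(s)G(s)/g(s)$ and the product to $0$, matching the stated boundary value.

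For the fourth formula I would use that $(s,\delta)\mapsto\delta\bid(s)$ is $C^2$ on $\{s\in\osupport,\ \delta>0\}$ and invoke Schwarz's theorem to compute the mixed partial as $\partial_\delta\bigl(\partial_s(\delta\bid(s))\bigr)=\partial_\delta\bigl(\delta\,\partial_s\bid(s)\bigr)$. By the second formula, $\delta\,\partial_s\bid(s)=\frac{g(s)}{G(s)}\int_0^s V'(y)(G(y)/G(s))^{1/\delta}\diff y$, where only the integrand carries a $\delta$; differentiating it under the integral sign exactly as in the previous paragraph produces $-\frac{g(s)}{\delta G(s)}\int_0^s V'(y)\ln\!\bigl((G(y)/G(s))^{1/\delta}\bigr)(G(y)/G(s))^{1/\delta}\diff y$, which is the claimed identity (with $\log=\ln$).

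The only step that is not pure bookkeeping is justifying differentiation under the integral sign, and I expect that to be the (modest) crux. On a compact interval $\delta\in[\delta_0,1]$ and for $y\in[0,s]$ one has $(G(y)/G(s))^{1/\delta}\le G(y)/G(s)\le 1$ since $1/\delta\ge1$, and the elementary bound $u\ln(G(s)/u)\le G(s)/e$ on $[0,G(s)]$ gives $\big|\partial_\delta(G(y)/G(s))^{1/\delta}\big|\le \frac{1}{e\delta_0^2}$ uniformly in $(y,\delta)$; combined with integrability of $V'$ on $[0,s]$ (immediate since $V$ is monotone, so $\int_0^s V'\le V(s)-V(0)<\infty$), this supplies the dominating function needed for Leibniz's rule and the same domination covers the $\delta\to0$ limit. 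Everything else — the rearrangements of \eqref{eq:biddelta} and \eqref{eq:biddeltaprime}, the one-line chain-rule identity for $\partial_\delta r^{1/\delta}$, and the two limit evaluations via \cref{app:lem:limit_delta_0} — is elementary.
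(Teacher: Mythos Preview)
Your proposal is correct and follows essentially the same approach as the paper: both arguments rest on differentiating the closed form for $\bid$ under the integral sign, justified by the elementary bound $-u\ln u\le 1/e$ on $(0,1]$ applied to $u=(G(y)/G(s))^{1/\delta}$, together with integrability of $V'$. The only cosmetic differences are that the paper computes $\partial_\delta(\delta\bid)$ first and then subtracts $\bid$ to isolate $\delta\,\partial_\delta\bid$, whereas you differentiate $\bid$ directly in~$\delta$; and that the paper obtains the boundary value $\delta\,\partial_\delta\bid\to 0$ as $\delta\to 0$ by the same $V'(y)/e$ domination and pointwise convergence of the integrand to zero, whereas you route it through \cref{app:lem:limit_delta_0}.
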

\begin{proof}
In order to derive the value of these functions at points where they are not directly, defined, we will use the dominated convergence theorem. \\

(1) Let $s \in \osupport$. We first look at $\bid(s)=V(s) - \int_0^s V'(y) \left(\frac{G(y)}{G(s)}\right)^{\frac{1}{\delta}}\diff y$. Let $h(\delta,y)$ be the function under the integral. Because $G$ is increasing, for $y<s$ we have that $G(y)/G(s)<1$. Hence $h$ is dominated by $V'$, and $\lim_{\delta \rightarrow 0}h(\delta,y)=0$, hence by dominated convergence $\bid(s)=V(s)$ when $\delta=0$, and the function is separately continuous over $[0,1] \times \support$.\\

(2) We now consider the derivative of $\bid$ with respect to $s$. Let $s \in \osupport$. There exists $M>m>0$ such that $s \in [m,M]$. We focus on the derivative of the integral part: 
\begin{equation*}
- \frac{\partial}{\partial s} V'(y)  \left(\frac{G(y)}{G(s)} \right)^{1/\delta} = V'(y) \frac{g(s)G^{1/\delta}(y)}{G^{1/\delta+1}(s)} \leq V'(y) \frac{g(s)}{G(s)} \leq V'(y) \frac{\sup_{t \in [m,M]} g(t)}{G(m)},
\end{equation*}
where the $\sup_{t \in [m,M]} g(t)$ is finite as $g$ is continuous. Because $V'$ is integrable, we can use dominated convergence. Using the Leibniz integral rule yields the result. The limit as $\delta$ goes to $0$ can be computed by applying \cref{app:lem:limit_delta_0}.\\

(3)  Let us now compute the derivative of $\bid$ with respect to $\delta$. Again, we use a dominated convergence property to show that the integral and derivative can be inverted. It is easier to show that this can be done for the function $\delta \bid(s)$, and we have 
\begin{equation*}
\frac{\partial \delta \bid(s)}{\partial \delta} =\bid+  \delta \frac{\partial \bid(s)}{\partial \delta}.
\end{equation*}
Computing the derivative of $\delta \bid(s)$, we can recover that of $\bid(s)$.

Let $h(\delta,y,s)=\delta V'(y)(G(y)/G(s))^{1/\delta}$ be the function under the integral part of $\delta \bid$. We have 
\begin{align*}
\frac{\partial h(\delta,y,s)}{\partial \delta}&=V'(y)\left(\frac{G(y)}{G(s)}\right)^{1/\delta}-V'(y)\frac{\delta}{\delta^2}\log\left(\frac{G(y)}{G(s)}\right)\left(\frac{G(y)}{G(s)}\right)^{1/\delta}\\
&=V'(y)\left(\frac{G(y)}{G(s)}\right)^{1/\delta}-V'(y)\log\left(\left(\frac{G(y)}{G(s)}\right)^{1/\delta}\right)\left(\frac{G(y)}{G(s)}\right)^{1/\delta}.
\end{align*}
The first part is again dominated by $V'$, which is integrable. Focusing on the second part, we define for $0< u < w <1$ the function $\psi(u,w)=(u/w)\log(w/u)$. Note that $0<s<y<\vbar$ implies that for $u=G^{1/\delta}(y)$ and $w=G^{1/\delta}(s)$, we have $0<u<w<1$ as $G$ is increasing and takes values in $(0,1)$ over $\osupport$ by definition. Fix $w$, and take the derivative with respect to $u$: we obtain that $\psi'(u,w)=(\log(w/u)-1)/w$ which is positive as long as $u\leq w/e$ and negative otherwise. The maximum of $\psi$ for $u<w$ is at $u=w/e$ and $\psi(w/e,w)=1/e$. This shows the right-hand side of $h$ is smaller than $V'(y)/e$, which is also integrable. Overall by dominated convergence we can invert derivative and integral: $\frac{\partial}{\partial \delta} \int h= \int \frac{\partial}{\partial \delta} h$. Thus 
\begin{equation*}
\frac{\partial \delta \bid}{\partial \delta}= V(s) - \int_0^s V'(y)\left(\frac{G(y)}{G(s)}\right)^{1/\delta} dy + \int_0^s V'(y)\log\left(\left(\frac{G(y)}{G(s)}\right)^{1/\delta}\right)\left(\frac{G(y)}{G(s)}\right)^{1/\delta},
\end{equation*}
and we recover 
\begin{equation*}
 \delta \frac{\partial \bid(s)}{\partial \delta}= \int_0^s V'(y)\log\left(\left(\frac{G(y)}{G(s)}\right)^{1/\delta}\right)\left(\frac{G(y)}{G(s)}\right)^{1/\delta}.  
\end{equation*}

Using the same upper bound on $\psi$, we can show that the integrand of $\delta \frac{\partial \bid(s)}{\partial \delta}$ is smaller than $V'(y)/e$ which allows for domination both in small $\delta$ and small $s$. By dominated convergence, we obtain that the limit of $\delta \frac{\partial \bid(s)}{\partial \delta}$ as either $\delta$ or $s$ go to $0$ is $0$.\\

(4) Finally, let us compute the cross derivative. The integrand of 
$\frac{\partial \bid(s)}{\partial s}$ is $h(\delta,y,s)=V'(y) (G(y)/G(s))^{1/\delta}$, and its derivative with respect to delta is $-\frac{1}{\delta}V'(y)\log(G^{1/\delta}(y))G^{1/\delta}(y)$. Because this function is continuous on the open set $(0,1)\times \support$, we can apply dominated convergence to show that the order of derivative and integral can be reversed. Therefore
\begin{align*}
\frac{\partial^2  \delta \bid }{\partial \delta \partial s}&=\frac{g(s)}{G(s)}\frac{ - G^{1/\delta} (s) \frac{1}{\delta}\int_{0}^s V'(y) \log(G^{1/\delta}(y))G^{1/\delta}(y) \diff y + \frac{1}{\delta} \log(G^{1/\delta}(s)) G^{1/\delta}(s)\int_{0}^s V'(y) G^{1/\delta}(y) \diff y }{G^{2/\delta}}\\
&=\frac{-g(s)}{\delta G(s)}\int_0^s V'(y) \log\left( \left( \frac{G(y)}{G(s)}\right)^{1/\delta} \right) \left( \frac{G(y)}{G(s)}\right)^{1/\delta} \diff y.
\end{align*}
\end{proof}

\begin{lemma}\label{lem:monotone_continuous}
Consider a function $\varphi: [0,1]\times \osupport\rightarrow \mathbb R_+$, such that
\begin{itemize}
\item $\varphi_\delta: s\mapsto \varphi(\delta,s)$ is continuous over $\osupport$ for all fixed $\delta\in[0,1]$,
\item $\varphi_s: \delta\mapsto \varphi(\delta,s)$ is continuous over $[0,1]$ for all fixed $s\in\osupport$,
\item either all $\varphi_s$'s are monotone or all $\varphi_\delta$'s are monotone,
\end{itemize}
then $\varphi$ is jointly continuous in $\delta$ and $s$.
\end{lemma}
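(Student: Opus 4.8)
The plan is to establish joint continuity at an arbitrary point $(\delta_0,s_0)\in[0,1]\times\osupport$; I will spell out the case in which every $\varphi_s:\delta\mapsto\varphi(\delta,s)$ is monotone, since the case in which every $\varphi_\delta$ is monotone follows by exchanging the roles of $\delta$ and $s$ verbatim (with the open interval $\osupport$ then playing the role that the compact interval $[0,1]$ plays below, and vice versa). The idea is the classical sandwich that rules out the usual separately-continuous-but-not-jointly-continuous pathologies: given $\varepsilon>0$, I will trap $\varphi(\delta,s)$, for $(\delta,s)$ near $(\delta_0,s_0)$, between $\varphi(\delta_1,s)$ and $\varphi(\delta_2,s)$ for a well-chosen pair $\delta_1\le\delta_0\le\delta_2$, and then control those two quantities by the triangle inequality routed through $\varphi(\delta_i,s_0)$.

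Concretely, first I would use continuity of $\varphi_{s_0}:\delta\mapsto\varphi(\delta,s_0)$ at $\delta_0$ to choose $\delta_1,\delta_2\in[0,1]$ with $\delta_1\le\delta_0\le\delta_2$, with $\delta_1<\delta_0$ unless $\delta_0=0$ and $\delta_2>\delta_0$ unless $\delta_0=1$, such that $|\varphi(\delta_i,s_0)-\varphi(\delta_0,s_0)|<\varepsilon/2$ for $i\in\{1,2\}$. Next, using continuity of $\varphi_{\delta_1}$ and $\varphi_{\delta_2}$ at $s_0$ together with the openness of $\osupport$, I would pick $\eta_1>0$ so that $(s_0-\eta_1,s_0+\eta_1)\subseteq\osupport$ and $|\varphi(\delta_i,s)-\varphi(\delta_i,s_0)|<\varepsilon/2$ for all such $s$ and $i\in\{1,2\}$; combining the two bounds yields $|\varphi(\delta_i,s)-\varphi(\delta_0,s_0)|<\varepsilon$ for $i\in\{1,2\}$ and all $s$ with $|s-s_0|<\eta_1$. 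Finally I would set $\eta$ to be the minimum of $\eta_1$ and the (strictly positive) distances from $\delta_0$ to whichever of $\delta_1,\delta_2$ is distinct from it, so that $|\delta-\delta_0|<\eta$ forces $\delta\in[\delta_1,\delta_2]$.

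It then remains to observe that, whenever $|\delta-\delta_0|<\eta$ and $|s-s_0|<\eta$, monotonicity of $\varphi_s$ in $\delta$ (in whichever direction, since $\delta_1\le\delta\le\delta_2$) places $\varphi(\delta,s)$ between $\varphi(\delta_1,s)$ and $\varphi(\delta_2,s)$, both of which lie within $\varepsilon$ of $\varphi(\delta_0,s_0)$ by the previous step; hence $|\varphi(\delta,s)-\varphi(\delta_0,s_0)|<\varepsilon$, and since $\varepsilon$ and $(\delta_0,s_0)$ were arbitrary, $\varphi$ is jointly continuous. I do not anticipate a genuine mathematical obstacle here, so the main thing to be careful about is the bookkeeping when $\delta_0$ is an endpoint of $[0,1]$ — so that one of the enclosing points $\delta_1,\delta_2$ coincides with $\delta_0$ and the enclosing interval is only one-sided — and the symmetric reliance on $\osupport$ being open in the mirror-image case, both of which I have flagged above.
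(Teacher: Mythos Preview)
Your proof is correct and complete. The paper, however, does not actually prove the lemma: it simply cites \citet{Kruse-1969} for the result on the open rectangle $(0,1)\times\osupport$ and remarks that separate continuity extends it to the boundary points $\delta\in\{0,1\}$. What you have written is precisely the standard sandwich argument underlying that cited result, so the two approaches agree in substance; yours is just self-contained where the paper defers to a reference. Your handling of the endpoint cases $\delta_0\in\{0,1\}$ is exactly the ``direct generalization'' the paper alludes to but does not spell out, and your observation that the direction of monotonicity of each $\varphi_s$ is irrelevant (``in whichever direction'') is a nice touch that the citation-only proof leaves implicit.
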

\begin{proof}
    The proof on the open set $(0,1)\times \osupport$ is written in \citet{Kruse-1969}, and directly generalizes to $\delta=0$ and $\delta=1$ given that $\varphi$ is separately continuous in those points.
\end{proof}
\bigbreak

\begin{proof2}[Proof of \cref{lem:equilibrium-bid-monotone}]\phantomsection\label{proof:lem:equilibrium-bid-monotone}
Monotonicity follows from the derivatives computed in \Cref{lem:derivatives}.
\end{proof2}

\subsection{Equity comparisons in uniform, pay-as-bid and mixed auctions}\label{app:sec:equity-comparisons}

\begin{proof2}[Proof of \cref{theorem:pure-common}]\phantomsection\label{proof:theorem:pure-common}
    To prove the ``if'' direction, note that for $c=1$, the realized value is identical for all bidders $i\in\bidders$ as $v(\ssb) = \frac{1}{n}\sum_{j\in\bidders}s_j$. Thus, with a uniform price that is identical between bidders, they all have identical surplus. For any $\delta > 0 $, the payment differs between the winners at least for some signal realizations.

    To prove the ``only if'' let $\varphi^{\delta}(s)=(1-c)\cdot s-\delta \bid(s)$. We then have $(u(s_i)-u(s_j))^2=(\varphi^{\delta}(s_i)-\varphi^{\delta}(s_j))^2$ for two winning bids $s_i,s_j$ (see the \hyperref[proof:prop:delta-dominating-uniform]{proof} of \cref{prop:delta-dominating-uniform} for details). Thus, it holds that
    \begin{equation*}
        \frac{\partial}{\partial \delta} (\varphi^{\delta}(s_i)-\varphi^{\delta}(s_j))^2=-2(\varphi^{\delta}(s_i)-\varphi^{\delta}(s_j))\left(\bid(s_i)-\bid(s_j) + \delta \frac{\partial \bid(s_i)}{\partial \delta} - \delta \frac{\partial \bid(s_j)}{\partial \delta}\right).
    \end{equation*}

    Using \cref{lem:derivatives}, we take the limit of $\bid$, $\delta \frac{\partial \bid(s)}{\partial \delta}$, and $\varphi^{\delta}(s)$, as $\delta$ goes to $0$. We have that $(\varphi^{\delta}(s_i)-\varphi^{\delta}(s_j))\rightarrow (1-c)(s_i-s_j)$ and $(\bid(s_i)-\bid(s_j) + \delta \frac{\partial \bid(s_i)}{\partial \delta} - \delta \frac{\partial \bid(s_j)}{\partial \delta})\rightarrow (V(s_i)-V(s_j))$. As $V$ is increasing, the product $(V(s_j)-V(s_i))(s_i-s_j)$ is strictly negative almost surely, which concludes the proof. 
\end{proof2}
\bigbreak

\begin{proof2}[Proof of \cref{theorem:mixed-minimizing-WEV}]\phantomsection\label{proof:theorem:mixed-minimizing-WEV}
    We show that for any $c\in (c^*,1)$, pay-as-bid pricing does not minimize \wev. From this and the ``only if'' statement in the proof of \cref{theorem:pure-common}, the result follows. Note that \wev~is continuous in $c$ and at $c=1$ it is strictly lower for uniform pricing ($\delta=0)$ than for pay-as-bid pricing ($\delta=1$) by \cref{theorem:pure-common}. Thus, by the mean value theorem, there exists an open interval $C\:=(c^*,1)$, $c^*<1$, such that, for any $c\in C$, \wev~remains strictly lower under uniform pricing than pay-as-bid pricing.
\end{proof2}

\subsection{Challenging the intuition: private values and uniform pricing}\label{app:sec:private-values-and-uniform-pricing}

For \cref{example:counter}, we consider the order statistics of quantiles $F^{-1}(x)$ and not of signals $s$. For convenience, we define the following distribution functions and densities.
    \begin{align*}
        \widetilde G(x) := \Gnk(F^{-1}(x)) &=
        1-(1-x)^{n-1} &
        \widetilde g(x) &:=
        (n-1)(1-x)^{n-2}\\
        \widetilde H(x) := G^{n-2}_{k-1}(F^{-1}(x)) &=
        1-(1-x)^{n-2} &
        \widetilde h(x) &:=
        (n-2)(1-x)^{n-3}
    \end{align*}
    We choose a continuous distribution of signals, with support $[0,2]$, where each signal is given by the sum of a Bernoulli$(\varepsilon)$ random variable and a random perturbation drawn from Beta$(1, 1/\eta)$, with $\varepsilon = 0.1/n$ and $\eta$ a small constant. First, we compute the distribution function $F$ and quantile function $F^{-1}$ of the signal distribution. Using the law of total probabilities, we have
    \begin{align*}
    \forall s \in [0,2],\qquad
    F(s) &=P[\text{Bernoulli}(\varepsilon)+\text{Beta}(1,1/\eta) \leq s]
    \\&= P[\text{Bernoulli}(\varepsilon) = 0]\cdot P[\text{Beta}(1, 1/\eta) \leq s] \\&+ P[\text{Bernoulli}(\varepsilon) = 1]\cdot P[\text{Beta}(1, 1/\eta) \leq s-1].
    \end{align*}
    Simplifying this expression depending on the value of $s$, we get
    $$
    \forall s\in [0,2],\qquad F(s) = \begin{cases}
        \varepsilon \cdot (1-(1-s)^{1/\eta})
        & \text{if }s \leq 1,\\
        \varepsilon + (1-\varepsilon)\cdot (1-(2-s)^{1/\eta})
        & \text{if }s \geq 1.
    \end{cases}\\
    $$
    Then, computing piece-by-piece the inverse of $F$, we obtain
    $$
    \forall x\in [0,1],\qquad F^{-1}(x) = \begin{cases}
        1-\left(1-\frac{x}{\varepsilon}\right)^{\eta} &\text{if }x\leq \varepsilon,\\
        2-\left(1-\frac{x-\varepsilon}{1-\varepsilon}\right)^{\eta} &\text{if }x\geq \varepsilon.\\
    \end{cases}
    $$
    A bidder with quantile $x\in [0,1]$ bids (truthfully) their signal $F^{-1}(x)$ in the uniform-price auction ($\delta = 0$), which we write as $b_\eta^0(x) := F^{-1}(x) = \ind{x \geq \varepsilon} + \gamma_\eta(x)$, where
    $$
    \forall x\in [0,1],\qquad \gamma_\eta(x) := \begin{cases}
        1-\left(1-\frac{x}{\varepsilon}\right)^{\eta} &\text{if }x< \varepsilon,\\
        1-\left(1-\frac{x-\varepsilon}{1-\varepsilon}\right)^{\eta} &\text{if }x\geq \varepsilon.\\
    \end{cases}
    $$
    \begin{figure}[htp]
        \centering
        \begin{adjustbox}{max width=0.95\textwidth}
            \hspace{-0.5cm}
            \begin{minipage}[t]{0.4\textwidth}
            \captionsetup[subfigure]{font=footnotesize,margin={1.2cm,0.5cm}}
            \subcaptionbox*{Signal cdf}{%
            \includegraphics[scale=0.34, trim={0 0 0 1.5cm}, clip]{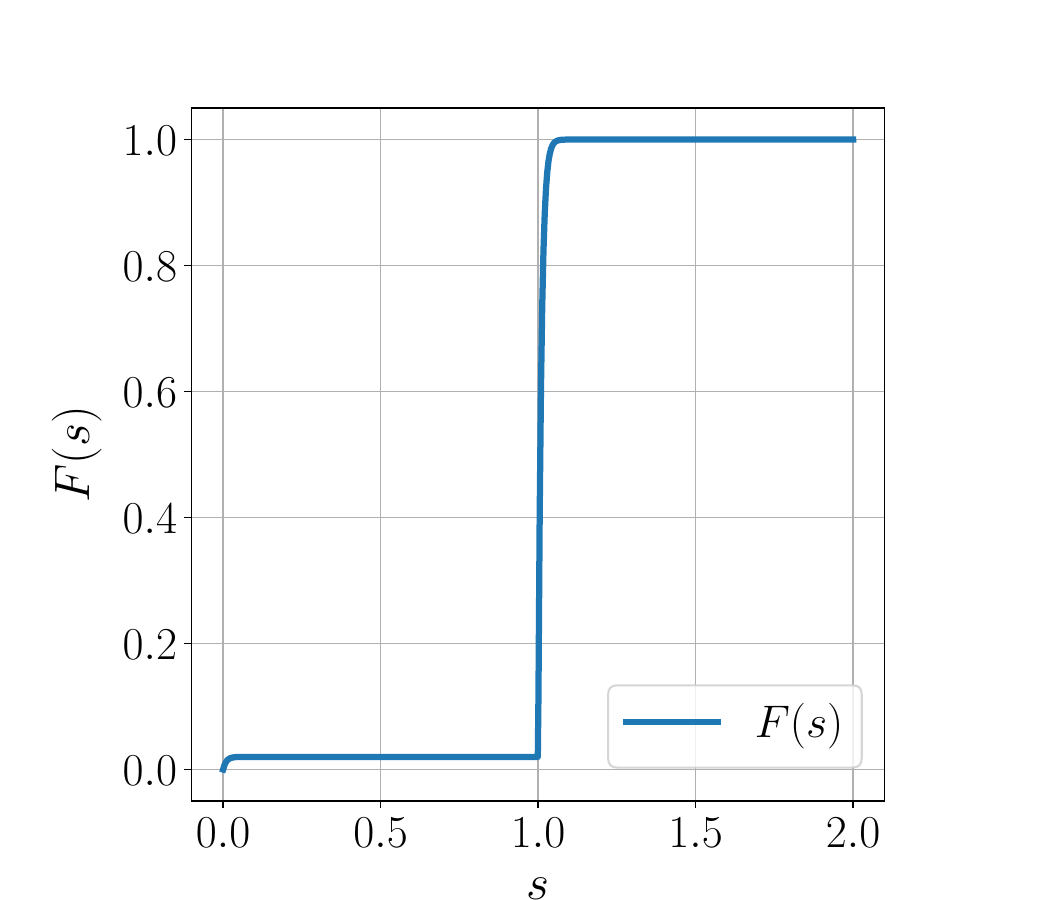}
            }
            \label{fig:counterexample-cdf}
            \end{minipage}
            \hspace{0.9cm}
            \begin{minipage}[t]{0.4\textwidth}
            \captionsetup[subfigure]{font=footnotesize,margin={0.5cm,0.5cm}}
            \subcaptionbox*{Quantile function}{%
            \includegraphics[scale = 0.34, trim={0 0 0 1.5cm}, clip]{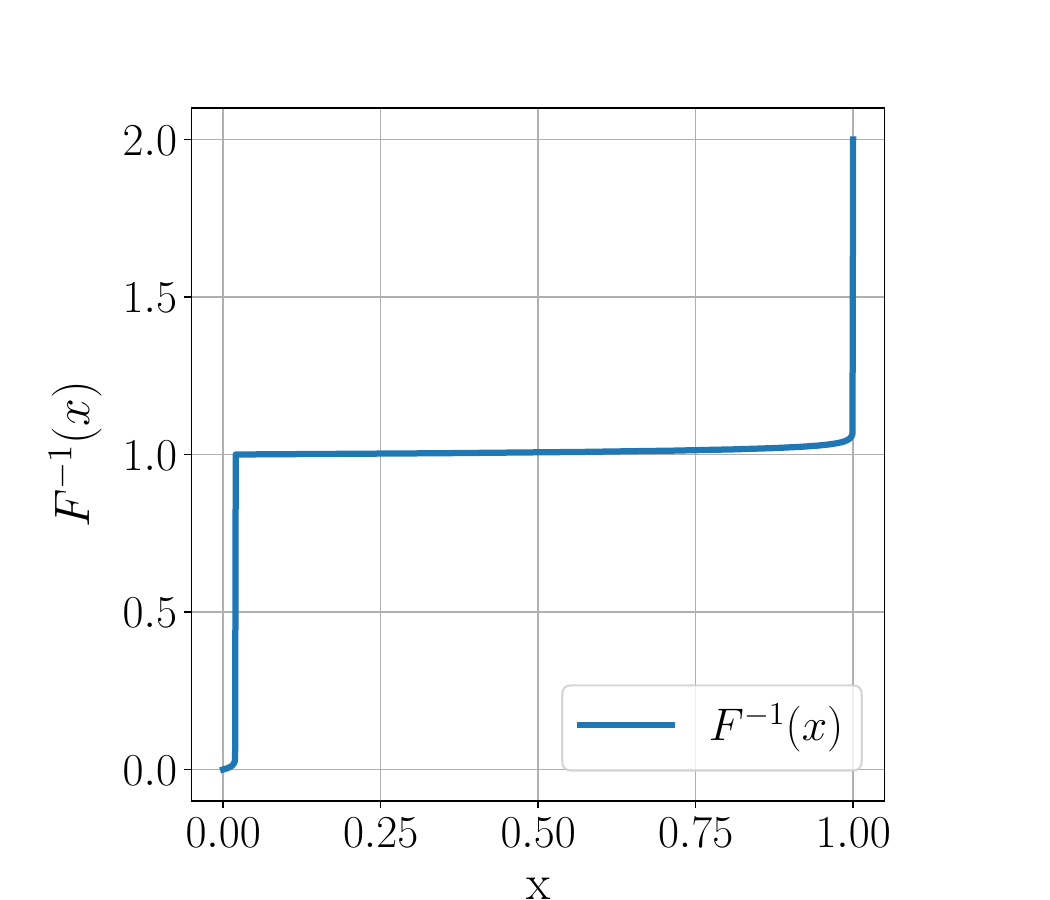}
            }
            \label{fig:counterexample-quantiles}
            \end{minipage}
        \end{adjustbox}
        \caption{Bidder signals and quantiles for $n=5$ and $\eta=0.01$}
        \label{fig:counterexample-signals-quantiles}
    \end{figure}
    For mixed auctions with $\delta > 0$, the equilibrium bid function is given by \Cref{prop:equilibrium-bid-delta}. Letting $b_\eta^\delta(x) := \beta^\delta(F^{-1}(x))$ denote the equilibrium bid of a bidder with quantile $x\in [0,1]$, we have
    $$
    b_\eta^\delta(x) = \frac{\int_0^{F^{-1}(x)} V(s) \gnk(s) \Gnk(s)^{\frac{1}{\delta}-1}\diff s}{\delta \Gnk(F^{-1}(x))} = \frac{\int_0^{x} F^{-1}(y) \widetilde g(y) \widetilde G(y)^{\frac{1}{\delta}-1}\diff y}{\delta \widetilde G(x)},
    $$
    where we used the change of variable $y = F(s)$. Finally, using the additive form of $F^{-1}$ we write the equilibrium bid function as $b_\eta^\delta(x) = b_0^\delta(x) + \xi_\eta^\delta(x)$, where
    \begin{align*}
    \forall x \in [0,1],\qquad b_0^\delta(x) &:= \frac{\int_\varepsilon^{x} \widetilde g(y) \widetilde G(y)^{\frac{1}{\delta}-1}\diff y}{\delta \widetilde G(F^{-1}(x))}  =
        \begin{cases}
            0 &\text{if }x < \varepsilon\\
            1-\left(\frac{\widetilde G(\varepsilon)}{\widetilde G(x)}\right)^{\frac{1}{\delta}} &\text{if }x \geq \varepsilon
        \end{cases}
    \\
    \qquad \xi_\eta^\delta(x) &:= \frac{\int_0^x \gamma_\eta(y) \widetilde g(y)\widetilde G(y)^{\frac{1}{\delta}-1}\diff y}{\delta \widetilde G(x)}
    \end{align*}
    Next, we define the function $\phi_\eta^\delta(x) := F^{-1}(x) - \delta b_\eta^\delta(x)$, the utility of a winning bidder as a function of their quantile. Denoting $\wev_\eta^\delta$ the \emph{winners' empirical variance} in a $\delta$-mixed auction with noise level $\eta$, we write
    \begin{align*}
        \forall \delta \in [0,1],\;\forall \eta > 0,\qquad 
        \wev_\eta^\delta &= \mathbb E_{\mathbf x} \left[ \frac{(\phi_\eta^\delta(x_1)-\phi_\eta^\delta(x_2))^2}{2} \,|\, x_1,x_2 > \kth{k+1}(\mathbf x)\right].
    \end{align*}
    where $\mathbf x$ is a random vector of quantiles, with $n$ independent coordinates distributed uniformly on $[0,1]$. For every $x\in [0,1)$,  observe that $\gamma_\eta(x)$ and $\xi_\eta^\delta(x)$ converge towards $0$ when taking $\eta$ arbitrarily small, and thus $\phi_\eta^\delta(x)$ converges towards $\phi_0^\delta(x) := \ind{x\geq \varepsilon}-\delta b_0^\delta(x)$. Therefore, $\wev_\eta^\delta$ converges towards $\wev_0^\delta$, defined by
    \begin{align*}
        \forall \delta \in [0,1],\quad
        \wev_0^\delta := & \mathbb E_{\mathbf x}\left[\frac{((\ind{x_1 \geq \varepsilon} - \delta b_0^\delta(x_1)) - (\ind{x_2 \geq \varepsilon} - \delta b_0^\delta(x_2)))^2}{2}\,|\, x_1,x_2 > \kth{k+1}(\mathbf x)\right] \\
        = & \mathbb E_{\mathbf x} \left[ \frac{(\phi_0^\delta(x_1)-\phi_0^\delta(x_2))^2}{2} \,|\, x_1,x_2 > \kth{k+1}(\mathbf x)\right] = \lim_{\eta\rightarrow 0} \wev_\eta^\delta.
    \end{align*}

\begin{proof2}[Proof of \cref{prop:counter-example}]\phantomsection\label{proof:prop:counter-example}
We are now equipped to prove the proposition. We write $\wev^\delta_0 = \mathbb E_{\mathbf x}\left[\phi_0^\delta(x_1)^2\,|\, x_1 > \kth{k+1}(\mathbf x)\right]
     - \mathbb E_{\mathbf x}\left[\phi_0^\delta(x_1)\phi_0^\delta(x_2) \,|\, x_1,x_2 > \kth{k+1}(\mathbf x)\right]$, with
\begin{align*}
    \mathbb E_{\mathbf x}\left[\phi_0^\delta(x_1)^2\,|\, x_1 > \kth{k+1}(\mathbf x)\right] &= \frac{n}{n-1}\int_0^1 \phi_0^\delta(x)^2G(x) \diff x\\
    \mathbb E_{\mathbf x}\left[\phi_0^\delta(x_1)\phi_0^\delta(x_2) \,|\, x_1,x_2 > \kth{k+1}(\mathbf x)\right] &= \frac{n}{n-2} \int_0^1 \left(\int_t^1 \phi_0^\delta(x) \diff x\right)^2h(t)\diff t
\end{align*}
We next compute these quantities for uniform and discriminatory pricing.
For uniform pricing ($\delta = 0$) we have that $\phi_0^0(x) = \ind{x\geq \varepsilon}$. We derive
\begin{align*}
    \mathbb E_{\mathbf x}\left[\phi_0^0(x_1)^2\,|\, x_1 > \kth{k+1}(\mathbf x)\right]
    &= \frac{n}{n-1}\int_\varepsilon^1 \widetilde G(x)\diff x = \frac{n(1-\varepsilon)-(1-\varepsilon)^n}{n-1}\\
    \mathbb E_{\mathbf x}\left[\phi_0^0(x_1)\phi_0^0(x_2) \,|\, x_1,x_2 > \kth{k+1}(\mathbf x)\right] &= \frac{n}{n-2}\int_0^\varepsilon (1-\varepsilon)^2 \widetilde h(t)\diff t+\frac{n}{n-2}\int_\varepsilon^1 (1-t)^2 \widetilde h(t)\diff t\\
    &= \frac{n(1-\varepsilon)^2-2(1-\varepsilon)^n}{n-2}
\end{align*}
and finally
\begin{align*}
    \wev^0_0 &= \frac{n(1-\varepsilon)-(1-\varepsilon)^n}{n-1} - \frac{n(1-\varepsilon)^2-2(1-\varepsilon)^n}{n-2}\\
    &=\frac{n\left[(1-\varepsilon)^n+(1-\varepsilon)(\varepsilon(n-1)-1)\right]}{(n-1)(n-2)}\\
    &\leq \frac{(\varepsilon n)^2/2}{n}=\frac{0.005}{n}
\end{align*}
For discriminatory pricing ($\delta = 1$) we have that $\phi_0^1(x) = \ind{x\geq \varepsilon}-b_0^1(x) = \ind{x\geq \varepsilon} \frac{G(\varepsilon)}{G(x)}$. We will use the following bounds:
\begin{align*}
    \int_\varepsilon^1 \frac{1}{\widetilde G(x)}\diff x
    &= \int_\varepsilon^1 \frac{1}{1-(1-x)^{n-1}}\diff x
    =\int_\varepsilon^1 \sum_{i=0}^\infty (1-x)^{(n-1)i} \diff x\\
    &=\sum_{i=0}^\infty \frac{(1-\varepsilon)^{(n-1)i+1}}{(n-1)i+1} \geq (1-\varepsilon)+\sum_{i=1}^\infty\frac{(1-\varepsilon)^{ni}}{ni} \\
    &\geq (1-\varepsilon)+\frac{1}{n}\sum_{i=1}^\infty\frac{0.9^i}{i} =
    1-\frac{0.1}{n}-\frac{\ln(0.1)}{n} \geq 1+\frac{2.2}{n}\\
    \int_0^1 \frac{x}{\widetilde G(x)}\diff x
    &= \int_0^1 \frac{x}{1-(1-x)^{n-1}}\diff x
    =\int_0^1 \sum_{i=0}^\infty x(1-x)^{(n-1)i} \diff x
    \\&=\frac{1}{2}+\sum_{i=1}^\infty \frac{1}{((n-1)i+1)((n-1)i+2)}
    \leq \frac{1}{2}+\frac{1.65}{n^2}
    &(\text{when }n \geq 5)
\end{align*}
Next, we write
\begin{align*}
    \mathbb E_{\mathbf x}\left[\phi_0^1(x_1)^2\,|\, x_1 > \kth{k+1}(\mathbf x)\right]
    &= \frac{n}{n-1}\int_\varepsilon^1 \frac{\widetilde G(\varepsilon)^2}{\widetilde G(x)}\diff x
    \geq \frac{n\widetilde G(\varepsilon)^2}{n-1} \left(1+\frac{2.2}{n}\right)
\end{align*}
and
\begin{align*}
    \mathbb E_{\mathbf x}\left[\phi_0^1(x_1)\phi_0^1(x_2) \,|\, x_1,x_2 > \kth{k+1}(\mathbf x)\right] &= \frac{n}{n-2}\int_0^\varepsilon \left(\int_\varepsilon^1 \frac{\widetilde G(\varepsilon)}{\widetilde G(x)}\diff x\right)^2 h(t)\diff t\\&+\frac{n}{n-2}\int_\varepsilon^1 \left(\int_t^1 \frac{\widetilde G(\varepsilon)}{\widetilde G(x)}\diff x\right)^2 h(t)\diff t\\
    &=\underbrace{\frac{n\widetilde H(\varepsilon)}{n-2}\left(\int_\varepsilon^1 \frac{\widetilde G(\varepsilon)}{\widetilde G(x)}\diff x\right)^2+\frac{n}{n-2}
    \left[\left(\int_t^1 \frac{\widetilde G(\varepsilon)}{\widetilde G(x)}\diff x\right)^2 H(t)\right]_\varepsilon^1}_{=0}\\
    &+\frac{2n}{n-2}\int_\varepsilon^1 \left(\int_t^1 \frac{\widetilde G(\varepsilon)}{\widetilde G(x)}\diff x\right)\frac{\widetilde G(\varepsilon)}{\widetilde G(t)}\widetilde H(t)\diff t\\
    &=\frac{2n}{n-2}\int_\varepsilon^1 \left(\int_\varepsilon^x \frac{\widetilde G(\varepsilon)}{\widetilde G(t)}\widetilde H(t)\diff t\right)\frac{\widetilde G(\varepsilon)}{\widetilde G(x)}\diff x
\end{align*}
Next, we will use the upper bound $\widetilde H(t)/\widetilde G(t) \leq 1$, which is nearly tight as $\widetilde H(t)/\widetilde G(t)$ is increasing, and has the limit $(n-2)/(n-1)$ when $t \rightarrow 0$.
\begin{align*}
    \mathbb E_{\mathbf x}\left[\phi_0^1(x_1)\phi_0^1(x_2) \,|\, x_1,x_2 > \kth{k+1}(\mathbf x)\right]
    &\leq \frac{2n\widetilde G(\varepsilon)^2}{n-2}\int_\varepsilon^1 \frac{x}{\widetilde G(x)}\diff x  \leq \frac{2n\widetilde G(\varepsilon)^2}{n-2}\left(\frac{1}{2}+\frac{1.65}{n^2}\right)
\end{align*}
Finally, we obtain
\begin{align*}
    \wev^1_0 &\geq \frac{n\widetilde G(\varepsilon)^2}{n-1} \left(1+\frac{2.2}{n}\right) - \frac{2n\widetilde G(\varepsilon)^2}{n-2}\left(\frac{1}{2}+\frac{1.65}{n^2}\right) &(\text{when }n \geq 5)\\
    &= n\widetilde G(\varepsilon)^2\left(\frac{2.2}{n(n-1)}-\frac{1}{(n-1)(n-2)}+\frac{3.3}{n^2(n-2)}\right)
    \\&\geq \frac{0.01}{n} &(\text{when }n \geq 4)
\end{align*}
\end{proof2}

\subsection{Proving the main theorems}\label{app:sec:proving-main-theorems}

\begin{proof2}[Proof of \cref{lem:monotone-ex-post-equivalence}]\phantomsection\label{proof:lem:monotone-ex-post-equivalence}
    Let $s_i \leq s_j = s_i + \epsilon$ for some $\epsilon > 0$. Then 
    \begin{equation*}
    \begin{array}{crcl}
        \Leftrightarrow & u_i(s_i,\ssb_{-i}) & \leq & u_j(s_j,\ssb_{j}) \\
         \Leftrightarrow & (1-c)s_i - \delta\bid(s_i) & \leq & (1-c)s_j - \delta\bid(s_j) \\
         \Leftrightarrow & (1-c) (s_j - s_j) & \geq & \delta (\bid(s_j) - \bid(s_i))
    \end{array}
    \end{equation*}
    Dividing by $s_j - s_i$ and taking $\epsilon \rightarrow 0$ concludes the proof.
\end{proof2}

\bigbreak

\begin{proof2}[Proof of \cref{prop:PD-decreasing}]\phantomsection\label{proof:prop:PD-decreasing}
    First, we prove that \PD~is locally decreasing in $\delta$.
    Let $s_i, s_j$ with $s_i\geq s_j$ denote the signals of two winning bidders and $\varphi^{\delta}(s):=(1-c)s-\delta \bid(s)$.
    Note that because of \cref{lem:equilibrium-bid-monotone} (2), monotone ex-post utility holds for all $\delta \leq \bar \delta$.  For all $\delta_1, \delta_2$, $0\leq \delta_1 \leq \delta_2 \leq \bar\delta$, we have
    \begin{align}
        & |u^{\delta_1}(s_i) - u^{\delta_1}(s_j) | \geq |u^{\delta_2}(s_i) - u^{\delta_2}(s_j) | \label{equ:nonexpansive-beta0}\\
        \Leftrightarrow \quad & |\varphi^{\delta_1}(s_i)-\varphi^{\delta_1}(s_j)|  \geq |\varphi^{\delta_2}(s_i)-\varphi^{\delta_2}(s_i)| \label{equ:nonexpansive-beta}\\
        \Leftrightarrow \quad & -\delta_1 \left( \bid[\delta_1](s_i) - \bid[\delta_1](s_j)\right)  \geq -\delta_2 \left(\bid[\delta_2](s_i) - \bid[\delta_2](s_j) \right) \label{equ:nonexpansive-beta2}
    \end{align}
    For the final equivalence, observe that monotone ex-post utility together with \cref{lem:equilibrium-bid-monotone} (1) implies that $\frac{\delta}{1-c}\bid$ is non-expansive, allowing to remove the absolute value in \cref{equ:nonexpansive-beta}. \cref{lem:equilibrium-bid-monotone} (2) guarantees that \cref{equ:nonexpansive-beta2} holds. As the ex-post difference in utilities (\cref{equ:nonexpansive-beta0}) is decreasing in $\delta$, so is its expectation. To establish global monotonicty on $[0,\bar\delta]$, note that if $\bar\delta \pd{\bid[\bar\delta]}{s} \leq 1-c$ then it also holds for any $\delta < \bar\delta$ by \cref{lem:equilibrium-bid-monotone} (2), concluding the proof.
\end{proof2}

\bigbreak

\begin{proof2}[Proof of \cref{prop:delta-dominating-uniform}]\phantomsection\label{proof:prop:delta-dominating-uniform}
    Let $u_i^\delta(s_i,\smi)$ denote bidder $i$'s utility in the $\delta$-mixed auction, and let $u_i^{U}(s_i,\smi)$ denote bidder $i$'s utility in the uniform-price auction. Now let $i,j \in \bidders$ be two winning bidders. As above, $\bid$ (resp. $\bid[U]$) denotes the symmetric equilibrium bid function in the $\delta$-mixed (resp. uniform price) auction. Let $Y_{k+1}(\bbeta)$ denote the first rejected bid. Then, canceling out $(1-\delta)Y_{k+1}(\bbeta)$, we have
    \begin{align*}
        \vert u_i^\delta - u_j^\delta \vert & = \vert (v_i(s_i,\smi) - \delta\bid(s_i) ) - (v_j(s_j,\smj) - \delta\bid(s_j)) \vert \\
        & = \vert ((1-c)s_i + \frac{c}{n}\sum_{k\in\bidders}s_k - \delta\bid(s_i) ) - ((1-c)s_j + \frac{c}{n}\sum_{k\in\bidders}s_k - \delta\bid(s_j)) \vert \\
        & = \vert ((1-c)s_i - \delta\bid(s_i)) - ((1-c)s_j - \delta\bid(s_j)) \vert \\
        & = \vert \varphi^\delta(s_i)-\varphi^\delta(s_j) \vert,
    \end{align*}
    where $\varphi^{\delta}(s)=(1-c)s - \delta\bid(s)$. It also holds that
    \begin{align*}
        \vert u_i^{U} - u_j^{U} \vert = \vert (v_i(s_i,\smi) - Y_{k+1}(\bbeta)) - (v_j(s_j,\smj) - Y_{k+1}(\bbeta)) \vert = \vert (1-c) (s_i - s_j) \vert.
    \end{align*}
    We will now show that $\frac{\varphi^{\delta}}{1-c}$ is a non-expansive mapping. Note that $\varphi^{\delta}$ can be increasing or decreasing, so we need to show that $\vert\pd{\varphi^{\delta}}{s}\vert\leq 1-c$. We have $\pd{\varphi^{\delta}}{s}=1-c - \delta \pdbdeltas$.
    As $\bid$ is increasing in $s$, $\vert\pd{\varphi^{\delta}}{s}\vert\leq 1-c$ holds whenever $\delta \pdbdeltas \leq 2(1-c)$. Therefore
    \begin{align}
        \vert u_i^\delta - u_j^\delta \vert = \vert \varphi^{\delta}(v_i) - \varphi^{\delta}(v_j) \vert \leq \vert (1-c)(s_i -s_j) \vert = \vert u_i^{U} - u_j^{U} \vert \label{equ:deltalessthanFRB}
    \end{align}
    Taking the square of \cref{equ:deltalessthanFRB} we obtain the result point-wise, for each pair of winning signals $s_i$ and $s_j$ and, taking the expectation, the theorem follows.
\end{proof2}

\begin{theorem}\label{theorem:generalized-delta-vs-delta}
For a given common value component $c$, consider two $\delta$-mixed auctions for $\delta_1 \leq \delta_2$ and suppose the equilibrium bidding functions $\bid$ satisfies $\delta_1 \frac{\partial \bid[\delta_1](s)}{\partial s}+\delta_2 \frac{\partial \bid[\delta_2](s)}{\partial s} \leq 2(1-c)$ for all signals $s\in\support$. Then, \wev~is lower for the $\delta_2$-mixed auction than for the $\delta_1$ one. 
\end{theorem}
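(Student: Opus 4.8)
The plan is to follow the blueprint of the proof of \cref{prop:delta-dominating-uniform}, which is exactly the special case $\delta_1=0$ of the present statement (there $\delta_1\bid[\delta_1]\equiv 0$, so the hypothesis reduces to $\delta_2\pd{\bid[\delta_2](s)}{s}\leq 2(1-c)$). First I would reduce the \wev~comparison to an almost-sure, pairwise comparison of ex-post utilities among winners. Fix two winning bidders $i,j$ and assume without loss of generality $s_i\geq s_j$. As in \cref{prop:delta-dominating-uniform}, in any pairwise difference of utilities the common-value term $\tfrac{c}{n}\sum_{k}s_k$ and the uniform price component $(1-\delta)Y_{k+1}(\bbeta)$ cancel, leaving $|u_i^{\delta_\ell}-u_j^{\delta_\ell}| = |\varphi^{\delta_\ell}(s_i)-\varphi^{\delta_\ell}(s_j)|$, where $\varphi^{\delta}(s):=(1-c)s-\delta\bid(s)$. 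It therefore suffices to prove the pointwise bound $|\varphi^{\delta_2}(s_i)-\varphi^{\delta_2}(s_j)| \leq |\varphi^{\delta_1}(s_i)-\varphi^{\delta_1}(s_j)|$ for every pair of winning signals; squaring, taking the expectation conditional on $i,j$ winning, and summing over winning pairs then gives $\wev^{\delta_2}\leq\wev^{\delta_1}$.

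Writing $a := \varphi^{\delta_1}(s_i)-\varphi^{\delta_1}(s_j)$ and $b := \varphi^{\delta_2}(s_i)-\varphi^{\delta_2}(s_j)$, the heart of the argument is the pair of inequalities $a\geq b$ and $a+b\geq 0$, from which $|b|\leq|a|$ follows by elementary arithmetic: $2a\geq a+b\geq 0$ forces $a\geq 0$, while $a\geq b$ and $a\geq-b$ give $a\geq|b|$. For $a\geq b$ I would write $a-b = \int_{s_j}^{s_i}\!\big(\delta_2\pd{\bid[\delta_2](s)}{s}-\delta_1\pd{\bid[\delta_1](s)}{s}\big)\diff s$ and note that the integrand is nonnegative because $\partial_s(\delta\bid)$ is increasing in $\delta$ by \cref{lem:equilibrium-bid-monotone}(2) and $\delta_1\leq\delta_2$. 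For $a+b\geq 0$ I would write $a+b = \int_{s_j}^{s_i}\!\big(2(1-c)-\delta_1\pd{\bid[\delta_1](s)}{s}-\delta_2\pd{\bid[\delta_2](s)}{s}\big)\diff s$, which is nonnegative precisely by the hypothesis. Note that, unlike in \cref{prop:delta-dominating-uniform}, one does \emph{not} need $\varphi^{\delta}$ itself to be a contraction; the monotonicity $a\geq b$ together with the two-sided bound $a+b\geq 0$ is what does the work.

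The only point requiring care is the validity of these two integral representations, i.e.\ the fundamental theorem of calculus applied to $s\mapsto\delta_\ell\bid[\delta_\ell](s)$: by \cref{lem:derivatives} these maps are continuous and differentiable with locally bounded derivative on $\osupport$, hence absolutely continuous on compact subintervals, and winning signals lie in $\osupport$ almost surely, so the representations hold a.s.\ over the conditioning event. I do not anticipate a genuine obstacle here: the statement is essentially a repackaging of \cref{prop:delta-dominating-uniform} with the one-sided derivative bound replaced by the symmetric two-sided bound on the sum of the two bid-slope terms, and every analytic ingredient needed (continuity and differentiability of $\bid$, monotonicity of $\delta\bid$ in $\delta$) is already supplied by \cref{lem:derivatives,lem:equilibrium-bid-monotone}.
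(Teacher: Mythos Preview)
Your proposal is correct and shares the paper's setup exactly: both reduce the \wev~comparison to the pointwise inequality $|\varphi^{\delta_2}(s_i)-\varphi^{\delta_2}(s_j)|\leq|\varphi^{\delta_1}(s_i)-\varphi^{\delta_1}(s_j)|$ with $\varphi^\delta(s)=(1-c)s-\delta\bid(s)$, and both use the two inputs (i) the hypothesis $\delta_1\partial_s\bid[\delta_1]+\delta_2\partial_s\bid[\delta_2]\leq 2(1-c)$ and (ii) the monotonicity of $\delta\,\partial_s\bid$ in $\delta$ from \cref{lem:equilibrium-bid-monotone}(2).

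The linking step differs. The paper applies the generalized Cauchy mean value theorem to obtain a point $\xi$ with $|\varphi^{\delta_2}(s_i)-\varphi^{\delta_2}(s_j)|\,|\partial_s\varphi^{\delta_1}(\xi)|=|\varphi^{\delta_1}(s_i)-\varphi^{\delta_1}(s_j)|\,|\partial_s\varphi^{\delta_2}(\xi)|$, and then verifies the pointwise derivative inequality $|\partial_s\varphi^{\delta_2}|\leq|\partial_s\varphi^{\delta_1}|$ by a short chain of equivalences using (i) and (ii). You instead integrate directly via the fundamental theorem of calculus and use the elementary ``$a\geq b$ and $a+b\geq 0$ imply $|b|\leq|a|$'' trick, with each of the two sign conditions following from (ii) and (i) respectively. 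Your route is slightly more elementary (no mean value theorem, no case analysis on signs of $\partial_s\varphi^{\delta_\ell}$), while the paper's route makes the underlying pointwise contraction $|\partial_s\varphi^{\delta_2}|\leq|\partial_s\varphi^{\delta_1}|$ explicit. Substantively the two arguments are equivalent.
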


\begin{proof}
    Let $\varphi^{\delta}(s)=(1-c)s-\delta \bid(s)$. We have $u_i^{\delta}(\mathbf{s})-u_j^{\delta}(\mathbf{s})=\varphi^{\delta}(s_i)-\varphi^{\delta}(s_j)$. Let $\delta_1 \leq \delta_2$. By the generalized Cauchy mean value Theorem, we have that there exists $\xi \in [s_i,s_j]$ such that 
    \begin{equation*}
        \vert \varphi^{\delta_2}(s_i)-\varphi^{\delta_2}(s_j)\vert \left\vert \frac{\partial \varphi^{\delta_1}(\xi)}{\partial s}\right\vert =  \vert \varphi^{\delta_1}(s_i)-\varphi^{\delta_1}(s_j) \vert \left\vert \frac{\partial \varphi^{\delta_2}(\xi)}{\partial s}\right\vert.
    \end{equation*}
    Hence if $\vert \frac{\partial \varphi^{\delta_2}}{\partial s}\vert/\vert \frac{\partial \varphi^{\delta_1}}{\partial s}\vert \leq 1$ then we have lower $WEV$ for the $\delta_2$ mixed auction. We have the following chain of equivalences:
    \begin{align*}
        &\left\vert \frac{\partial \varphi^{\delta_2}(s)}{\partial s}\right\vert \leq \left\vert \frac{\partial \varphi^{\delta_2}(s)}{\partial s}\right\vert, \forall s \in \osupport \\
        \Longleftrightarrow \quad&   \left\vert (1-c)-\delta_2 \frac{\partial \bid[\delta_2](s)}{\partial s} \right\vert \leq \left\vert (1-c)-\delta_1 \frac{\partial \bid[\delta_1](s)}{\partial s} \right \vert, \forall s \in \osupport \\
        \Longleftrightarrow \quad &  \delta_2 \frac{\partial \bid[\delta_2](s)}{\partial s} -(1-c) \leq (1-c)-  \delta_1 \frac{\partial \bid[\delta_1](s)}{\partial s}, \forall s \in \osupport \\
        \Longleftrightarrow \quad &  \delta_1 \frac{\partial \bid[\delta_1](s)}{\partial s}+ \delta_2 \frac{\partial \bid[\delta_2](s)}{\partial s} \leq 2(1-c) , \forall s \in \osupport, 
    \end{align*}
    where the third equations comes from the monotonicity of $\delta \frac{\partial \bid}{\partial s}$ in $\delta$ from \cref{lem:equilibrium-bid-monotone}.
\end{proof}

\subsection{Proving the bound on bid function slopes}\label{app:sec:bid-function-slopes}

\begin{proof2}[Proof of \cref{lem:value-function-bounded}]\phantomsection\label{proof:lem:value-function-bounded}
We first rewrite $\tilde{v}(x,y)$ for $c=1$ in terms of all the order-statistics of $s_{-i}$.
\begin{align*}
    \tilde{v}(x,y)&=\E[v(s_i,s_{-i}) \mid s_i=x,Y_k=y]\\
    &=\E[\frac{1}{n} \sum_{j\in\bidders} s_j \mid s_i=x,Y_k=y]\\
    &=\frac{x}{n}+\E[\frac{1}{n} \sum_{\substack{j\in\bidders,\\j \neq i}} s_j \mid s_i=x,Y_k=y] \\
    &=\frac{x}{n}+\E[\frac{1}{n} \sum_{j\in[n-1]} Y_j \mid s_i=x,Y_k=y] && \text{(Ordering the signals)} \\
    &=\frac{x}{n}+\frac{y}{n}+\E[\frac{1}{n} \sum_{\substack{j\in[n-1],\\j\neq k}} Y_j \mid s_i=x,Y_k=y] \\ 
    &=\frac{x}{n}+\frac{y}{n}+\frac{1}{n}\sum_{j=1}^{k-1} \E[  Y_j \mid s_i=x,Y_k=y]+\frac{1}{n}\sum_{j=k+1}^{n-1} \E[  Y_j \mid s_i=x,Y_k=y]
\end{align*}

Note that the previous decomposition is similar to the equilibrium bid in an English auction given that $k$ bidders have dropped out in \citet{Goeree-2003}. However, we offer a careful derivation in the multi-unit setting of our model. We now use Theorem $2.4.1$ and Theorem $2.4.2$ from \citet{Arnold-2008} on the conditional distribution of order statistics. They state that, for $j<k$, the distribution of $Y_j$ given $Y_k=y$ is the same as the distribution of the $j$-th order statistic of $k-1$ independent samples of the original distribution left-truncated at $y$, and we denote $Z^l_j$ a random variable drawn according to this distribution. Hence, for $j<k$, $\E[Y_j \mid Y_k=y]=\E[Z^l_j]$. Similarly for $j>k$ we have that the distribution of $Y_j$ given $Y_k=y$ is the same as the distribution of the $j-k$-th order statistic of $n-k-1$ independent samples of the original distribution right-truncated at $y$, and we denote by $Z^r_j$ a random variable drawn according to this distribution. Hence, for $j>k$, 
$\E[Y_j \mid Y_k=y]=\E[Z^r_j]$. 
Notice that summing all order statistics drawn from some samples recovers exactly the sum of original samples. Thus we obtain
\begin{equation*}
    \sum_{j=1}^{k-1} \E[  Y_j \mid s_i=x,Y_k=y]= \sum_{j=1}^{k-1} \E[Z^l_j] = \E[\sum_{j=1}^{k-1}  Z^l_j] = \E[ \sum_{j=1}^{k-1} s_j \mid \forall j \in [k-1], s_j \geq y ] =\sum_{j=1}^{k-1} \E[s_j \mid s_j \geq y].
\end{equation*}
The same can be done for the $Z^r_j$. Finally, the $s_j$ are iid and thus have identical conditional expectations. We obtain
\begin{align}
    \tilde{v}(x,y) &= \frac{x}{n}+\frac{y}{n}+\frac{n-k-1}{n}\E[s_j \mid s_j \leq y]+\frac{k-1}{n}\E[s_j \mid s_j \geq y]   \\
    &= \frac{x}{n}+\frac{y}{n}+\frac{n-k-1}{n} \frac{\int_0^{y} tf(t)\diff t}{F(y)}+\frac{k-1}{n}\frac{\int_y^{\vbar} tf(t)\diff t}{1-F(y)}, \label{equ:alternative-v(xy)}
\end{align}
which readily yields a formula for $V(s)=\tilde{v}(s,s)$. Clearly, the above function is well defined and differentiable on the open support of $F$.

We now examine the derivative of $V(s)$ and prove that $V'(s)\leq 1$. First, we consider the derivatives of the two ratios with an integral in the numerator in \cref{equ:alternative-v(xy)}. First, by integration by parts, we have
\begin{equation*}
    \frac{\int_0^s tf(t)\diff t}{F(s)}= \frac{\left[ tF(t)\right]_0^s - \int_0^s F(t)\diff t}{F(s)}=s-\frac{\int_0^s F(t)\diff t}{F(s)},
\end{equation*}
and using that for positive random variables $\int_0^{\vbar} tf(t)\diff t=\int_0^{\vbar}(1-F(t))\diff t=\E[s_i]<\infty$, which guarantees convergence of the integral, we have that
\begin{align*}
    \frac{\int_s^{\vbar} tf(t)\diff t}{1-F(s)}=\frac{\E[s_i]-\int_0^s tf(t)\diff t}{1-F(s)}&=\frac{\int_0^{\vbar}(1-F(t))\diff t-sF(s) + \int_0^s F(t)\diff t}{1-F(s)}\\
    &=\frac{\int_0^{\vbar}(1-F(t))\diff t+s(1-F(s)) -s + \int_0^s F(t)\diff t}{1-F(s)}\\
    &=s+\frac{\int_s^{\vbar}(1-F(t))\diff t}{1-F(s)}
\end{align*}
Now, taking derivatives, we have
\begin{align*}
    \pd{}{s} \frac{\int_0^s tf(t)\diff t}{F(s)}=1-\frac{F(s)^2 - f(s)\int_0^s F(t)\diff t}{F(s)^2}=\frac{f(s)\int_0^s F(t)\diff t}{F(s)^2}.
\end{align*}
By a similar argument as in the proof of \cref{lem:private-derivative-bounded}, using log-concavity of $f$, the above derivative is bounded by $1$. Taking the derivative of the second ratio, we have
\begin{equation}\label{equ:second-ratio}
    \pd{}{s} \left( s+\frac{\int_s^{\vbar}(1-F(t))\diff t}{1-F(s)} \right)= 1+ \frac{-(1-F(s))^2 + f(s)\int_s^{\vbar}(1-F(t))\diff t }{(1-F(s))^2}=\frac{f(s)\int_s^{\vbar}(1-F(t))\diff t }{(1-F(s))^2}.
\end{equation}
To derivative of $\log(\int_s^{\vbar}(1-F(t))\diff t)$:
\begin{equation}\label{equ:second-derivative-second-ratio}
    \frac{\partial^2}{(\partial s)^2} \log(\int_s^{\vbar}(1-F(t))\diff t) = \pd{}{s} \frac{-(1-F(s))}{\int_s^{\vbar}(1-F(t))\diff t} = \frac{f(s)\int_s^{\vbar}(1-F(t))\diff t-(1-F(s))^2}{\left(\int_s^{\vbar}(1-F(t))\diff t\right)^2}.
\end{equation}
\cref{equ:second-derivative-second-ratio} is negative iff $f(s)\int_s^{\vbar}(1-F(t))\diff t/(1-F(s))^2 \leq 1$. This means that the log-concavity of $\int_s^{\vbar}(1-F(t))\diff t$ is equivalent to \cref{equ:second-ratio} being smaller than $1$. As the log-concavity of $\int_s^{\vbar}(1-F(t))\diff t$ follows from the log-concavity of $f$ and $(1-F)$ \citep[Theorem 3]{Bagnoli-2005}, $f(s)\int_s^{\vbar}(1-F(t))\diff t/(1-F(s))^2 \leq 1$ is implied. 
Finally, using the above derivatives it is clear that $V'(s)>0$, and
\begin{equation*}
    V'(s) \leq \frac{2}{n} + \frac{n-k-1}{n}\cdot 1 + \frac{k-1}{n} \cdot 1 = 1.
\end{equation*}
\end{proof2}

\bigbreak

\begin{proof2}[Proof of \cref{lem:private-derivative-bounded}]\phantomsection\label{proof:lem:private-derivative-bounded}
    Let us compute the second derivative of the logarithm of $\int^{s}_0 G^{\frac{1}{\delta}}(y)\diff y $:
    \begin{align*}
        \frac{\partial^2 \log\left( \int^{s}_0 G^{\frac{1}{\delta}}(y)\diff y \right)}{(\partial s)^2 } &= \frac{\partial}{\partial s} \left( \frac{ G^{\frac{1}{\delta}}(s)}{\int^{s}_0 G^{\frac{1}{\delta}}(y)\diff y } \right)\\
        &=\frac{ \frac{1}{\delta} g(s) G^{\frac{1}{\delta}-1}(s) \int^{s}_0 G^{\frac{1}{\delta}}(y)\diff y - G^{\frac{2}{\delta}}(s)}{\left(\int^{s}_0 G^{\frac{1}{\delta}}(y)\diff y \right)^2} \\
        &= \frac{G^{\frac{1}{\delta}-1}(s)}{\left(\int^{s}_0 G^{\frac{1}{\delta}}(y)\diff y \right)^2}\left( \frac{1}{\delta} g(s) \int^{s}_0 G^{\frac{1}{\delta}}(y)\diff y - G^{\frac{1}{\delta} +1}(s) \right). 
    \end{align*}
    Notice that the left-hand fraction is always positive. Hence log-concavity of $\int^{s}_0 G^{\frac{1}{\delta}}(y)\diff y $ is equivalent to $\frac{1}{\delta} g(s) \int^{s}_0 G^{\frac{1}{\delta}}(y)\diff y - G^{\frac{1}{\delta} +1}(s)$ being negative. The latter is equivalent to
    \begin{equation*}
        1 \geq \frac{g(s) \int^{s}_0 G^{\frac{1}{\delta}}(y)\diff y}{\delta G^{\frac{1}{\delta} +1}(s) } = \frac{\partial \bid(s)}{\partial s}.\label{equ:derivative-of-beta}
    \end{equation*}
\end{proof2}

\bigbreak

\begin{proof2}[Proof of \cref{proposition:better-bounds}]\phantomsection\label{proof:prop:better-bounds}
    While $\sup_{\support} \pd{\bid}{s}$ can be difficult to compute analytically even for simple distributions, it is sometimes possible to compute $\sup_{\support} V'(s)$. For the uniform distribution, we have $\sup_{\support} V'(s)=1-c\frac{n-2}{2n}$. Thus, using the same argument as in the proof of \cref{theorem:bound-on-optimal-delta}, it follows that $\delta^*(c) \geq \frac{2n(1-c)}{2n-c(n-2)} \rightarrow_{n\rightarrow \infty} \frac{(1-c)}{1-c/2}$. For the exponential distribution, we have $\sup_{\support} V'(s)=1-c(\frac{1}{2}-\frac{k+1}{2n})$, and thus $\delta^*(c)\geq \frac{2n(1-c)}{2n-c(n-(k+1))} \rightarrow_{n\rightarrow \infty} \frac{(1-c)}{1-c/2}$.
\end{proof2}

\bigskip

\begin{proof2}[Proof of \cref{lem:intG-log-concave}]\phantomsection\label{proof:lem:intG-log-concave}
    To prove \cref{lem:intG-log-concave}, we will use properties of log-concave distributions from \citet{Bagnoli-2005}. Namely their Theorems $1$ and $3$ state together that log-concavity of a density $f$ implies log-concavity of the corresponding cdf $F$ and of the complementary cdf $1-F$, and that log-concavity of $F$ or $1-F$ imply log-concavity of respectively $\int_0^s F$ or $\int_s^{\vbar}F$, where $\vbar$ is the upper limit of the support of $f$ (either a constant or $+\infty$). Additionally, we also have that the product of two log-concave functions is log-concave also.
    Using the above properties, we have that $F$ and $1-F$ are log-concave.

    Moreover, alternative expression for the order statistics are given, e.g.,~in \citet{Fisz-1965}.
    \begin{align*} 
        G_{m}^{n}(s) =
        \frac{n!}{(n-m)!(m-1)!}\int_0^{F(s)} t^{n-m} (1-t)^{m-1} d t
    \end{align*}
    and
    \begin{align} \label{eq:order_stat_density}
        g_{m}^{n}(s) = 
        \frac{n!}{(n-m)!(m-1)!} F(s)^{n-m} (1-F(s))^{m-1}f(s).
    \end{align}
    Thus, the order statistics density $g$, given by \cref{eq:order_stat_density}, is a product of $F$, $1-F$, and $f$, and $g$ as well as the corresponding cdf $G$ are also log-concave. 
    Furthermore, $G^{\frac{1}{\delta}}$ is log-concave because $\log(G^{\frac{1}{\delta}})=\delta \log(G)$. Finally, we remark that $G^{\frac{1}{\delta}}$ is right-continuous non-decreasing by composition with $x\mapsto x^{\frac{1}{\delta}}$, which is continuous non-decreasing, and $G^{\frac{1}{\delta}}(0)=0$, as well as $G^{\frac{1}{\delta}}(\vbar)=1$ (if $\vbar=\infty$, the equality is understood as a limit). Therefore $G^{\frac{1}{\delta}}$ is a cdf, and applying one last time \citet{Bagnoli-2005}, we obtain that $\int_0^s G^{\frac{1}{\delta}}$ is log-concave. 
\end{proof2}

\subsection{Numerical experiments}\label{app:sec:numerical-experiments}

\begin{lemma}\label{lem:alternative-variance-2}
    Suppose an auction is a \emph{winners-pay auction}. Then we can write
    $E_\ssb[u_1\ | \text{$1$ wins}] = \frac{n}{k} E_\ssb[u_1] $, 
    $E_\ssb[u_1^2 | \text{$1$ wins}] = \frac{n}{k} E_\ssb[u_1^2]$, and 
    $E_\ssb[u_1u_2 | \text{$1$ and $2$ win}] = \frac{n(n-1)}{k(k-1)} E_\ssb[u_1 u_2]$.
\end{lemma}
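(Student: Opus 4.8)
The plan is to exploit two elementary facts: in a winners-pay auction a losing bidder receives and pays nothing, so their ex-post utility vanishes identically; and the signal vector $\ssb$ is exchangeable, so by symmetry every bidder (resp.\ every ordered pair) is equally likely to lie in the set of $k$ winners. Everything then reduces to computing two winning probabilities.

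\textbf{Step 1 (losing utilities are zero).} Recall $u_i(\ssb) = v(s_i,\smi)\pi_i(\ssb) - p_i(\ssb)$, and that \emph{winners-pay} forces $\pi_i(\ssb) = p_i(\ssb) = 0$ whenever $i$ is not among the $k$ highest bids. Hence, almost surely, $u_i(\ssb) = u_i(\ssb)\ind{i\text{ wins}}$, and likewise $u_i(\ssb)^2 = u_i(\ssb)^2\ind{i\text{ wins}}$ and $u_1(\ssb)u_2(\ssb) = u_1(\ssb)u_2(\ssb)\ind{1\text{ and }2\text{ win}}$.

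\textbf{Step 2 (winning probabilities).} Since $F$ is absolutely continuous, ties have probability zero, so exactly $k$ bidders win almost surely. As the signals are i.i.d., $\ssb$ is exchangeable, so $P_\ssb[i\text{ wins}]$ is the same for all $i$; summing over $i\in\bidders$ gives $n\cdot P_\ssb[1\text{ wins}] = \E_\ssb[\#\text{winners}] = k$, i.e.\ $P_\ssb[1\text{ wins}] = k/n$. Similarly $P_\ssb[\{i,j\}\subseteq\text{winners}]$ is the same for all pairs $i\neq j$, and summing over the $\binom{n}{2}$ pairs gives $\binom{n}{2}P_\ssb[\{1,2\}\subseteq\text{winners}] = \binom{k}{2}$, hence $P_\ssb[1\text{ and }2\text{ win}] = \tfrac{k(k-1)}{n(n-1)}$.

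\textbf{Step 3 (conclude).} Combining Steps 1 and 2 with the definition of conditional expectation, $\E_\ssb[u_1] = \E_\ssb[u_1\mid 1\text{ wins}]\,P_\ssb[1\text{ wins}] = \tfrac{k}{n}\E_\ssb[u_1\mid 1\text{ wins}]$, and rearranging yields the first identity; the same computation with $u_1^2$ and with $u_1u_2$ (using the pair probability) gives the other two. The proof is essentially bookkeeping; the only point needing a moment's care is that exactly $k$ items are sold almost surely — so that the number of winners has mean exactly $k$ — which uses atomlessness of $F$ and the absence of a reserve price in this part of the analysis.
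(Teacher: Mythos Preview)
Your proof is correct and follows essentially the same approach as the paper's: both use that winners-pay forces $u_i=0$ on the losing event, so the unconditional expectation equals the conditional one times the winning probability, and both identify $P[\text{1 wins}]=k/n$ and $P[\text{1 and 2 win}]=k(k-1)/(n(n-1))$. Your version is simply more explicit, spelling out the exchangeability argument for the winning probabilities and the atomlessness needed for exactly $k$ winners, whereas the paper takes these as given in a two-line computation.
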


\begin{proof}
    Observe that we have
    \begin{align}
    \mathbb E_s[u_1^2\mid\text{$1$ and $2$ win}] &= \mathbb E_s[u_1^2\mid\text{$1$ wins}] = \frac{\mathbb E[u_1^2]}{\mathbb P[\text{$1$ wins}]} = \frac{n}{k}\cdot\mathbb E[u_1^2]\\
    \mathbb E_s[u_1u_2\mid\text{$1$ and $2$ win}] &= \frac{\mathbb E[u_1u_2]}{\mathbb P[\text{$1$ and $2$ win}]} = \frac{n(n-1)}{k(k-1)}\cdot\mathbb E[u_1u_2]
    \end{align}
\end{proof}

\subsection{Discussion}\label{app:sec:discussion}

\begin{proof2}[Proof of \cref{prop:ex-ante-variance}]\phantomsection\label{proof:ex-ante-variance}
    We define the probability that $i$ wins $q_i(s_i) := \mathbb P_{\smi}[i\text{ wins}]$. Recall that $b^D(s_i)$ denotes the equilibrium bid in the pay-as-bid auction. Consider any standard auction, characterised by a payment rule $(p_1(s), \dots, p_n(s))$. Revenue equivalence implies that
    \begin{align}
        q_i(s_i)\cdot b^D(s_i) = \mathbb E_{\smi}[b^D(s_{i})\cdot \ind{i\text{ wins}}] = 
        \mathbb E_{\smi}[p_i(s)].
    \end{align}
    In particular, note that if $p_i$ is chosen to be the uniform pricing rule, this formula can be used to compute $b^D(s_i)$. Now define the ex-post surplus
    $u_i(s_i, \smi) := v(s_i)\cdot \ind{i\text{ wins}}- p_i(s)$.
    We write
    \begin{align}
        u_i(s_i,\smi)
        &= \underbrace{\ind{i\text{ wins}}\cdot (v(s_i)-b^D(s_i))}_{u_i^D(s)}
        + \underbrace{\ind{i\text{ wins}}\cdot b^D(s_i)-p_i(s)}_{\delta(s)}.
    \end{align}
    Now, observe that by revenue equivalence we have $\mathbb E_{\smi}[\delta(s_i, \smi)] = 0$ for all $s_i$. We write
    \begin{align}
        \mathbb E_{\smi}[u_i(s_i,\smi)]^2
        &= \mathbb E_{\smi}[u_i^D(s_i, \smi)]^2\\
        \mathbb E_{\smi}[u_i(s_i,\smi)^2]
        &= \mathbb E_{\smi}[u_i^D(s_i,\smi)^2] + 2\underbrace{\mathbb E_{\smi}[u_i^D(s_i, \smi) \cdot\delta(s_i, \smi)]}_{\geq 0} + \underbrace{\mathbb E_{\smi}[\delta(s_i,\smi)^2]}_{\geq 0}
    \end{align}
    To show that the extra terms are non-negative, notice that $\delta(s_i, \smi)^2 \geq 0$, and that 
    \begin{align}
        \mathbb E_{\smi}[u_i^D(s_i, \smi) \cdot\delta(s_i, \smi)]
        &= (\underbrace{v(s_i)-b^D(s_i)}_{\geq 0}) \cdot (\underbrace{q_i(s_i)\cdot b^D(s_i) - \mathbb E_{\smi}[\ind{\text{$i$ wins}}\cdot p_i(s)]}_{\geq \mathbb E[\delta(s)] = 0})
    \end{align}
    Therefore, putting everything together, we obtain
    \begin{align}
        \mathrm{Var}_{\smi}[u_i(s_i,\smi)]
        &= 
        \mathbb E_{\smi}[u_i(s_i,\smi)^2]
        -
        \mathbb E_{\smi}[u_i(s_i,\smi)]^2\\
        &\geq 
        \mathbb E_{\smi}[u_i^D(s_i,\smi)^2]
        -
        \mathbb E_{\smi}[u_i^D(s_i,\smi)]^2\\
        &=
        \mathrm{Var}_{\smi}[u_i^D(s_i,\smi)]
    \end{align}
    Finally, observe that an auction which minimize the interim variance also minimize the ex-ante variance.
    Denoting by $u_i$ the utility of a bidder in the pay-as-bid auction, the law of total variance states
    \begin{equation}
        \mathrm{Var}_s[u_i]=\E_{s_i}[\mathrm{Var}_{\smi}[u_i(s_i,\smi)]]+\mathrm{Var}_{s_{i}}[\E_{\smi}[u_i]].
    \end{equation}
    By the revenue equivalence theorem, we know that $\E_{\smi}[u_i]$ is the same for all standard auctions, hence $\mathrm{Var}_{s_{i}}[\E_{\smi}[u_i]]$ is also the same for all standard auctions (it only depends on the distribution of the signals). 
    
    The interim variance $\mathrm{Var}_{\smi}[u_i(s_i,\smi)]$ is minimal point-wise (in $s_i$) for all standard auctions, hence is also minimal in expectation. Therefore, the ex-ante variance is minimal in the pay-as-bid auction among standard auctions.
\end{proof2}\\

\begin{proof2}[Proof of \cref{prop:MEU-reserve-price}]\phantomsection\label{proof:reserve_price}
    We first derive the equilibrium bid for the first-rejected-bid uniform auction with common values and reserve price $r>0$. Fix a signal $s$. Let $\beta$ be an increasing symmetric equilibrium, and let $s_r = \inf\{ s \geq 0 \mid \beta(s_r) \geq r \}$ be the threshold signal for the bid to exceed a given reserve price $r$. For $z \geq s_r$, we consider $U(s_i,z)$, the expected payoff of bidding $\beta(z)$ with signal $s_i$:
    \begin{align*}
        U(s_i,z) & =\int_0^z \widetilde V(s_i,y) g(y) \diff y - \int_0^{s_r} r g(y) \diff y - \int_{s_r}^z \beta(y) g(y)\diff y \\
        & =\int_0^z \widetilde V(s_i,y) g(y) \diff y- rG(s_r) - \int_{s_r}^z \beta(y) g(y) \diff y.
    \end{align*}
    If $z<s_r$ then the bid is below the reserve price, no item is won, and $U(s_i,z)=0$. If the payoff is maximized for $z \geq s_r$, then, by solving the first order condition, a bid of $\widetilde V(s_i,s_i)=V(s_i)$ is optimal. Hence, bidding $V(s_i)$ is preferred to bidding zero if the expected payoff is greater than zero. Because $V(s_i)$ is increasing and continuous, these two payoffs are equal for $s_i=s_r$ by definition: $s_r$ corresponds to the threshold signal beyond which a positive bid of $V(s_i)$ is preferred to a zero profit. The equation
    \begin{equation} \label{eq:sr_charac}
       U(s_r,s_r)= \int_0^{s_r} \widetilde V(s_r,y) g(y) - rG(s_r)=0,
    \end{equation}
    implicitly characterizes $s_r$. The equilibrium bid is $\beta_r^{\delta=0}=V(s_i)$ for $s_i \geq s_r$ and $\beta_r^{\delta=0}=0$ otherwise.\\
    
    Using revenue equivalence, we derive the equilibrium bid in the pay-as-bid auction. We have that, for $s_i \geq s_r$,
    \begin{equation}
        \beta_r^{\delta=1}(s_i)=\int_0^{s_r} \frac{r g(y)}{G(s_i)}\diff y + \int_{s_r}^{s_i} \frac{V(y) g(y)}{G(s_i)}\diff y=V(s_i)+(r-V(s_r))\frac{G(s_r)}{G(s_i)}- \int_{s_r}^{s_i} \frac{ V'(y) G(y)}{G(s_i)}\diff y.
    \end{equation}
    Taking the derivative yields
    \begin{align*}
        \frac{\partial \beta_{r}^{\delta=1}(s_i)}{\partial s_i}&= \frac{g(s_i)}{G^2(s_i)}\left((V(s_r)-r)G(s_r)+\int_{s_r}^{s_i} V'(y)G(y)\diff y \right) \\
        &=\frac{g(s_i)}{G^2(s_i)} \left( \int_0^{s_r} V(y)g(y)\diff y - r G(S_r) + \int_0^{s_i} V'(y)G(y)\diff y \right)\\
        &=\frac{g(s_i)}{G^2(s_i)} \left( \int_0^{s_r} V(y)g(y)\diff y - r G(S_r) + \int_0^{s_i} V'(y)G(y)\diff y \right)\\
        &=\frac{g(s_i)}{G^2(s_i)} \left( \int_0^{s_r} (\widetilde V(y,y)- \widetilde V(s_r,y))g(y) + \int_0^{s_i} V'(y)G(y)\diff y \right) \\
        &\leq \frac{g(s_i)}{G^2(s_i)} \int_0^{s_i} V'(y)G(y)\diff y\\
        & =\frac{\partial \beta_{r=0}^{\delta=1}(s_i)}{\partial s}.
    \end{align*}
    We use \cref{eq:sr_charac} for the second-to-last equality, and the fact that $\widetilde V(y,y) \leq \widetilde V(s_r,y)$ for $y \leq s_r$, by monotonicity of $\widetilde V$, for the inequality. 
\end{proof2}

\end{document}